\providecommand{\tabularnewline}{\\}
\newtheorem{theorem}{Theorem}[section]
\newtheorem{definition}[theorem]{Definition}
\newtheorem{remark}[theorem]{Remark}
\newtheorem{lemma}[theorem]{Lemma}
\newtheorem{proposition}[theorem]{Proposition}
\newtheorem{assumption}[theorem]{Assumption}
\newtheorem{example}[theorem]{Example}
\newtheorem{aplemma}{Lemma}[section]
\mathchardef\mhyphen="2D
\begin{document}
\title[Characterization of Failures in Multi-Hop Networked Control]{A Probabilistic Characterization of Random and Malicious Communication Failures in Multi-Hop Networked Control}

\thanks{This work was supported  in part by the JST CREST Grant  No.\ JPMJCR15K3 and by JSPS under Grant-in-Aid for Scientific Research Grant No.~15H04020.}

\author{A. Cetinkaya}
\email{ahmet@sc.dis.titech.ac.jp}

\author{H. Ishii}
\email{ishii@c.titech.ac.jp}
\address[A. Cetinkaya, H. Ishii]{Department of Computer Science, Tokyo Institute of Technology, Yokohama, 226-8502, Japan.}

\author{T. Hayakawa}
\address[T. Hayakawa]{Department of Systems and Control Engineering, Tokyo Institute of Technology, Tokyo 152-8552, Japan.}
\email{hayakawa@sc.e.titech.ac.jp}

\begin{abstract} The control problem of a linear discrete-time dynamical
system over a multi-hop network is explored. The network is assumed
to be subject to packet drops by malicious and nonmalicious nodes
as well as random and malicious data corruption issues. We utilize
asymptotic tail-probability bounds of transmission failure ratios
to characterize the links and paths of a network as well as the network
itself. This probabilistic characterization allows us to take into
account multiple failures that depend on each other, and coordinated
malicious attacks on the network. We obtain a sufficient condition
for the stability of the networked control system by utilizing our
probabilistic approach. We then demonstrate the efficacy of our results
in different scenarios concerning transmission failures on a multi-hop
network. \end{abstract} 

\keywords{Networked Control, Multi-hop Networks, Malicious Attacks,
Random Failures. }

\maketitle

\section{Introduction}

Networked control systems incorporate communication networks to facilitate
the exchange of measurement and control data between the plant and
the controller \cite{hespanha2007}. Recently, multi-hop networks
have been utilized in networked control operations \cite{gupta2009data,alur2011compositional,smarra2012optimal,d2013faultTAC,d2016resilient,smarra2017}.
A multi-hop network such as a wireless ad hoc network consists of
a number of nodes that are connected with a number of communication
links. Ensuring orderly operation of a multi-hop networked control
system can be challenging, as packets may sometimes fail to be transmitted
at different parts of the network due to various reasons. 

One of the reasons for transmission failures in a multi-hop network
is channel noise in the communication links, which may corrupt the
contents of state and control input data packets. The occurrence of
data corruption in a communication network may be modeled using random
processes \cite{khayam2003markov}. In addition to channel noise,
network congestion may also cause packet transmission failures. Routers
may drop some packets to mitigate congestion \cite{floyd1993random}. 

Furthermore, it has become apparent that malicious attacks may also
hamper transmissions in a networked control system. For instance,
jamming attacks by an adversary may interfere with the communication
on links and effectively prevent transmission of packets. This issue
was investigated in several works from the viewpoints of wireless
communications \cite{xu2005feasibility,pelechrinis2011}, as well
as control and game theory \cite{amin2009,de2015inputtran,liu2014stochastic,li2015jamming,IFACde2016networked}.
Transmissions of state and control input information between the plant
and the controller may also fail due to malicious activities of routers.
Malicious routers may intentionally drop some of the packets coming
from and/or headed to certain nodes of the network \cite{d2016resilient, awerbuch2008odsbr, mahmoud2014security}.
The detection of such routing attacks can be challenging especially
when the malicious nodes act normal for certain periods of time (see
grayhole attacks in \cite{jhaveri2012attacks}). Furthermore, networks
may face both malicious routing and random packet losses due to link
errors \cite{mizrak2009detecting,shu2015privacy}. Understanding the
effects of malicious attacks on networks is important from the viewpoint
of cyber security of networked control systems \cite{cardenas2008research, wholejournal2015}. 

Our goal in this paper is to explore the effects of random and malicious
transmission failures in a general multi-hop communication network
and develop a network characterization to be used in the analysis
of networked control systems. The key problem here is to characterize
the failures for the overall multi-hop network in a nonconservative
way while still taking into account mutually-dependent packet failures
and coordinated malicious attacks on the network. 

In the literature, researchers have proposed different characterizations
of packet failures in a networked control system. Specifically, \cite{gupta2009data}
explored control over a network with multiple links that introduce
random packet drops. The results obtained in \cite{gupta2009data}
utilize packet drop probabilities on the edges that constitute cut-sets
of the network graph. This approach is also utilized in network characterization
by \cite{dana2006capacity}. Furthermore, \cite{alur2011compositional}
discussed almost sure networked control system stability, and  \cite{quevedo2013state}
studied networked state estimation problem. Recently, \cite{smarra2017}
investigated mean square stability and robustness under delays and
packet losses through a Markov jump linear system framework. Prior
to these works, random packet losses have been studied for the simpler
single-hop case. There, Bernoulli processes as well as time-homogeneous
and time-inhomogeneous Markov chains are utilized for modeling purposes.
In particular, \cite{sinopoli2004kalman} investigated the single-hop
networked Kalman filtering problem under Bernoulli-type packet losses.
Furthermore, for single-hop networked stabilization problems, packet
losses have been modeled with Bernoulli processes in \cite{ishii2009,Lemmon:2011:ASS:1967701.1967744}
and finite-state time-homogeneous Markov chains in \cite{gupta2009,okano2014}.
In \cite{ahmetcdc2015,cetinkaya2015output,cetinkaya-tac}, where both
random packet losses and malicious attacks in a single-hop networked
control problem is considered, we modeled random packet losses through
time-inhomogeneous Markov chains. In addition, stability as well as
robust and optimal control problems concerned with general time-inhomogeneous
Markov jump linear systems with polytopic uncertainties in transition
probability matrices are investigated in \cite{lun2016cdc,lun2017ifac,lun2017cdc},
where the obtained results are applicable to multi-hop networked control
systems with random packet losses.

In this paper, we consider a network with not only random failures
but also malicious activities on nodes or links. As a result, failures
may not always be modeled by Markov processes and packet losses may
not always have well-defined probabilities. In \cite{d2016resilient},
researchers explored a multi-hop network model with malicious nodes.
There, the fault detection and isolation problem is explored for the
case where the nodes induce delay in transmissions. Here, we consider
a different setup and a different modeling approach. In particular,
we do not investigate a detection problem, and we do not model the
effect of delays. In this paper, we would like to characterize the
overall packet exchange failures on a network between the plant and
the controller by using the properties of the paths of the network
and the communication links on those paths. We also note that our
network characterization provides a high level model and it is tailored
to be utilized for stability analysis in networked control as in \cite{gupta2009data}.
Instead of specifying underlying physical channel models and routing
protocols, we take a probabilistic approach to characterize transmission
failure ratios on the links, paths of a network, as well as the network
itself.

Our approach for modeling the overall packet failures in a network
is built upon tail-probability bounds for the binary-valued processes
that describe the occurrences of failures on the network. Specifically,
each link on a multi-hop network is described through an asymptotic
tail-probability bound of the transmission failure ratio of that link.
This tail-probability-based approach is different from the typical
random packet loss modeling approach of assigning probabilities to
failures. Our approach can capture failures that occur due to both
malicious and nonmalicious reasons. In fact, we utilized the tail-probability-based
approach in \cite{cetinkaya-tac} to study the combined effects of
malicious-attacks following the discrete-time version of the attack
model in \cite{de2015inputtran,IFACde2016networked} and random packet
losses modeled as Markov chains. Through this modeling approach, \cite{cetinkaya-tac}
provides a Lyapunov-based stability analysis, which is further enhanced
in \cite{cetinkayatactoappear2018} by using linear programming techniques.
Differently from \cite{cetinkaya-tac}, we show in this paper that
when tail-probability bounds for the links are available, then we
can obtain tail-probability bounds describing the overall failures
on individual paths of a network. Then those bounds are used for deriving
tail-probability bounds describing the overall failures on the network
itself. Using our proposed characterizations, we obtain a probabilistic
upper-bound for the average number of packet exchange failures between
the plant and the controller, which we use in almost sure stability
analysis of a discrete-time linear networked control system. 

In the multi-hop setting, the location of failures and whether multiple
failures depend on each other or not critically affect the quality
of communication and hence the stabilization performance of the controller.
Especially, the situation becomes more serious when the network is
targeted by a number of adversaries that launch coordinated attacks
on different locations in the communication network. Our tail-probability
bound approach can handle such worst-case situations in a unified
manner. In addition to our investigation on those situations, we also
explore the case where one or more paths/links are known to be associated
with random failures and the corresponding indicator processes are
mutually independent. For such cases, we show that tighter results
can be obtained. This is done by exploiting certain properties of
the hidden Markov models that characterize the random failures. 

The organization of the rest of the paper is as follows. In Section~\ref{sec:Control-over-Multi-hop},
we explain the networked control problem over a multi-hop network.
We then present our multi-hop network characterization in Sections~\ref{sec:Random-and-Malicious}
and \ref{Sec:FailuresOnPaths}. Specifically, in Section~\ref{sec:Random-and-Malicious},
we give a characterization of the overall transmission failures in
a multi-hop network based on the failures on individual paths of that
network. Furthermore, in Section~\ref{Sec:FailuresOnPaths}, we investigate
failures on the paths that occur due to data-corruption and packet-dropping
issues at nodes and communication links. We demonstrate the utility
of our results in Section~\ref{sec:Illustrative}, and conclude the
paper in Section~\ref{sec:Conclusion}.

We note that the conference version of this paper appeared in \cite{ahmetnecsys2016}.
In this paper, we provide more detailed discussions throughout the
paper. Furthermore, we present new results in Sections~\ref{sec:Random-and-Malicious},
\ref{Sec:FailuresOnPaths} and new examples in Section~\ref{sec:Illustrative}. 

\begin{table}[t]
\caption{Notation concerning the links and the paths of a graph, binary-valued
failure indicator processes concerning those links and paths, and
the classes of failure indicator processes. }
\label{TableForNotation} 

\renewcommand{\arraystretch}{1.0} 

{\centering \fontsize{7}{11.52}\selectfont 

\begin{tabular}{cl}
\multirow{1}{*}{$\mathcal{P}_{i}$} & $i$th path on a graph representing a communication network\tabularnewline
$\mathcal{P}_{i,j}$ & $j$th communication link on path $\mathcal{P}_{i}$\tabularnewline
$l_{\mathcal{P}_{i}}$ & Binary-valued process indicating overall transmission failures on
path $\mathcal{P}_{i}$\tabularnewline
$l_{\mathcal{P}_{i}}^{\mathcal{P}_{i,j}}$ & Binary-valued process indicating transmission failures on $j$th link
of path $\mathcal{P}_{i}$\tabularnewline
$\Lambda_{\rho}$ & General class of failure indicator processes\tabularnewline
$\Gamma_{p_{0},p_{1}}$ & Class of indicator processes that characterize random failures\tabularnewline
$\Pi_{\kappa,w}$ & Class of failure indicator processes that characterize malicious attacks \tabularnewline
$\theta^{l}$ & Markov chain of a hidden Markov model with the output process $l\in\Gamma_{p_{0},p_{1}}$ \tabularnewline
$h^{l}$ & Output function of a hidden Markov model with the output process $l\in\Gamma_{p_{0},p_{1}}$ \tabularnewline
$f(\mathcal{P})$ & First link of path $\mathcal{P}$\tabularnewline
$\mathcal{R}(\mathcal{P})$  & Subpath obtained from path $\mathcal{P}$ by removing its first link\tabularnewline
\end{tabular}

} 
\end{table}

In this paper, we use $\mathbb{N}_{0}$ and $\mathbb{N}$ to denote
nonnegative and positive integers, respectively. The notation $\mathrm{\mathbb{P}}[\cdot]$
denotes the probability on a probability space $(\Omega,\mathcal{F},\mathbb{P})$
with filtration $\{\mathcal{F}_{t}\}_{t\in\mathbb{N}_{0}}$. For binary
numbers, the notation $\vee$ represents the or-operation; moreover,
$\wedge$ represents the and-operation. Furthermore, Table~\ref{TableForNotation}
provides notation concerning graphs, binary-valued processes, and
classes of such processes. 

\section{Control over Multi-hop Networks}

\label{sec:Control-over-Multi-hop}

In this section, we investigate the control of a linear plant over
multi-hop networks depicted in Figure~\ref{Flo:operation}. On these
networks, the plant and the controller exchange state measurement
and control input packets. The transmissions are not subject to delay;
however, there may be failures in packet exchange attempts between
the plant and the controller. 

We describe the linear dynamics of the plant by 
\begin{align}
x(t+1) & =Ax(t)+Bu(t),\quad x(0)=x_{0},\quad t\in\mathbb{N}_{0},\label{eq:system}
\end{align}
 where $A\in\mathbb{R}^{n\times n}$ and $B\in\mathbb{R}^{n\times m}$
respectively denote the state and input matrices; furthermore, $x(t)\in\mathbb{R}^{n}$
and $u(t)\in\mathbb{R}^{m}$ are the state and the control input vectors,
respectively. 

\begin{figure}
\centering  \includegraphics[width=0.7\columnwidth]{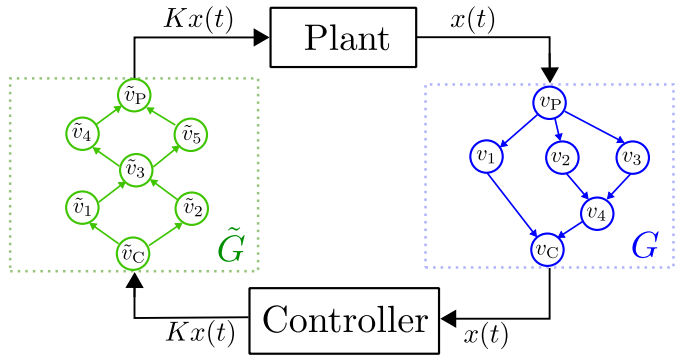}
\vskip -7pt

\caption{\fontsize{9}{9}\selectfont{Multi-hop networked control system}}
 \label{Flo:operation}
\end{figure}

The plant and the controller attempt to exchange state measurement
and control input packets at each time instant $t$. Packet exchanges
are attempted over multi-hop networks $G$ and $\tilde{G}$ as shown
in Figure~\ref{Flo:operation}. The detailed models of these networks
will be given in Section~\ref{subsec:MultihopModel}. We denote success
or failure states of packet exchange attempts by using the binary-valued
process $\{l(t)\in\{0,1\}\}_{t\in\mathbb{N}_{0}}$. In the case of
a successful packet exchange ($l(t)=0$), the plant transmits the
state measurement to the controller; the controller uses the received
state information together with a linear static feedback control law
to compute the control input, which is then sent back to the plant.
The transmitted control input is applied at the plant side. At certain
time instants, either the state packet or the control input packet
cannot be delivered due to network issues such as packet drops, jamming
attacks, and other communication errors. In such cases packet exchange
attempts fail ($l(t)=1$), and the control input at the plant side
is set to $0$, which is one of the common approaches in the literature
(see \cite{hespanha2007} and the references therein). 

Under this characterization, the control input $u(t)$ applied at
the plant side is given by 
\begin{align}
u(t) & \triangleq\left(1-l(t)\right)Kx(t),\quad t\in\mathbb{N}_{0},\label{eq:control-input}
\end{align}
where $K\in\mathbb{R}^{m\times n}$ represents the feedback gain. 

Although we consider a static state-feedback control setup here, the
techniques that we develop in this paper can also be used in conjunction
with other control approaches. In particular, the predictive control
approach of \cite{quevedo2011input} and the event-triggered output-feedback
control approach from our earlier work \cite{cetinkaya2015output}
can be studied within the context of multi-hop networked control by
using the network characterizations that we develop here. 

We assume that the information packets between the plant and the controller
propagate with no delay, although there may be transmission failures
due to:

1) packet drops by malicious nodes to prevent communication and/or
by nonmalicious nodes to avoid congestion; 

2) data corruption on communication links because of random channel
errors and/or malicious jamming attacks. 

We now introduce a class of processes that is useful in describing
packet failure indicators in the paper. 

\begin{definition} [$\Lambda_{\rho}$] \label{DefinitionLambda} Given
a scalar $\rho\in[0,1]$ we define the class of binary-valued processes
$\Lambda_{\rho}$ by 
\begin{align*}
\Lambda_{\rho}\triangleq\big\{ l\colon l(t)\in\{0,1\},t\in\mathbb{N}_{0};\sum_{k=1}^{\infty}\mathbb{P}[\sum_{t=0}^{k-1}l(t)>\rho k]<\infty\big\}.
\end{align*}

\end{definition}

By this definition, we have $\Lambda_{\rho_{1}}\subseteq\Lambda_{\rho_{2}}$
for $\rho_{1}\leq\rho_{2}$. Furthermore, any binary-valued process
$l$ satisfies $l\in\Lambda_{1}$. 

In \cite{cetinkaya-tac}, we explored a problem similar to the one
that we discuss in this paper. There, we considered a single direct
communication channel between the plant and the controller. To describe
the packet losses on this channel, we proposed a probabilistic characterization
that is based on the following assumption. 

\begin{assumption} \label{MainAssumption} For the packet exchange
failure indicator $l(\cdot)$, we have $l\in\Lambda_{\rho}$ with
$\rho\in[0,1]$. 

\end{assumption} 

This assumption allows us to characterize a range of scenarios in
a unified manner through the scalar $\rho\in[0,1]$. For instance,
the case where all packet exchange attempts fail ($l(t)=1$, $t\in\mathbb{N}_{0}$)
can be described by setting $\rho=1$. Moreover, in the case where
all packet exchange attempts are successful ($l(t)=0$, $t\in\mathbb{N}_{0}$),
then $l\in\Lambda_{0}$. In addition to these two extreme cases, as
we illustrate throughout the paper, Assumption~\ref{MainAssumption}
can also be used to describe random failures and malicious attacks. 

When the packet exchange failures in a networked control system satisfy
Assumption~\ref{MainAssumption}, then the scalar $\rho\in[0,1]$
also represents a bound on the asymptotic packet exchange failure
ratio. Specifically, it follows from Borel-Cantelli Lemma that $\limsup_{k\to\infty}\frac{1}{k}\sum_{t=0}^{k-1}l(t)\leq\rho,$
almost surely. When this inequality holds with a small $\rho$, it
means that the packet exchanges fail statistically rarely. We showed
in \cite{cetinkaya-tac} that the plant \eqref{eq:system} can be
stabilized over a network, if $\rho$ is sufficiently small. 

The stability analysis method developed in \cite{cetinkaya-tac} allows
us to obtain the following result, which presents sufficient stability
conditions for the closed-loop networked control system \eqref{eq:system},
\eqref{eq:control-input}.

\begin{theorem} \label{Stability-Theorem} Consider the dynamical
system \eqref{eq:system}, \eqref{eq:control-input}. Suppose Assumption~\ref{MainAssumption}
holds with scalar $\rho\in[0,1]$. If there exist a positive-definite
matrix $P\in\mathbb{R}^{n\times n}$ and scalars $\beta\in(0,1),$
$\varphi\in[1,\infty)$ such that 
\begin{align}
 & \left(A+BK\right)^{\mathrm{T}}P\left(A+BK\right)-\beta P\leq0,\label{eq:betacond}\\
 & A^{\mathrm{T}}PA-\varphi P\leq0,\label{eq:varphicond}\\
 & (1-\rho)\ln\beta+\rho\ln\varphi<0,\label{eq:betaandvarphicond}
\end{align}
then the zero solution $x(t)\equiv0$ of the closed-loop system \eqref{eq:system},
\eqref{eq:control-input} is asymptotically stable almost surely.
\end{theorem}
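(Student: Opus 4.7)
The plan is to carry out a Lyapunov-based almost sure stability argument using $V(x) = x^{\mathrm{T}}Px$ as the candidate Lyapunov function, and to connect the pathwise decay rate of $V$ to the tail-probability bound provided by Assumption~\ref{MainAssumption}. First I would split the one-step evolution of the closed-loop dynamics into the two mutually exclusive cases $l(t)=0$ and $l(t)=1$: in the former case $x(t+1)=(A+BK)x(t)$, and \eqref{eq:betacond} gives $V(x(t+1))\leq \beta V(x(t))$; in the latter case $x(t+1)=Ax(t)$, and \eqref{eq:varphicond} gives $V(x(t+1))\leq \varphi V(x(t))$. Combining these into a single bound yields $V(x(t+1))\leq \beta^{1-l(t)}\varphi^{l(t)}V(x(t))$.

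Next I would iterate this one-step bound from $t=0$ to $t=k-1$ to obtain
\begin{align*}
V(x(k))\leq \beta^{\,k-S(k)}\varphi^{\,S(k)}V(x(0)), \qquad S(k)\triangleq\sum_{t=0}^{k-1}l(t).
\end{align*}
Taking logarithms, dividing by $k$, and rearranging produces
\begin{align*}
\tfrac{1}{k}\ln V(x(k))\leq \ln\beta + \tfrac{S(k)}{k}(\ln\varphi-\ln\beta)+\tfrac{1}{k}\ln V(x(0)),
\end{align*}
where $\ln\varphi-\ln\beta\geq 0$ since $\varphi\geq 1 > \beta$.

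At this point I would invoke the Borel--Cantelli consequence of $l\in\Lambda_\rho$ that is already noted in the excerpt, namely $\limsup_{k\to\infty}\tfrac{S(k)}{k}\leq \rho$ almost surely. Passing to $\limsup$ on both sides and using the sign of $\ln\varphi-\ln\beta$ gives
\begin{align*}
\limsup_{k\to\infty}\tfrac{1}{k}\ln V(x(k))\leq (1-\rho)\ln\beta + \rho\ln\varphi < 0,
\end{align*}
where the strict inequality is exactly \eqref{eq:betaandvarphicond}. This shows that, on a probability-one event, $V(x(k))$ decays at least geometrically; combined with the positive definiteness of $P$, this yields $\lim_{k\to\infty}x(k)=0$ almost surely, as well as an almost sure uniform bound on $\|x(k)\|$ in terms of $\|x_0\|$ that establishes stability in the Lyapunov sense.

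The bulk of the argument is routine once the exponential bound is in hand, so the only delicate point is the transition from the summable tail-probability condition in Definition~\ref{DefinitionLambda} to the almost sure $\limsup$ statement on $S(k)/k$; this is a standard Borel--Cantelli computation (summability of $\mathbb{P}[S(k)>\rho k]$ implies the event $\{S(k)>\rho k\}$ occurs only finitely often a.s.), so the main conceptual step is simply verifying that the Lyapunov decomposition above reduces almost sure stability to exactly that tail statement. Since the theorem reuses the stability machinery developed in \cite{cetinkaya-tac}, I would cite that reference for the technical Borel--Cantelli step rather than reproduce it.
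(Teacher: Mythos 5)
Your proposal is correct and follows essentially the same route as the paper, which does not reprove Theorem~\ref{Stability-Theorem} here but obtains it from the Lyapunov-based analysis of \cite{cetinkaya-tac}: the one-step bound $V(x(t+1))\leq\beta^{1-l(t)}\varphi^{l(t)}V(x(t))$, its iteration, and the Borel--Cantelli consequence $\limsup_{k\to\infty}\frac{1}{k}\sum_{t=0}^{k-1}l(t)\leq\rho$ almost surely are exactly the ingredients of that argument. The only point to state a bit more carefully is the stability (as opposed to attractivity) part, where the bound on $\sup_{k}\|x(k)\|$ is a random, almost surely finite multiple of $\|x_{0}\|$ rather than a deterministic one, which is handled by the probabilistic stability definition used in \cite{cetinkaya-tac}.
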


In Theorem~\ref{Stability-Theorem}, the conditions \eqref{eq:betacond}
and \eqref{eq:varphicond} characterize stability and instability
of the closed-loop and the open-loop dynamics through scalars $\beta$
and $\varphi$. These scalars also appear in condition \eqref{eq:betaandvarphicond}.
When $\rho$ (and hence the packet exchange failure ratio) is sufficiently
small so that \eqref{eq:betaandvarphicond} is satisfied, then we
have almost sure asymptotic stability, which implies $\mathbb{P}[\lim_{t\to\infty}\|x(t)\|=0]=1.$ 

In Sections~\ref{sec:Random-and-Malicious} and \ref{Sec:FailuresOnPaths}
below, we will present some key methods for obtaining $\rho$ of a
\emph{multi-hop networked control system} \eqref{eq:system}, \eqref{eq:control-input}
to facilitate its stability analysis with Theorem~\ref{Stability-Theorem}.
Specifically, we will consider the setting where the state measurement
and control input packets are attempted to be transmitted over \emph{multi-hop
networks }instead of a single direct channel, which was considered
in \cite{cetinkaya-tac}\emph{.} We will use assumptions similar to
Assumption~\ref{MainAssumption} to characterize packet transmission
failures on the paths between the plant and the controller as well
as the individual links on those paths. We will then show that the
packet exchange failures (represented by $l(\cdot)$) in the overall
networked control system satisfy Assumption~\ref{MainAssumption}
with a scalar $\rho\in[0,1]$ that depends on the network structure
as well as failure models of links. 

To facilitate the analysis in the following sections, we now define
two more classes of binary-valued processes that are distinct from
$\Lambda_{\rho}$. Our goal is to characterize more specific models
for random and malicious failures. 

In our earlier work \cite{cetinkaya-tac}, we utilized time-inhomogeneous
Markov chains for characterizing random failures in a single communication
channel. When we consider a multi-hop network composed of a number
of channels that face random transmission failures, we are required
to introduce a different characterization. This is because the overall
failures in the network depends on the failures on each individual
channel. Hence, the overall failures cannot always be described as
a Markov chain. In order to describe random failures in multi-hop
networks, we utilize time-inhomogeneous \emph{hidden Markov models}
(see \cite{anderson1999wiener,cappe2009inference,vidyasagar2014hidden}).
The binary-valued process $\{l(t)\in\{0,1\}\}_{t\in\mathbb{N}_{0}}$
has a time-inhomogeneous hidden Markov model if 
\begin{align}
l(t) & =h^{l}(\theta^{l}(t)),\quad t\in\mathbb{N}_{0},\label{eq:hiddenmarkov}
\end{align}
where $h^{l}\colon\Theta^{l}\to\{0,1\}$ is a binary-valued function
on a set $\Theta^{l}$ of finite number of elements, and moreover,
$\{\theta^{l}(t)\in\Theta^{l}\}_{t\in\mathbb{N}_{0}}$ is an $\mathcal{F}_{t}$-adapted,
finite-state, and time-inhomogeneous Markov chain with initial distributions
$\vartheta_{q}^{l}\in[0,1]$, $q\in\Theta^{l}$, and time-varying
transition probability functions $p_{q,r}^{l}\colon\mathbb{N}_{0}\to[0,1]$,
$q,r\in\Theta^{l}$, satisfying 
\begin{align*}
\begin{array}{c}
\mathbb{P}[\theta^{l}(0)=q]=\vartheta_{q}^{l},\\
\mathbb{P}[\theta^{l}(t+1)=r|\theta^{l}(t)=q]=p_{q,r}^{l}(t),\quad t\in\mathbb{N}_{0}.
\end{array}
\end{align*}
The process $\{l(t)\in\{0,1\}\}_{t\in\mathbb{N}_{0}}$ is also called
the output process of a hidden Markov model. Notice that $\{l(t)\in\{0,1\}\}_{t\in\mathbb{N}_{0}}$
depends on the Markov chain $\{\theta^{l}(t)\in\Theta^{l}\}_{t\in\mathbb{N}_{0}}$
through function $h^{l}$, but it is not necessarily a Markov chain
itself. Specifically, in certain cases, we may have $\mathbb{P}[l(t+1)=1|\mathcal{F}_{t}]\neq\mathbb{P}[l(t+1)=1|l(t)]$,
which shows that Markov property (see \cite{billingsley1986}) does
not hold. This is for example the case where $l(\cdot)$ represents
the failures on a Gilbert-Elliott channel (see \cite{sadeghi2008,ellis2014}).
We also note that hidden Markov models naturally arise in the description
of multi-hop networks. For instance, $l(\cdot)$ may be the failure
indicator of a path with multiple links. Even if the failures on individual
links may satisfy the Markov property, the overall failure indicator
$l(\cdot)$ does not satisfy it due to dependence on the failure/success
states of all individual links. In such cases, $l(\cdot)$ follows
a hidden Markov model, where the Markov chain $\{\theta^{l}(t)\in\Theta^{l}\}_{t\in\mathbb{N}_{0}}$
represents the combined states of all individual links. 

For a given binary-valued output process $\{l(t)\in\{0,1\}\}_{t\in\mathbb{N}_{0}}$
associated with a time-inhomogeneous hidden Markov model, let $\Theta_{0}^{l},\Theta_{1}^{l}\subset\Theta^{l}$
be given by 
\begin{align*}
\Theta_{0}^{l} & \triangleq\{r\in\Theta^{l}\colon h^{l}(r)=0\},\quad\Theta_{1}^{l}\triangleq\{r\in\Theta^{l}\colon h^{l}(r)=1\}.
\end{align*}
In the next definition we introduce a class of binary-valued output
processes associated with time-inhomogeneous hidden Markov models. 

\begin{definition} [$\Gamma_{p_0,p_1}$] \label{Definition-Hidden-Markov}
Given scalars $p_{0},p_{1}\in[0,1]$ we define the class $\Gamma_{p_{0},p_{1}}$
of binary-valued output processes of time-inhomogeneous hidden Markov
models by
\begin{align*}
\Gamma_{p_{0},p_{1}} & \triangleq\big\{ l\colon\sum_{r\in\Theta_{0}^{l}}p_{q,r}^{l}(t)\leq p_{0},\,q\in\Theta^{l},\,t\in\mathbb{N}_{0};\sum_{r\in\Theta_{1}^{l}}p_{q,r}^{l}(t)\leq p_{1},\,q\in\Theta^{l},\,t\in\mathbb{N}_{0}\big\}.
\end{align*}

\end{definition}

The advantage of the class $\Gamma_{p_{0},p_{1}}$is that by utilizing
the values $p_{0}$ and $p_{1}$ associated with different hidden
Markov models, we can characterize the combined effects of those models
even if detailed information on the Markov chain $\theta^{l}$ and
the output function $h^{l}$ of those models are not available. 

The following result establishes the relation between the classes
$\Gamma_{p_{0},p_{1}}$ and $\Lambda_{\rho}$. 

\begin{proposition} \label{PropositionGammaLambdaRelation} Consider
the binary-valued output process $\{l(t)\in\{0,1\}\}_{t\in\mathbb{N}_{0}}$
of a time-inhomogeneous hidden Markov model. If $l\in\Gamma_{p_{0},p_{1}}$
with $p_{1}\in(0,1)$, then we have $l\in\Lambda_{\rho}$ for any
$\rho\in(p_{1},1]$. \end{proposition}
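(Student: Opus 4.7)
The plan is to exploit the uniform one-step conditional bound on the failure probability that membership in $\Gamma_{p_0,p_1}$ provides, and then invoke a Chernoff-type argument to obtain geometric decay of the tail probabilities $\mathbb{P}[\sum_{t=0}^{k-1}l(t) > \rho k]$, from which summability is immediate. First I would observe that since $\{\theta^l(t)\}$ is $\mathcal{F}_t$-adapted and Markov with respect to this filtration, for every $t\geq 1$
\[
\mathbb{P}[l(t)=1\mid\mathcal{F}_{t-1}] \;=\; \sum_{r\in\Theta_1^l} p^l_{\theta^l(t-1),r}(t-1) \;\leq\; p_1
\]
holds almost surely, directly from the definition of $\Gamma_{p_0,p_1}$. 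The initial time $t=0$ contributes only the trivial bound $\mathbb{P}[l(0)=1]\leq 1$, which is harmless.

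Next, set $S_k \triangleq \sum_{t=0}^{k-1}l(t)$. For any $\lambda>0$ I would compute
\[
\mathbb{E}[e^{\lambda l(t)}\mid\mathcal{F}_{t-1}] \;=\; 1 + \mathbb{P}[l(t)=1\mid\mathcal{F}_{t-1}](e^\lambda-1) \;\leq\; 1 + p_1(e^\lambda-1) \;\leq\; e^{p_1(e^\lambda-1)},
\]
and then iterate the tower property of conditional expectation to arrive at the \emph{deterministic} bound
\[
\mathbb{E}[e^{\lambda S_k}] \;\leq\; e^{\lambda}\, e^{(k-1)\,p_1(e^\lambda-1)}.
\]
Markov's inequality then yields
\[
\mathbb{P}[S_k > \rho k] \;\leq\; e^{-\lambda\rho k}\,\mathbb{E}[e^{\lambda S_k}] \;\leq\; e^{\lambda}\exp\bigl\{k\bigl[p_1(e^\lambda-1)-\lambda\rho\bigr]\bigr\}.
\]

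Finally, the exponent $g(\lambda)\triangleq p_1(e^\lambda-1)-\lambda\rho$ satisfies $g(0)=0$ and $g'(0)=p_1-\rho<0$ by the hypothesis $\rho>p_1$, so a sufficiently small $\lambda^\star>0$ gives $g(\lambda^\star)<0$. Then $\mathbb{P}[S_k>\rho k]\leq C\,r^k$ with $r=e^{g(\lambda^\star)}<1$, so $\sum_{k=1}^\infty \mathbb{P}[S_k>\rho k]<\infty$ and hence $l\in\Lambda_\rho$. The main delicate point is that $l$ itself need not be Markov (as the excerpt already emphasized), so the bound from $\Gamma_{p_0,p_1}$ must be applied after conditioning on the full filtration $\mathcal{F}_{t-1}$ — equivalently on $\theta^l(t-1)$ — rather than on past outputs alone; once that is done, the exponential moment bound becomes deterministic and the remaining large-deviation manipulation is routine.
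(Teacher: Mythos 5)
Your proof is correct, and it reaches the conclusion by a more self-contained route than the paper. The paper disposes of the case $\rho=1$ trivially and, for $\rho\in(p_{1},1)$, reduces the proposition to the general tail bound of Lemma~\ref{KeyMarkovLemma} applied with $\xi(t)=\theta^{l}(t)$, $\chi(t)\equiv1$, $\tilde{p}=p_{1}$, $\tilde{w}=1$; that lemma in turn rests on the iterated conditional-expectation bound of Lemma~\ref{ExpectationLemma}, a decomposition over the configurations of the auxiliary process $\chi$, and an explicitly optimized tilting parameter $\phi=\frac{\rho(1-\tilde p)}{\tilde p(1-\rho)}$ whose admissibility is verified via a logarithmic inequality. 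You instead run the Chernoff argument directly on $S_{k}=\sum_{t=0}^{k-1}l(t)$: the uniform bound $\mathbb{P}[l(t)=1\mid\mathcal{F}_{t-1}]\leq p_{1}$, the conditional moment bound $\mathbb{E}[e^{\lambda l(t)}\mid\mathcal{F}_{t-1}]\leq e^{p_{1}(e^{\lambda}-1)}$, the tower property, and a small (rather than optimal) $\lambda$ chosen from $g'(0)=p_{1}-\rho<0$. This is essentially the same probabilistic mechanism as Lemma~\ref{ExpectationLemma} (your $e^{\lambda}$ is the paper's $\phi$, and your product bound is its inequality \eqref{eq:lemmaresultfors}), but your argument is shorter and handles $\rho\in(p_{1},1]$ uniformly, because with $\chi\equiv1$ the paper's decomposition over the sets $F_{s,k}$ and the restriction $\rho<\tilde{w}$ are unnecessary; what it does not buy is the extra generality of Lemma~\ref{KeyMarkovLemma} (an independent masking process $\chi$ with rate $\tilde{w}<1$), which the paper reuses later in Theorems~\ref{Theorem-Gamma-Lambda} and \ref{Theorem-Gamma-Lambda-1}. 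Your caveat about conditioning is also well placed: the argument needs $\theta^{l}$ to satisfy the Markov property with respect to the full filtration $\{\mathcal{F}_{t}\}$, not mere adaptedness, and this is exactly the property the paper itself invokes in the appendix when it writes $\mathbb{E}[\phi^{h(\xi(t))}\mid\mathcal{F}_{t-1}]=\mathbb{E}[\phi^{h(\xi(t))}\mid\xi(t-1)]$, so your proof stands on the same footing as the published one.
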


Before, we give the proof of Proposition~\ref{PropositionGammaLambdaRelation},
we first present a technical result that provides upper bounds on
the tail probabilities of sums involving a binary-valued output process
associated with a time-inhomogeneous hidden Markov model.

\begin{lemma}\label{KeyMarkovLemma} Let $\{\xi(t)\in\Xi\}_{t\in\mathbb{N}_{0}}$
be a finite-state time-inhomogeneous Markov chain with transition
probabilities $p_{q,r}\colon\mathbb{N}_{0}\to[0,1]$, $q,r\in\Xi$,
and let $\Xi_{1}\subset\Xi$ be given by $\Xi_{1}\triangleq\{r\in\Xi\colon h(r)=1\}$,
where $h\colon\Xi\to\{0,1\}$ is a binary-valued function. Furthermore,
let $\{\chi(t)\in\{0,1\}\}_{t\in\mathbb{N}_{0}}$ be a binary-valued
process that is independent of $\{\xi(t)\in\Xi\}_{t\in\mathbb{N}_{0}}$.
Assume 
\begin{align}
 & \sum_{r\in\Xi_{1}}p_{q,r}(t)\leq\tilde{p},\quad q\in\Xi,\quad t\in\mathbb{N}_{0},\label{eq:xicond}\\
 & \sum_{k=1}^{\infty}\mathbb{P}[\sum_{t=0}^{k-1}\chi(t)>\tilde{w}k]<\infty,\label{eq:chicond}
\end{align}
where $\tilde{p}\in(0,1)$, $\tilde{w}\in(0,1]$. We then have for
$\rho\in(\tilde{p}\tilde{w},\tilde{w})$, 
\begin{align}
\mathbb{P}[\sum_{t=0}^{k-1}h(\xi(t))\chi(t)>\rho k] & \leq\psi_{k},\quad k\in\mathbb{N},\label{eq:keylemmaresult1}
\end{align}
 where $\psi_{k}\triangleq\tilde{\sigma}_{k}+\phi^{-\rho k+1}\frac{\left((\phi-1)\tilde{p}+1\right)^{\tilde{w}k}-1}{(\phi-1)\tilde{p}}$
, $\phi\triangleq\frac{\frac{\rho}{\tilde{w}}(1-\tilde{p})}{\tilde{p}(1-\frac{\rho}{\tilde{w}})}$,
$\tilde{\sigma}_{k}\triangleq\mathbb{P}[\sum_{t=0}^{k-1}\chi(t)>\tilde{w}k],k\in\mathbb{N}$.
Moreover, $\sum_{k=1}^{\infty}\psi_{k}<\infty.$ \end{lemma}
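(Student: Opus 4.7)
The plan is to split the event $\{S_k > \rho k\}$, where $S_k := \sum_{t=0}^{k-1} h(\xi(t))\chi(t)$, according to whether the number of ``active'' time steps $N_k := \sum_{t=0}^{k-1}\chi(t)$ is controlled. Since $S_k\leq N_k$, a union bound gives
\[
\mathbb{P}[S_k > \rho k] \;\leq\; \mathbb{P}[S_k > \rho k,\,N_k \leq \tilde w k] \;+\; \mathbb{P}[N_k > \tilde w k],
\]
and the second term is precisely $\tilde\sigma_k$ by \eqref{eq:chicond}. The remainder of the proof will bound the first term by $\phi^{-\rho k+1}\frac{\alpha^{\tilde w k}-1}{\alpha-1}$, with $\alpha := 1 + \tilde p(\phi-1)$.

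For this first term I would apply a Chernoff-type argument to $S_k$. Taking $\phi > 1$, the elementary pointwise bound $\mathbf{1}_{\{S_k\geq s\}}(\phi^s-1)\leq \phi^{S_k}-1$ (valid for any integer $s\geq 1$), combined with $s=\lfloor\rho k\rfloor+1$ and the estimate $\phi^{s}-1\geq \phi^{\rho k-1}(\phi-1)$, will produce the factor $\phi^{-\rho k+1}$ and a denominator proportional to $\alpha-1=\tilde p(\phi-1)$ once the moment generating function bound below is inserted. The key step is to compute $\mathbb{E}[\phi^{S_k}\mid \chi]$ using independence of $\xi$ and $\chi$, then iterate over $t$ using the Markov-chain structure: at an active step $\chi(t)=1$, conditional on $\xi(t-1)$,
\[
\mathbb{E}\bigl[\phi^{h(\xi(t))}\mid \xi(t-1)\bigr] = 1 + (\phi-1)\sum_{r\in \Xi_1}p_{\xi(t-1),r}(t-1) \leq \alpha
\]
by \eqref{eq:xicond}, while an inactive step contributes a factor of $1$. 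A short Chapman--Kolmogorov argument shows that the same $\tilde p$-bound is inherited by multi-step transitions, so gaps between consecutive active indices do not spoil the recursion. By the tower property, $\mathbb{E}[\phi^{S_k}\mid\chi]\leq \alpha^{N_k}$, and restricting to the event $\{N_k \leq \tilde w k\}$ together with $\alpha\geq 1$ delivers the stated form of $\psi_k$.

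For the summability claim $\sum_k \psi_k<\infty$, the specified value $\phi = \frac{(\rho/\tilde w)(1-\tilde p)}{\tilde p(1-\rho/\tilde w)}$ is the Chernoff optimizer for a Bernoulli$(\tilde p)$ sum at deviation level $\rho/\tilde w$. The assumption $\rho\in(\tilde p\tilde w,\tilde w)$ yields $\rho/\tilde w\in(\tilde p,1)$, hence $\phi>1$ and, by the standard Kullback--Leibler calculation, $\phi^{-\rho/\tilde w}\alpha<1$. Therefore the geometric factor $(\phi^{-\rho}\alpha^{\tilde w})^{k}$ decays exponentially in $k$, so the non-$\tilde\sigma_k$ part of $\psi_k$ is summable; combined with $\sum_k \tilde\sigma_k<\infty$ from \eqref{eq:chicond}, we conclude $\sum_k \psi_k<\infty$.

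The main obstacle I anticipate is the Markov-chain manipulation underlying $\mathbb{E}[\phi^{S_k}\mid\chi]\leq \alpha^{N_k}$: the hypothesis \eqref{eq:xicond} is only a one-step bound, so propagating it through the random gaps between indices with $\chi(t)=1$, and correctly invoking the independence of $\chi$ and $\xi$ at each conditioning step, requires care. Once the conditional moment generating function bound is in place, the Chernoff optimization and the summability verification are routine.
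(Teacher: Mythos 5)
Your overall strategy is essentially the paper's: split on the event $\{\sum_{t=0}^{k-1}\chi(t)\leq\tilde{w}k\}$, absorb the complement into $\tilde{\sigma}_{k}$, bound the remaining term by a Chernoff-type argument applied conditionally on $\chi$ (using independence and the one-step bound \eqref{eq:xicond}), and prove summability via the Kullback--Leibler/logarithm inequality showing $\phi^{-\rho}\left((\phi-1)\tilde{p}+1\right)^{\tilde{w}}<1$. The paper implements the same plan by partitioning the values of $(\chi(0),\ldots,\chi(k-1))$ according to their number of ones and invoking an auxiliary moment bound (Lemma~A.1), so there is no methodological divergence to report.

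There is, however, a genuine flaw in the key estimate you assert. Writing $S_{k}=\sum_{t=0}^{k-1}h(\xi(t))\chi(t)$ and $N_{k}=\sum_{t=0}^{k-1}\chi(t)$, the claimed bound $\mathbb{E}[\phi^{S_{k}}\mid\chi]\leq\alpha^{N_{k}}$ with $\alpha=1+(\phi-1)\tilde{p}$ is false in general: hypothesis \eqref{eq:xicond} constrains only the transition probabilities, not the initial distribution of $\xi$. If $\chi(0)=1$ and $\xi(0)\in\Xi_{1}$ almost surely (nothing in the lemma excludes this), then with a single active index $t=0$ one gets $\mathbb{E}[\phi^{S_{k}}\mid\chi]=\phi>1+(\phi-1)\tilde{p}=\alpha$ since $\tilde{p}<1$; your conditioning step ``$\mathbb{E}[\phi^{h(\xi(t))}\mid\xi(t-1)]\leq\alpha$'' has no predecessor state when the first active index is $t=0$. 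The correct statement---exactly what the paper's Lemma~A.1 proves by induction---is $\mathbb{E}[\phi^{S_{k}}\mid\chi]\leq\phi\,\alpha^{N_{k}-1}$ for $N_{k}\geq1$, with the first active factor bounded crudely by $\phi$ and each later one by $\alpha$ via the Markov property (your Chapman--Kolmogorov remark about gaps between active indices is fine: one conditions on the state one step before each active time). With this correction your computation still closes: using $\phi^{s}-1\geq(\phi-1)\phi^{\rho k-1}$ for $s=\lfloor\rho k\rfloor+1$ and the elementary inequality $\tilde{p}\left(\phi\alpha^{m-1}-1\right)\leq\alpha^{m}-1$ for $m\geq1$ (which follows from $\tilde{p}\phi-\alpha=\tilde{p}-1<0$ and $\alpha\geq1$), the resulting estimate is at most the stated $\psi_{k}$, and your summability argument is unaffected. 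So the route is salvageable, but as written the conditional moment bound---which you yourself flagged as the crux---does not hold and must be repaired in this way.
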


Lemma~\ref{KeyMarkovLemma} is an essential tool for dealing with
different failure scenarios specific to multi-hop networks, and it
generalizes a result for the fully observable Markov chains from our
previous work \cite{cetinkaya-tac}. In particular, in the case where
$\Xi=\{0,1\}$ and $h(r)=r$, $r\in\Xi$, Lemma~\ref{KeyMarkovLemma}
recovers Lemma~A.1 of \cite{cetinkaya-tac}. The proof Lemma~\ref{KeyMarkovLemma}
is given in the Appendix.

We are now ready to prove Proposition~\ref{PropositionGammaLambdaRelation}.

\begin{proof}[Proof of Proposition~\ref{PropositionGammaLambdaRelation}]
Notice that $l\in\Lambda_{\rho}$ for $\rho=1$, since $l(\cdot)$
is binary-valued. For the case $\rho\in(p_{1},1)$, we show $l\in\Lambda_{\rho}$
by employing Lemma~\ref{KeyMarkovLemma}. Specifically, let $\tilde{p}=p_{1}$,
$\tilde{w}=1$, and define the processes $\{\xi(t)\in\{0,1\}\}_{t\in\mathbb{N}_{0}}$
and $\{\chi(t)\in\{0,1\}\}_{t\in\mathbb{N}_{0}}$ with $\xi(t)=\theta_{1}^{l}(t)$
and $\chi(t)=1$, $t\in\mathbb{N}_{0}$. Since the conditions in \eqref{eq:xicond}
and \eqref{eq:chicond} are satisfied with $\tilde{p}=p_{1}$, $h=h^{l},$
$\Xi=\Theta^{l}$, and $\Xi_{1}=\Theta_{1}^{l}$, it follows from
Lemma~\ref{KeyMarkovLemma} that 
\begin{align*}
\sum_{k=1}^{\infty}\mathbb{P}[\sum_{t=0}^{k-1}l(t)>\rho k] & =\sum_{k=1}^{\infty}\mathbb{P}[\sum_{t=0}^{k-1}h^{l}(\theta_{1}^{l}(t))>\rho k]=\sum_{k=1}^{\infty}\mathbb{P}[\sum_{t=0}^{k-1}h(\xi(t))\chi(t)>\rho k]<\infty,
\end{align*}
 which completes the proof. \end{proof}

Next, we introduce a class for binary-valued processes that we employ
in characterizing the timing of malicious attacks. 

\begin{definition} [$\Pi_{\kappa,w}$] \label{DefinitionPi} Given
scalars $\kappa\geq0$, $w\in[0,1]$ we define the class of binary-valued
processes $\Pi_{\kappa,w}$ by 
\begin{align*}
\Pi_{\kappa,w} & \triangleq\big\{ l\colon l(t)\in\{0,1\},t\in\mathbb{N}_{0};\,\,\mathbb{P}\big[\sum_{t=0}^{k-1}l(t)\leq\kappa+wk\big]=1,\,\,k\in\mathbb{N}\big\}.
\end{align*}

\end{definition}

The characterization for the class $\Pi_{\kappa,w}$ is based on a
discrete-time version of the malicious attack model used by \cite{de2015inputtran,IFACde2016networked}.
In this model, occurrences of malicious attacks are described by a
process $l(\cdot)$ such that $\mathbb{P}\big[\sum_{t=0}^{k-1}l(t)\leq\kappa+wk\big]=1$,
$k\in\mathbb{N}$, where $\kappa\ge0$ represents an upper-bound for
a number of initial attacks, and $w\in(0,1)$ represents a bound on
the average attack rate. Lemma~2.3 in \cite{cetinkaya-tac} shows
that if $l\in\Pi_{\kappa,w}$ with $w\in(0,1)$, then $l\in\Lambda_{\rho}$
for any $\rho\in(w,1]$.

Note that the malicious attack characterization through the class
$\Pi_{\kappa,w}$ does not require the process $l\in\Pi_{\kappa,w}$
to follow a particular distribution at each time. This is the key
difference of the class $\Pi_{\kappa,w}$ from the class $\Gamma_{p_{0},p_{1}}$
that represents random failures. 

There are several ways an attacker can strategize when to cause transmission
failures. For instance, game-theoretic \cite{liu2014stochastic,li2015jamming}
and optimization-based methods can be used by the attacker to decide
the timing of attacks. An important property of the class $\Pi_{\kappa,w}$
is that it characterizes attacks by their maximum average attack rate
but not by the specific timing strategy they follow. Thus by using
$\Pi_{\kappa,w}$, we can capture the uncertainty in the generation
of attacks, which may follow a deterministic strategy, or may involve
randomness. Interestingly, an attacker can also make use of system
dynamics as well as past/present state information to decide the timing
of attacks to cause more damage to the system. In the following example,
we discuss such an attack scenario. 

\begin{example} \label{StateDependentOptimizationStrategy} Consider
the scenario where the plant's communication node $v_{\mathrm{P}}$
in Figure~\ref{Flo:operation} is compromised by an attacker. The
attacker is assumed to utilize the knowledge of system dynamics and
the state information for deciding whether to transmit packets or
not. We represent the attacker's actions with a binary-valued process
$\{l_{\mathrm{A}}(t)\in\{0,1\}\}_{t\in\mathbb{N}_{0}}$, where $l_{\mathrm{A}}(t)=0$
indicates that the attacker transmits the state packet at time $t$
to nodes $v_{1},v_{2},v_{3}$ and $l_{\mathrm{A}}(t)=1$ indicates
no transmission. The attacker decides the values of the binary-valued
process $l_{\mathrm{A}}(t)$ according to 
\begin{align}
l_{\mathrm{A}}(t) & =a_{t}^{*},\label{eq:attackstr1}
\end{align}
 where $(a_{t}^{*},a_{t+1}^{*},\cdots,a_{t+N-1}^{*})\in\{0,1\}^{N}$
is a solution to the optimization problem
\begin{align}
 & \begin{array}{c}
\underset{(a_{t},\cdots,a_{t+N-1})}{\mathrm{maximize}}\,\,\,\,\sum_{i=1}^{N}\|\hat{x}(t+i)\|^{2}\\
\quad\mathrm{subject\,\,to}\,\,\sum_{i=0}^{t-1}l_{\mathrm{A}}(i)+\sum_{i=t}^{t+j-1}a_{i}\leq\kappa_{\mathrm{A}}+w_{\mathrm{A}}(t+j),\quad j\in\{1,\ldots,N\},
\end{array}\label{eq:attackstr2}
\end{align}
with $\hat{x}(t+i)=\big(A+(1-a_{t+i-1})BK\big)\big(A+(1-a_{t+i-2})BK\big)\cdots\big(A+(1-a_{t})BK\big)x(t)$,
$\kappa_{\mathrm{A}}\geq0$, $w_{\mathrm{A}}\in(0,1)$, and $N\in\mathbb{N}$.
Here, $\hat{x}(t+i)$ denotes the attacker's prediction of the future
state at time $t+i$. In this strategy, the attacker decides to attack
($l_{\mathrm{A}}(t)=1$) or not ($l_{\mathrm{A}}(t)=0$) at time $t$,
based on the solution of the optimization problem where the goal is
to maximize the sum of squared predicted future state norms over the
interval $[t+1,t+N]$, while keeping the attack rate below a certain
value $w_{\mathrm{A}}\in(0,1)$. The positive integer $N\in\mathbb{N}$
is the horizon in the optimization problem, and it can be large if
the attacker has sufficient computational resources. The optimization
problem in \eqref{eq:attackstr2} is solved at each time step and
the updated state information is used by the attacker for decision
making. 

The process $\{l_{\mathrm{A}}(t)\in\{0,1\}\}_{t\in\mathbb{N}_{0}}$
under the attack strategy \eqref{eq:attackstr1}, \eqref{eq:attackstr2}
is not a Markov process, since the value of $l_{\mathrm{A}}(t)$ depends
on not just the action $l_{\mathrm{A}}(t-1)$ but all previous actions
$l_{\mathrm{A}}(0),l_{\mathrm{A}}(1),\ldots,l_{\mathrm{A}}(t-1)$.
Moreover, $l_{\mathrm{A}}(t)$ depends also on the state value $x(t)$. 

Notice that under this attack strategy, we have $l_{\mathrm{A}}\in\Pi_{\kappa_{\mathrm{A}},w_{\mathrm{A}}}$,
and hence, $l_{\mathrm{A}}\in\Lambda_{\rho}$ for any $\rho\in(w_{\mathrm{A}},1]$.
This illustrates the generality of the class $\Lambda_{\rho}$ characterized
in Definition~\ref{DefinitionLambda}. Not only the output processes
of hidden Markov models from the class $\Gamma$, but also the non-Markovian,
state-dependent, and optimization-based attacks from the class $\Pi$
belong to the class $\Lambda_{\rho}$ for certain values of $\rho$.
In Section~\ref{subsec:State-dependent-Attacks-by}, we further illustrate
the effects of such attacks. \end{example} 

As we discussed above, a process $l$ that belongs to either of the
classes $\Pi_{\kappa,w}$ and $\Gamma_{p_{0},p_{1}}$ also belongs
to the class $\Lambda_{\rho}$ for a suitable value of $\rho$. This
observation suggests us that both random failures and malicious attacks
can be characterized by utilizing the class $\Lambda_{\rho}$. We
note also that there are cases where a process may belong to $\Lambda_{\rho}$,
even though it does not belong to $\Pi_{\kappa,w}$ or $\Gamma_{p_{0},p_{1}}$.
The following example discusses such a case. 

\begin{example}In our recent work \cite{ahmetsiam2018}, we investigated
the effects of channel noise and jamming attacks on a wireless communication
channel. There, we considered a physical channel model to determine
the wireless transmission failure probabilities. In particular, we
used the binary-valued process $\{l_{\mathrm{J}}(t)\in\{0,1\}\}_{t\in\mathbb{N}_{0}}$
to indicate the transmission failures and the process $\{v_{\mathrm{J}}(t)\in[0,\infty)\}_{t\in\mathbb{N}_{0}}$
to denote the jamming interference power. These two processes are
related to each other through the equality 
\begin{align}
\mathbb{P}[l_{\mathrm{J}}(t)=1|v_{\mathrm{J}}(t)=v^{*}] & =p(v^{*}),\quad v^{*}\geq0,\label{eq:lcharacterize1}
\end{align}
 where $p\colon[0,\infty)\to[0,1]$ is a function determined by the
properties (such as the constant channel noise power) associated with
the underlying channel. We showed that $l_{\mathrm{J}}\in\Lambda_{\rho}$
holds for certain $\rho$ values. Specifically, if there exist $\kappa_{\mathrm{J}}\geq0,\overline{v}_{\mathrm{J}}\geq0$
such that $\sum_{i=0}^{t-1}v(i)\leq\kappa_{\mathrm{J}}+\overline{v}_{\mathrm{J}}t$
for $t\in\mathbb{N}$, then we have $l_{\mathrm{J}}\in\Lambda_{\rho}$
for $\rho\in(\hat{p}(\overline{v}),1]$, where $\hat{p}\colon[0,\infty)\to[0,1]$
is a concave function that upper-bounds $p$. Notice that in this
case, $l_{\mathrm{J}}$ depends on the uncertain power process $v_{\mathrm{J}}$.
Moreover, $l_{\mathrm{J}}$ does not necessarily belong to the class
$\Pi$ or $\Gamma$, even though it belongs to $\Lambda_{\rho}$ for
$\rho\in(\hat{p}(\overline{v}),1]$. 

\end{example}

As we make it clear in the following sections, the class $\Lambda_{\rho}$
has useful properties. For instance, if two processes $l_{1}$ and
$l_{2}$ belong to classes $\Lambda_{\rho_{1}}$ and $\Lambda_{\rho_{2}}$,
then the processes $l_{\wedge}$ and $l_{\vee}$ defined by setting
$l_{\wedge}(t)=l_{1}(t)\wedge l_{2}(t)$ and $l_{\vee}=l_{1}(t)\vee l_{2}(t)$
belong to classes $\Lambda_{\rho_{\wedge}}$ and $\Lambda_{\rho_{\vee}}$,
respectively, where $\rho_{\wedge}$ and $\rho_{\vee}$ depend on
$\rho_{1}$ and $\rho_{2}$. Such properties enable us to model the
failures on both the links and the paths of a network by using processes
that belong to the class $\Lambda_{\rho}$ for certain values of $\rho$. 

\section{Random and Malicious Packet Failures in Multi-Hop Networks}

\label{sec:Random-and-Malicious}

In this section, we present a framework for modeling random and malicious
packet transmission failures in the multi-hop networks that are used
for exchanging state and control input packets between the plant and
the controller. 

\subsection{Multi-Hop Network Model}

\label{subsec:MultihopModel}

We follow the approach of \cite{dana2006capacity} and represent the
networks between the plant and the controller by using directed acyclic
graphs. To model the network, over which the \emph{state packets}
are transmitted from the plant to the controller, we consider the
directed acyclic graph $G\triangleq(V,E)$, where $V$ denotes the
set of nodes, and $E\subset V\times V$ denotes the set of edges.
Here the nodes and the edges correspond respectively to communication
devices and links. We represent the nodes at the plant and the controller
with $v_{\mathrm{P}}\in V$ and $v_{\mathrm{C}}\in V$, respectively.
A \emph{path} $\mathcal{P}$ from a node $v_{1}\in V$ to another
node $v_{h}\in V$ is identified as a sequence of nonrepeating edges
$\mathcal{P}=\big((v_{1},v_{2}),(v_{2},v_{3}),\ldots,(v_{h-1},v_{h})\big)$.
We write $|\mathcal{P}|$ to denote the number of edges on the path
$\mathcal{P}$. 

Similarly, the network used for transmission of the \emph{control
input packets} from the controller to the plant is represented by
graph $\tilde{G}\triangleq(\tilde{V},\tilde{E})$ with plant and controller
nodes $\tilde{v}_{\mathrm{P}}\in\tilde{V}$ and $\tilde{v}_{\mathrm{C}}\in\tilde{V}$.
In practice, the same physical network may be used for transmission
of both the state and the control input packets. For those cases,
the nodes in $V$ and $\tilde{V}$ would correspond to the same physical
devices. 

We assume that there exists at least one directed path from node $v_{\mathrm{P}}$
to node $v_{\mathrm{C}}$ in $G$, and at least one directed path
from node $\tilde{v}_{\mathrm{C}}$ to node $\tilde{v}_{\mathrm{P}}$
in $\tilde{G}$. This ensures that the underlying communication network
topology allows packet transmissions between the plant and the controller.
Note that each path represents a possible transmission route. When
multiple routes are utilized, the same packet is attempted to be transmitted
on all those routes. In the ideal case where packet drops and data
corruption do not occur, packets can be delivered in either one of
these routes at all times. In this paper, we are interested in the
nonideal case, where at certain times, transmissions on these routes
may fail. 

\begin{example} We show an example of $G$ and $\tilde{G}$ in Fig.~\ref{Flo:operation},
where the nodes corresponding to the plant and the controller are
not directly connected, but state and control input packets can still
be transmitted with the help of the intermediate nodes $v_{1},\ldots,v_{4}$
in $G$, and $\tilde{v}_{1},\ldots,\tilde{v}_{5}$ in $\tilde{G}$.
\end{example}

Intermediate nodes in networks forward data packets that they receive
from their incoming edges to their outgoing edges. Depending on the
communication protocol, the forwarding method may differ. For instance,
in the \emph{broadcast} method, intermediate nodes forward all data
packets that they receive from their incoming edges to all the nodes
that they are connected with their outgoing edges. On the other hand,
it may also be the case that intermediate nodes follow a specific
routing scheme, where a packet coming from a certain incoming edge
is forwarded through a certain outgoing edge \cite{medhi2010network}. 

A packet exchange between the plant and the controller may fail if
the state or the control input packets are dropped or get corrupted.
Here note that corrupted data packets are allowed to be transmitted
over intermediate nodes, but they are detected and discarded at the
plant/controller nodes. Error-detecting codes can be used for this
purpose. Note also that if the controller only receives corrupted
versions of a state packet, the control input is not computed. \vskip 2pt

In the following sections, we present some key results for the analysis
of packet failures on network $G$, which are directly applicable
for analyzing $\tilde{G}$. In particular, we characterize failures
on $G$ in terms of the failures on different paths between the plant
and the controller. Then we present a set of results that relate data
corruption and packet dropping issues of nodes and links to the failures
on each individual path of $G$. These results enable us to obtain
$\rho\in[0,1]$ in Assumption~\ref{MainAssumption}, which is essential
for analyzing the system in Fig.~\ref{Flo:operation} with Theorem~\ref{Stability-Theorem}.
We emphasize that the central problem here is to find $\rho$ for
the overall multi-hop network in Fig.~\ref{Flo:operation} in a nonconservative
way while still taking into account mutually-dependent packet failures
and coordinated attacks on the network. 

\subsection{Packet Transmission Failures on Networks}

\label{subsec:ong}

We use the binary-valued process $\{l_{G}(t)\in\{0,1\}\}_{t\in\mathbb{N}_{0}}$
to indicate transmission failures on $G$. Specifically, $l_{G}(t)=0$
means that the state packet $x(t)$ sent from the plant node $v_{\mathrm{P}}$
is successfully received at the controller node $v_{\mathrm{C}}$.
On the other hand, $l_{G}(t)=1$ indicates a failure, that is, the
controller does not receive the state $x(t)$. 

Let $c\in\mathbb{N}$ denote the number of paths on graph $G$ from
the node $v_{\mathrm{P}}$ to the node $v_{\mathrm{C}}$, and let
$\mathcal{P}_{i},i\in\{1,\ldots,c\}$, denote these paths. In the
example network $G$ in Fig.~\ref{Flo:operation}, there are $c=3$
paths 
\begin{align}
 & \mathcal{P}_{1}=\big((v_{\mathrm{P}},v_{1}),(v_{1},v_{\mathrm{C}})\big),\,\,\mathcal{P}_{2}=\big((v_{\mathrm{P}},v_{2}),(v_{2},v_{4}),(v_{4},v_{\mathrm{C}})\big),\nonumber \\
 & \mathcal{P}_{3}=\big((v_{\mathrm{P}},v_{3}),(v_{3},v_{4}),(v_{4},v_{\mathrm{C}})\big).\label{eq:example-paths}
\end{align}
 Note that different paths may include the same link. Hence, when
packet transmission is attempted on multiple paths, a link that is
shared on those paths may be used multiple times. For instance, $(v_{4},v_{\mathrm{C}})$
is on both $\mathcal{P}_{2}$ and $\mathcal{P}_{3}$. Hence, $(v_{4},v_{\mathrm{C}})$
may be utilized twice to forward the packets coming from $v_{2}$
and $v_{3}$. On the other hand, the framework that we describe below
also allows modeling the case where one of the packets is dropped
at node $v_{4}$ and not transmitted further. Furthermore, the packet
drop can be random or malicious. 

We use $\{l_{\mathcal{P}_{i}}(t)\in\{0,1\}\}_{t\in\mathbb{N}_{0}}$
to indicate whether the state packet $x(t)$ is successfully transmitted
to the controller through path $\mathcal{P}_{i}$ or not. Specifically,
$l_{\mathcal{P}_{i}}(t)=0$ represents a successful transmission.
On the other hand $l_{\mathcal{P}_{i}}(t)=1$ may indicate that the
path is not utilized for transmission due to the particular routing
scheme, or it may indicate a failure. Failures occur if packets get
dropped on the path or if they get corrupted. 

Thus, in network $G$, the transmission of the state packet $x(t)$
from node $v_{\mathrm{P}}$ to node $v_{\mathrm{C}}$ results in failure
if $l_{\mathcal{P}_{i}}(t)=1$ for all paths $\mathcal{P}_{i},i\in\{1,\ldots,c\}$.
Therefore, $l_{G}(\cdot)$ is given by 
\begin{align}
l_{G}(t) & =l_{\mathcal{P}_{1}}(t)\wedge l_{\mathcal{P}_{2}}(t)\wedge\cdots\wedge l_{\mathcal{P}_{c}}(t).\label{eq:lg-multiple-paths}
\end{align}
The following result presents a probabilistic and asymptotic bound
for the packet transmission failure ratio of $G$ as a function of
the bounds for the individual paths $\mathcal{P}_{i}$.

\begin{proposition} \label{Proposition-and-operation} Assume for
each path $\mathcal{P}_{i}$ that we have $l_{\mathcal{P}_{i}}\in\Lambda_{\rho_{\mathcal{P}_{i}}}$,
where $\rho_{\mathcal{P}_{i}}\in[0,1]$. Then $l_{G}\in\Lambda_{\rho_{G}}$
with $\rho_{G}\triangleq\min_{i\in\{1,\ldots,c\}}\rho_{\mathcal{P}_{i}}$. 

\end{proposition}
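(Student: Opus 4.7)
The plan is to exploit the simple pointwise inequality that the and-operation on binary values yields the minimum. Specifically, since $l_{G}(t) = l_{\mathcal{P}_{1}}(t)\wedge\cdots\wedge l_{\mathcal{P}_{c}}(t)$ with all factors in $\{0,1\}$, we have $l_{G}(t)\le l_{\mathcal{P}_{i}}(t)$ pointwise for every $i\in\{1,\ldots,c\}$ and every $t\in\mathbb{N}_{0}$. Summing over $t$ gives $\sum_{t=0}^{k-1}l_{G}(t)\le\sum_{t=0}^{k-1}l_{\mathcal{P}_{i}}(t)$ for all $k\in\mathbb{N}$ and every $i$.

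Next, I would pick an index $i^{*}\in\arg\min_{i\in\{1,\ldots,c\}}\rho_{\mathcal{P}_{i}}$, so that $\rho_{G}=\rho_{\mathcal{P}_{i^{*}}}$. Using the pointwise bound, the event $\{\sum_{t=0}^{k-1}l_{G}(t)>\rho_{G}k\}$ is contained in $\{\sum_{t=0}^{k-1}l_{\mathcal{P}_{i^{*}}}(t)>\rho_{\mathcal{P}_{i^{*}}}k\}$. Monotonicity of probability then yields
\begin{equation*}
\mathbb{P}\Big[\sum_{t=0}^{k-1}l_{G}(t)>\rho_{G}k\Big]\le\mathbb{P}\Big[\sum_{t=0}^{k-1}l_{\mathcal{P}_{i^{*}}}(t)>\rho_{\mathcal{P}_{i^{*}}}k\Big],\quad k\in\mathbb{N}.
\end{equation*}
Summing over $k$ and invoking $l_{\mathcal{P}_{i^{*}}}\in\Lambda_{\rho_{\mathcal{P}_{i^{*}}}}$ (which by Definition~\ref{DefinitionLambda} gives a finite sum on the right), we conclude $\sum_{k=1}^{\infty}\mathbb{P}[\sum_{t=0}^{k-1}l_{G}(t)>\rho_{G}k]<\infty$, hence $l_{G}\in\Lambda_{\rho_{G}}$.

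There is essentially no main obstacle here: the entire argument rests on the observation that an and-of-binaries is dominated by each of its arguments, after which the result follows from the definition of $\Lambda_{\rho}$ and monotonicity of probability. The only thing worth flagging is the edge case in which several paths share the minimum value of $\rho_{\mathcal{P}_{i}}$; this is harmless since we only need one such index $i^{*}$, and the choice does not affect the conclusion. No auxiliary results (such as Lemma~\ref{KeyMarkovLemma} or Proposition~\ref{PropositionGammaLambdaRelation}) are required, since the statement is a purely set-theoretic/probabilistic consequence of the definition of $\Lambda_{\rho}$.
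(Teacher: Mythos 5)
Your argument is correct and coincides with the paper's own proof: both select an index $i^{*}$ attaining the minimum, use the pointwise bound $\sum_{t=0}^{k-1}l_{G}(t)\le\sum_{t=0}^{k-1}l_{\mathcal{P}_{i^{*}}}(t)$ implied by the $\wedge$-structure, and conclude by monotonicity of probability together with the summability in Definition~\ref{DefinitionLambda}. No differences worth noting.
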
 

\begin{proof}First let $i^{*}\in\arg\min_{i\in\{1,\ldots,c\}}\rho_{\mathcal{P}_{i}}$.
It follows that $\rho_{G}=\min_{i\in\{1,\ldots,c\}}\rho_{\mathcal{P}_{i}}=\rho_{\mathcal{P}_{i^{*}}}$.
Now, by \eqref{eq:lg-multiple-paths}, 
\begin{align*}
\sum_{t=0}^{k-1}l_{G}(t) & \leq\sum_{t=0}^{k-1}l_{\mathcal{P}_{i}}(t),\quad i\in\{1,\ldots,c\},
\end{align*}
 and hence $\sum_{t=0}^{k-1}l_{G}(t)\leq\sum_{t=0}^{k-1}l_{\mathcal{P}_{i^{*}}}(t)$.
Therefore, 
\begin{align*}
\mathbb{P}[\sum_{t=0}^{k-1}l_{G}(t)>\rho_{G}k] & \leq\mathbb{P}[\sum_{t=0}^{k-1}l_{\mathcal{P}_{i^{*}}}(t)>\rho_{\mathcal{P}_{i^{*}}}k].
\end{align*}
 The result then follows, since $l_{\mathcal{P}_{i^{*}}}\in\Lambda_{\rho_{\mathcal{P}_{i^{*}}}}$.
\end{proof}

The scalars $\rho_{\mathcal{P}_{i}}$, $i\in\{1,\ldots,c\}$, in Proposition~\ref{Proposition-and-operation}
represent bounds for asymptotic packet failure ratios on different
paths of network $G$. Proposition~\ref{Proposition-and-operation}
indicates that the minimum of these scalars is also a bound for the
packet failure ratio of the whole network. Observe that if $\rho_{\mathcal{P}_{i}}=0$
for some path $\mathcal{P}_{i}$, then we have $\rho_{G}=0$, which
means that the state can be securely and reliably transmitted to the
controller at all time instants. This is because the transmission
on path $\mathcal{P}_{i}$ never fails. If, on the other hand, $\rho_{\mathcal{P}_{i}}=1$
for all paths $\mathcal{P}_{i}$, then $\rho_{G}=1$, indicating all
packet transmission attempts fail, almost surely. 

Note that in Proposition~\ref{Proposition-and-operation}, we do
not assume that $\{l_{\mathcal{P}_{i}}(t)\in\{0,1\}\}_{t\in\mathbb{N}_{0}}$
are mutually-independent processes. This allows us to deal with the
scenarios where transmission failures on different paths may depend
on each other. In particular, we can consider coordinated packet dropout
attacks of several malicious routers on different paths. For instance,
two malicious routers $v_{2}$ and $v_{3}$ in Fig.~\ref{Flo:operation}
may skip forwarding packets at the same time. Then transmissions on
paths $\mathcal{P}_{2}$ and $\mathcal{P}_{3}$ given in \eqref{eq:example-paths}
would both fail. Similarly, Proposition~\ref{Proposition-and-operation}
is also useful when links on different paths are attacked at the same
time by coordinated jamming attackers. 

Another scenario that can be explored through Proposition~\ref{Proposition-and-operation}
is related to packet drops by nonmalicious routers to prevent congestion
\cite{floyd1993random}. For example, a nonmalicious router $v_{4}$
in Fig.~\ref{Flo:operation} may choose to forward only one of the
packets coming from $v_{2}$ and $v_{3}$. Then, $l_{\mathcal{P}_{2}}(\cdot)$
and $l_{\mathcal{P}_{3}}(\cdot)$ would be dependent processes. In
particular, if there are no other failures in the network, then we
have $l_{\mathcal{P}_{2}}(t)=1-l_{\mathcal{P}_{3}}(t)$. 

We remark that by utilizing additional properties of the indicator
processes $l_{\mathcal{P}_{i}}(\cdot)$ for paths, we can obtain a
better asymptotic failure bound $\rho_{G}$ than the one provided
in Proposition~\ref{Proposition-and-operation}. In particular, if
one or more paths are known to be associated with random failures
and the corresponding indicator processes are mutually independent,
we can obtain tighter results than Proposition~\ref{Proposition-and-operation}.
To this end, we first present the following result on the properties
of a process that is obtained by using $\wedge$ operation on the
output processes of two mutually-independent hidden Markov models. 

\begin{theorem} \label{Theorem-HiddenMarkov} Consider the binary-valued
output processes $\{l^{(1)}\in\{0,1\}\}_{t\in\mathbb{N}_{0}}$ and
$\{l^{(2)}\in\{0,1\}\}_{t\in\mathbb{N}_{0}}$ of hidden Markov models
such that $l^{(1)}\in\Gamma_{p_{0}^{(1)},p_{1}^{(1)}}$, $l^{(2)}\in\Gamma_{p_{0}^{(2)},p_{1}^{(2)}}$
with $p_{0}^{(1)},p_{1}^{(1)},p_{0}^{(2)},p_{1}^{(2)}\in[0,1]$. Suppose
that the Markov chains $\{\theta^{l^{(1)}}(t)\in\Theta^{l^{(1)}}\}_{t\in\mathbb{N}_{0}}$
and $\{\theta^{l^{(2)}}(t)\in\Theta^{l^{(2)}}\}_{t\in\mathbb{N}_{0}}$
associated with the processes $l^{(1)}$ and $l^{(2)}$ are mutually
independent. Let $\tilde{p}_{0}\triangleq\min\{p_{0}^{(1)}+p_{1}^{(1)}p_{0}^{(2)},p_{0}^{(2)}+p_{1}^{(2)}p_{0}^{(1)},1\}$
and $\tilde{p}_{1}\triangleq p_{1}^{(1)}p_{1}^{(2)}$. Then the process
$\{\tilde{l}(t)\in\{0,1\}\}_{t\in\mathbb{N}_{0}}$ defined by 
\begin{align}
\tilde{l}(t) & =l^{(1)}(t)\wedge l^{(2)}(t),\quad t\in\mathbb{N}_{0},\label{eq:ltildedef}
\end{align}
 is the output process of a time-inhomogeneous hidden Markov model,
and moreover, $\tilde{l}\in\Gamma_{\tilde{p}_{0},\tilde{p}_{1}}$. 

\end{theorem}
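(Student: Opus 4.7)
The plan is to construct a product hidden Markov model whose output is exactly $\tilde{l}$, and then verify the required transition-sum bounds by exploiting that the two underlying Markov chains are mutually independent. First, I would define the joint process $\tilde{\theta}(t) \triangleq (\theta^{l^{(1)}}(t), \theta^{l^{(2)}}(t))$ taking values in $\tilde{\Theta} \triangleq \Theta^{l^{(1)}} \times \Theta^{l^{(2)}}$. Using the mutual independence assumption on the two chains, a routine conditional-probability calculation shows that $\tilde{\theta}$ is itself a time-inhomogeneous Markov chain whose transition probabilities factor as
\begin{align*}
\tilde{p}_{(q_1,q_2),(r_1,r_2)}(t) = p^{l^{(1)}}_{q_1,r_1}(t)\, p^{l^{(2)}}_{q_2,r_2}(t).
\end{align*}
Defining $\tilde{h}(q_1,q_2) \triangleq h^{l^{(1)}}(q_1) \wedge h^{l^{(2)}}(q_2)$, it follows from \eqref{eq:ltildedef} that $\tilde{l}(t) = \tilde{h}(\tilde{\theta}(t))$, so $\tilde{l}$ is indeed the output process of a time-inhomogeneous hidden Markov model.

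Next, I would identify the output preimages. Since $\tilde{h}(q_1,q_2) = 1$ precisely when both coordinates give output $1$, we have $\tilde{\Theta}_1^{\tilde{l}} = \Theta_1^{l^{(1)}} \times \Theta_1^{l^{(2)}}$. The bound on the transition sum into $\tilde{\Theta}_1^{\tilde{l}}$ then factors cleanly:
\begin{align*}
\sum_{(r_1,r_2) \in \tilde{\Theta}_1^{\tilde{l}}} \tilde{p}_{(q_1,q_2),(r_1,r_2)}(t) = \Bigl(\sum_{r_1 \in \Theta_1^{l^{(1)}}} p^{l^{(1)}}_{q_1,r_1}(t)\Bigr) \Bigl(\sum_{r_2 \in \Theta_1^{l^{(2)}}} p^{l^{(2)}}_{q_2,r_2}(t)\Bigr) \leq p_1^{(1)} p_1^{(2)} = \tilde{p}_1.
\end{align*}

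For the bound on the transition sum into $\tilde{\Theta}_0^{\tilde{l}}$, the key idea is to split this set into a disjoint union. Specifically, a pair $(r_1,r_2)$ lies in $\tilde{\Theta}_0^{\tilde{l}}$ iff at least one coordinate has output $0$, so
\begin{align*}
\tilde{\Theta}_0^{\tilde{l}} = \bigl(\Theta_0^{l^{(1)}} \times \Theta^{l^{(2)}}\bigr) \sqcup \bigl(\Theta_1^{l^{(1)}} \times \Theta_0^{l^{(2)}}\bigr).
\end{align*}
Summing $\tilde{p}_{(q_1,q_2),(\cdot,\cdot)}(t)$ over this disjoint union and using $\sum_{r_2\in\Theta^{l^{(2)}}} p^{l^{(2)}}_{q_2,r_2}(t) = 1$ together with the defining bounds for $\Gamma_{p_0^{(i)},p_1^{(i)}}$ yields the estimate $p_0^{(1)} + p_1^{(1)} p_0^{(2)}$. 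Swapping the roles of the two chains gives the symmetric bound $p_0^{(2)} + p_1^{(2)} p_0^{(1)}$, and the trivial probability bound gives $1$. Taking the minimum of these three establishes $\sum_{(r_1,r_2) \in \tilde{\Theta}_0^{\tilde{l}}} \tilde{p}_{(q_1,q_2),(r_1,r_2)}(t) \leq \tilde{p}_0$, as required.

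The argument is largely bookkeeping once the right decomposition is identified; I expect the only mildly subtle step to be the verification that the joint process $\tilde{\theta}$ is actually a Markov chain with the factored transition kernel, since this is where the mutual independence assumption on the two underlying chains is used. Everything else reduces to a disjoint-union decomposition of $\tilde{\Theta}_0^{\tilde{l}}$ and the factorization of double sums, both of which are direct consequences of the product structure.
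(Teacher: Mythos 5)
Your proposal is correct and follows essentially the same route as the paper's own proof: the same product chain $(\theta^{l^{(1)}}(t),\theta^{l^{(2)}}(t))$ with factored transition probabilities, the same identification $\Theta_{1}^{\tilde{l}}=\Theta_{1}^{l^{(1)}}\times\Theta_{1}^{l^{(2)}}$, and the same disjoint decomposition of $\Theta_{0}^{\tilde{l}}$ (applied symmetrically in the two chains to get both terms in the minimum, plus the trivial bound $1$). No gaps to report.
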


\begin{proof}Let $\Theta^{\tilde{l}}\triangleq\{(q^{(1)},q^{(2)})\colon q^{(1)}\in\Theta^{l^{(1)}},q^{(2)}\in\Theta^{l^{(2)}}\}$
and define the bivariate process $\{\theta^{\tilde{l}}(t)\in\Theta^{\tilde{l}}\}_{t\in\mathbb{N}_{0}}$
by 
\begin{align*}
\theta^{\tilde{l}}(t) & =(\theta^{l^{(1)}}(t),\theta^{l^{(2)}}(t)),\quad t\in\mathbb{N}_{0}.
\end{align*}
 It follows that $\{\theta^{\tilde{l}}(t)\in\Theta^{\tilde{l}}\}_{t\in\mathbb{N}_{0}}$
is a time-inhomogeneous Markov chain with initial distribution $\vartheta_{(q^{(1)},q^{(2)})}^{\tilde{l}}=\vartheta_{q^{(1)}}^{l^{(1)}}\vartheta_{q^{(2)}}^{l^{(2)}}$,
$q^{(1)}\in\Theta^{l^{(1)}},q^{(2)}\in\Theta^{l^{(2)}}$, and time-varying
transition probabilities $p_{(q^{(1)},q^{(2)}),(r^{(1)},r^{(2)})}^{\tilde{l}}(t)=p_{q^{(1)},r^{(1)}}^{l^{(1)}}(t)p_{q^{(2)},r^{(2)}}^{l^{(2)}}(t)$,
$t\in\mathbb{N}_{0}$. Here, for $j\in\{1,2\}$, $\vartheta^{l^{(j)}}$
and $p^{l^{(j)}}(\cdot)$ respectively denote the initial distribution
and the transition probability function for the Markov chain $\{\theta^{l^{(j)}}(t)\in\Theta^{l^{(j)}}\}_{t\in\mathbb{N}_{0}}$
associated with the output process $\{l^{(j)}\in\{0,1\}\}_{t\in\mathbb{N}_{0}}$.
Furthermore, it follows from \eqref{eq:ltildedef} that 
\begin{align*}
\tilde{l}(t) & =l^{(1)}(t)\wedge l^{(2)}(t)=l^{(1)}(t)l^{(2)}(t)=h^{l^{(1)}}(\theta^{l^{(1)}}(t))h^{l^{(2)}}(\theta^{l^{(2)}}(t)),\quad t\in\mathbb{N}_{0}.
\end{align*}
 Now let $h^{\tilde{l}}\colon\Theta^{\tilde{l}}\to\{0,1\}$ be given
by 
\begin{align*}
h^{\tilde{l}}((q,r)) & =h^{l^{(1)}}(q)h^{l^{(2)}}(r),\quad(q,r)\in\Theta^{\tilde{l}}.
\end{align*}
 It follows that \eqref{eq:hiddenmarkov} holds with $l$ replaced
with $\tilde{l}$. Thus, $\{\tilde{l}(t)\in\{0,1\}\}_{t\in\mathbb{N}_{0}}$
is the output process of a time-inhomogeneous hidden Markov model. 

Our next goal is to prove $\tilde{l}\in\Gamma_{\tilde{p}_{0},\tilde{p}_{1}}$
by showing 
\begin{align}
\sum_{(r^{(1)},r^{(2)})\in\Theta_{0}^{\tilde{l}}}p_{(q^{(1)},q^{(2)}),(r^{(1)},r^{(2)})}^{\tilde{l}}(t) & \leq\tilde{p}_{0},\label{eq:tildep0}\\
\sum_{(r^{(1)},r^{(2)})\in\Theta_{1}^{\tilde{l}}}p_{(q^{(1)},q^{(2)}),(r^{(1)},r^{(2)})}^{\tilde{l}}(t) & \leq\tilde{p}_{1},\quad t\in\mathbb{N}_{0}.\label{eq:tildep1}
\end{align}
 First, we show \eqref{eq:tildep0}. Observe that 
\begin{align}
\Theta_{0}^{\tilde{l}} & =\{(r^{(1)},r^{(2)})\in\Theta^{\tilde{l}}\colon h^{\tilde{l}}((r^{(1)},r^{(2)}))=0\}\nonumber \\
 & =\{(r^{(1)},r^{(2)})\in\Theta^{\tilde{l}}\colon h^{l^{(1)}}(r^{(1)})h^{l^{(2)}}(r^{(2)})=0\}.\label{eq:thetazerofirst}
\end{align}
 Now let $j_{1},j_{2}\in\{1,2\}$ be such that $j_{1}\neq j_{2}$.
It follows from \eqref{eq:thetazerofirst} that 
\begin{align}
\Theta_{0}^{\tilde{l}} & =\{(r^{(1)},r^{(2)})\in\Theta^{\tilde{l}}\colon h^{l^{(j_{1})}}(r^{(j_{1})})=1,h^{l^{(j_{2})}}(r^{(j_{2})})=0\}\nonumber \\
 & \quad\cup\{(r^{(1)},r^{(2)})\in\Theta^{\tilde{l}}\colon h^{l^{(j_{1})}}(r^{(j_{1})})=0\}.\label{eq:thetazerosecond}
\end{align}
Hence, by \eqref{eq:thetazerosecond}, we obtain 
\begin{align}
 & \sum_{(r^{(1)},r^{(2)})\in\Theta_{0}^{\tilde{l}}}p_{(q^{(1)},q^{(2)}),(r^{(1)},r^{(2)})}^{\tilde{l}}(t)\nonumber \\
 & \,\,=\sum_{r^{(j_{1})}\in\Theta_{0}^{l^{(j_{1})}},r^{(j_{2})}\in\Theta^{l^{(j_{2})}}}p_{(q^{(1)},q^{(2)}),(r^{(1)},r^{(2)})}^{\tilde{l}}(t)\nonumber \\
 & \,\,\quad+\sum_{r^{(j_{1})}\in\Theta_{1}^{l^{(j_{1})}},r^{(j_{2})}\in\Theta_{0}^{l^{(j_{2})}}}p_{(q^{(1)},q^{(2)}),(r^{(1)},r^{(2)})}^{\tilde{l}}(t)\nonumber \\
 & \,\,=\sum_{r^{(j_{1})}\in\Theta_{0}^{l^{(j_{1})}}}\sum_{r^{(j_{2})}\in\Theta^{l^{(j_{2})}}}p_{q^{(j_{1})},r^{(j_{1})}}^{l^{(j_{1})}}(t)p_{q^{(j_{2})},r^{(j_{2})}}^{l^{(j_{2})}}(t)\nonumber \\
 & \,\,\quad+\sum_{r^{(j_{1})}\in\Theta_{1}^{l^{(j_{1})}}}\sum_{r^{(j_{2})}\in\Theta_{0}^{l^{(j_{2})}}}p_{q^{(j_{1})},r^{(j_{1})}}^{l^{(j_{1})}}(t)p_{q^{(j_{2})},r^{(j_{2})}}^{l^{(j_{2})}}(t)\nonumber \\
 & \,\,=\sum_{r^{(j_{2})}\in\Theta^{l^{(j_{2})}}}p_{q^{(j_{2})},r^{(j_{2})}}^{l^{(j_{2})}}(t)\sum_{r^{(j_{1})}\in\Theta_{0}^{l^{(j_{1})}}}p_{q^{(j_{1})},r^{(j_{1})}}^{l^{(j_{1})}}(t)\nonumber \\
 & \,\,\quad+\sum_{r^{(j_{1})}\in\Theta_{1}^{l^{(j_{1})}}}p_{q^{(j_{1})},r^{(j_{1})}}^{l^{(j_{1})}}(t)\sum_{r^{(j_{2})}\in\Theta_{0}^{l^{(j_{2})}}}p_{q^{(j_{2})},r^{(j_{2})}}^{l^{(j_{2})}}(t).\label{eq:tildep0derivation1}
\end{align}
 Now, since $l^{(1)}\in\Gamma_{p_{0}^{(1)},p_{1}^{(1)}}$, $l^{(2)}\in\Gamma_{p_{0}^{(2)},p_{1}^{(2)}}$,
we have $\sum_{r^{(j_{1})}\in\Theta_{0}^{l^{(j_{1})}}}p_{q^{(j_{1})},r^{(j_{1})}}^{l^{(j_{1})}}(t)\leq p_{0}^{(j_{1})}$,
$\sum_{r^{(j_{1})}\in\Theta_{1}^{l^{(j_{1})}}}p_{q^{(j_{1})},r^{(j_{1})}}^{l^{(j_{1})}}(t)\leq p_{1}^{(j_{1})}$,
and $\sum_{r^{(j_{2})}\in\Theta_{0}^{l^{(j_{2})}}}p_{q^{(j_{2})},r^{(j_{2})}}^{l^{(j_{2})}}(t)\leq p_{0}^{(j_{2})}$,
$t\in\mathbb{N}_{0}$. Furthermore, we have $\sum_{r^{(j_{2})}\in\Theta^{l^{(j_{2})}}}p_{q^{(j_{2})},r^{(j_{2})}}^{l^{(j_{2})}}(t)=1$,
since the summation is over all possible states $r^{(j_{2})}\in\Theta^{l^{(j_{2})}}$.
Using these inequalities in \eqref{eq:tildep0derivation1}, we obtain
\begin{align}
\sum_{(r^{(1)},r^{(2)})\in\Theta_{0}^{\tilde{l}}}p_{(q^{(1)},q^{(2)}),(r^{(1)},r^{(2)})}^{\tilde{l}}(t) & \leq p_{0}^{(j_{1})}+p_{1}^{(j_{1})}p_{0}^{(j_{2})},\label{eq:tildep0derivation2}
\end{align}
 for $t\in\mathbb{N}_{0}$. Since \eqref{eq:tildep0derivation2} holds
for all $j_{1},j_{2}\in\{1,2\}$ such that $j_{1}\neq j_{2}$, we
have 
\begin{align}
\sum_{(r^{(1)},r^{(2)})\in\Theta_{0}^{\tilde{l}}}p_{(q^{(1)},q^{(2)}),(r^{(1)},r^{(2)})}^{\tilde{l}}(t) & \leq\min\{p_{0}^{(1)}+p_{1}^{(1)}p_{0}^{(2)},p_{0}^{(2)}+p_{1}^{(2)}p_{0}^{(1)}\},\quad t\in\mathbb{N}_{0}.\label{eq:tildep0derivation3}
\end{align}
Furthermore, noting that $\Theta_{0}^{\tilde{l}}\subset\Theta^{\tilde{l}}$,
we obtain $\sum_{(r^{(1)},r^{(2)})\in\Theta_{0}^{\tilde{l}}}p_{(q^{(1)},q^{(2)}),(r^{(1)},r^{(2)})}^{\tilde{l}}(t)\leq\sum_{(r^{(1)},r^{(2)})\in\Theta^{\tilde{l}}}p_{(q^{(1)},q^{(2)}),(r^{(1)},r^{(2)})}^{\tilde{l}}(t)=1$,
$t\in\mathbb{N}_{0}$. By using this inequality, it follows from \eqref{eq:tildep0derivation3}
that \eqref{eq:tildep0} holds. 

Next, we show \eqref{eq:tildep1}. Notice that 
\begin{align}
\Theta_{1}^{\tilde{l}} & =\{(r^{(1)},r^{(2)})\in\Theta^{\tilde{l}}\colon h^{\tilde{l}}((r^{(1)},r^{(2)}))=1\}\nonumber \\
 & =\{(r^{(1)},r^{(2)})\in\Theta^{\tilde{l}}\colon h^{l^{(1)}}(r^{(1)})h^{l^{(2)}}(r^{(2)})=1\}\nonumber \\
 & =\{(r^{(1)},r^{(2)})\in\Theta^{\tilde{l}}\colon h^{l^{(1)}}(r^{(1)})=1,h^{l^{(2)}}(r^{(2)})=1\}.\label{eq:thetaonefirst}
\end{align}
 Noting that $l^{(1)}\in\Gamma_{p_{0}^{(1)},p_{1}^{(1)}}$, $l^{(2)}\in\Gamma_{p_{0}^{(2)},p_{1}^{(2)}}$,
we use \eqref{eq:thetaonefirst} to obtain for $t\in\mathbb{N}_{0}$,
\begin{align*}
 & \sum_{(r^{(1)},r^{(2)})\in\Theta_{1}^{\tilde{l}}}p_{(q^{(1)},q^{(2)}),(r^{(1)},r^{(2)})}^{\tilde{l}}(t)=\sum_{r^{(1)}\in\Theta_{1}^{l^{(1)}},r^{(2)}\in\Theta_{1}^{l^{(2)}}}p_{(q^{(1)},q^{(2)}),(r^{(1)},r^{(2)})}^{\tilde{l}}(t)\\
 & \quad=\sum_{r^{(1)}\in\Theta_{1}^{l^{(1)}}}\sum_{r^{(2)}\in\Theta_{1}^{l^{(2)}}}p_{q^{(1)},r^{(1)}}^{l^{(1)}}(t)p_{q^{(2)},r^{(2)}}^{l^{(2)}}(t)\\
 & \quad=\sum_{r^{(1)}\in\Theta_{1}^{l^{(1)}}}p_{q^{(1)},r^{(1)}}^{l^{(1)}}(t)\sum_{r^{(2)}\in\Theta_{1}^{l^{(2)}}}p_{q^{(2)},r^{(2)}}^{l^{(2)}}(t)\leq p_{1}^{(1)}p_{1}^{(2)},
\end{align*}
which implies \eqref{eq:tildep1}. Now since \eqref{eq:tildep0} and
\eqref{eq:tildep1} hold, we have $\tilde{l}\in\Gamma_{\tilde{p}_{0},\tilde{p}_{1}}$.
\end{proof}

Theorem~\ref{Theorem-HiddenMarkov} shows that when two hidden Markov
output processes $l^{(1)}$ and $l^{(2)}$ are combined with $\wedge$
operation, the resulting process $\tilde{l}$ is also a hidden Markov
output process. Furthermore, Theorem~\ref{Theorem-HiddenMarkov}
provides the values of $\tilde{p}_{0},\tilde{p}_{1}$ for which $\tilde{l}\in\Gamma_{\tilde{p}_{0},\tilde{p}_{1}}$. 

This result can be applied to obtain $\rho_{G}$. For instance, consider
the case $c=2$, where $l_{\mathcal{P}_{1}}(\cdot)$ and $l_{\mathcal{P}_{2}}(\cdot)$
are the output processes of hidden Markov models such that $l_{\mathcal{P}_{1}}\in\Gamma_{p_{0}^{(1)},p_{1}^{(1)}}$,
$l_{\mathcal{P}_{2}}\in\Gamma_{p_{0}^{(2)},p_{1}^{(2)}}$. It follows
from Theorem~\ref{Theorem-HiddenMarkov} with $l^{(1)}=l_{\mathcal{P}_{1}}$,
$l^{(2)}=l_{\mathcal{P}_{2}}$, and $\tilde{l}=l_{G}$ that $l_{G}\in\Gamma_{\tilde{p}_{0},\tilde{p}_{1}}$
with $\tilde{p}_{1}\triangleq p_{1}^{(1)}p_{1}^{(2)}$. Now, suppose
that $p_{1}^{(1)}p_{1}^{(2)}<1$. Notice that since $l_{\mathcal{P}_{1}}\in\Gamma_{p_{0}^{(1)},p_{1}^{(1)}}$
and $l_{\mathcal{P}_{2}}\in\Gamma_{p_{0}^{(2)},p_{1}^{(2)}}$, we
have $l_{\mathcal{P}_{1}}\in\Lambda_{\rho_{\mathcal{P}_{1}}}$ and
$l_{\mathcal{P}_{2}}\in\Lambda_{\rho_{\mathcal{P}_{2}}}$ with $\rho_{\mathcal{P}_{1}}\in(p_{1}^{(1)},1]$
and $\rho_{\mathcal{P}_{2}}\in(p_{1}^{(2)},1]$. The direct application
of Proposition~\ref{Proposition-and-operation} gives $l_{G}\in\Lambda_{\rho_{G}}$
with $\rho_{G}=\min\{\rho_{\mathcal{P}_{1}},\rho_{\mathcal{P}_{2}}\}$.
However, by applying Proposition~\ref{PropositionGammaLambdaRelation},
we can obtain a smaller value for $\rho_{G}$. In fact by Proposition~\ref{PropositionGammaLambdaRelation},
we obtain $l_{G}\in\Lambda_{\rho_{G}}$ for any $\rho_{G}\in(p_{1}^{(1)}p_{1}^{(2)},1]$.
Notice that in the case where $c>2$, Theorem~\ref{Theorem-HiddenMarkov}
can be applied repeatedly. For instance, when $c=3$, we can use Theorem~\ref{Theorem-HiddenMarkov}
first for $l_{\mathcal{P}_{2}}(t)\wedge l_{\mathcal{P}_{3}}(t)$ and
then for $l_{G}(t)=l_{\mathcal{P}_{1}}(t)\wedge(l_{\mathcal{P}_{2}}(t)\wedge l_{\mathcal{P}_{3}}(t))$. 

Now consider the case where the graph $G$ possesses some paths with
indicator processes that are mutually independent but not all of them
are associated with random failures. Even for this case, we can obtain
results that are tighter than Proposition~\ref{Proposition-and-operation}.
To this end, we first provide the following result where we derive
properties of a process that is obtained by using $\wedge$ operation
on a hidden Markov output process $l^{(1)}\in\Gamma_{p_{0}^{(1)}p_{1}^{(1)}}$
and a binary-valued process $l^{(2)}\in\Lambda_{\rho^{(2)}}$. 

\begin{theorem} \label{Theorem-Gamma-Lambda} ~~Consider the binary-valued
processes $\{l^{(1)}\in\{0,1\}\}_{t\in\mathbb{N}_{0}}$ and $\{l^{(2)}\in\{0,1\}\}_{t\in\mathbb{N}_{0}}$
that satisfy $l^{(1)}\in\Gamma_{p_{0}^{(1)},p_{1}^{(1)}}$ and $l^{(2)}\in\Lambda_{\rho^{(2)}}$
with $p_{1}^{(1)}\rho^{(2)}<1$. Then the process $\{\tilde{l}(t)\in\{0,1\}\}_{t\in\mathbb{N}_{0}}$
defined by 
\begin{align}
\tilde{l}(t) & =l^{(1)}(t)\wedge l^{(2)}(t),\quad t\in\mathbb{N}_{0},\label{eq:ltildedef-1}
\end{align}
satisfies $\tilde{l}\in\Lambda_{\tilde{\rho}}$ for all $\tilde{\rho}\in(p_{1}^{(1)}\rho^{(2)},1]$.
\end{theorem}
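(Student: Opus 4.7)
The plan is to reduce the theorem to Lemma~\ref{KeyMarkovLemma} for the interior of the claimed range of $\tilde{\rho}$, and to cover the tail by a simple domination argument. I read the independence of $\{l^{(2)}(t)\}$ from the Markov chain $\{\theta^{l^{(1)}}(t)\}$ that drives $l^{(1)}$ as an implicit assumption, consistent with the paragraph preceding the theorem, where the indicator processes on distinct paths are taken to be mutually independent.

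First I would observe that, since $l^{(1)}(t),l^{(2)}(t)\in\{0,1\}$, we have $\tilde{l}(t)=l^{(1)}(t)\wedge l^{(2)}(t) = l^{(1)}(t)\, l^{(2)}(t) = h^{l^{(1)}}(\theta^{l^{(1)}}(t))\, l^{(2)}(t)$ for $t\in\mathbb{N}_0$, using the hidden-Markov representation of $l^{(1)}$. This puts $\sum_{t=0}^{k-1}\tilde{l}(t)$ exactly in the form $\sum_{t=0}^{k-1}h(\xi(t))\chi(t)$ treated by Lemma~\ref{KeyMarkovLemma}. I would then apply that lemma with $\xi(t)=\theta^{l^{(1)}}(t)$, $h=h^{l^{(1)}}$, $\Xi_1=\Theta_1^{l^{(1)}}$, $\chi(t)=l^{(2)}(t)$, $\tilde{p}=p_1^{(1)}$, and $\tilde{w}=\rho^{(2)}$. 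Hypothesis \eqref{eq:xicond} follows from $l^{(1)}\in\Gamma_{p_0^{(1)},p_1^{(1)}}$ via Definition~\ref{Definition-Hidden-Markov}, and hypothesis \eqref{eq:chicond} is exactly the defining inequality of $l^{(2)}\in\Lambda_{\rho^{(2)}}$. For every $\rho\in(p_1^{(1)}\rho^{(2)},\rho^{(2)})$, the lemma yields $\sum_{k=1}^{\infty}\mathbb{P}[\sum_{t=0}^{k-1}\tilde{l}(t)>\rho k]<\infty$, i.e.\ $\tilde{l}\in\Lambda_\rho$ on this interval.

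To extend to $\tilde{\rho}\in[\rho^{(2)},1]$, I would use the pointwise bound $\tilde{l}(t)\leq l^{(2)}(t)$, which gives
\[
\mathbb{P}\Big[\sum_{t=0}^{k-1}\tilde{l}(t)>\tilde{\rho}k\Big]\leq \mathbb{P}\Big[\sum_{t=0}^{k-1}l^{(2)}(t)>\rho^{(2)}k\Big],
\]
and the right side is summable in $k$ because $l^{(2)}\in\Lambda_{\rho^{(2)}}$; the case $\tilde{\rho}=1$ is immediate from $\tilde{l}$ being binary-valued. Gluing the two ranges delivers $\tilde{l}\in\Lambda_{\tilde{\rho}}$ for every $\tilde{\rho}\in(p_1^{(1)}\rho^{(2)},1]$.

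The main obstacle is justifying the independence required by Lemma~\ref{KeyMarkovLemma}, since the theorem statement does not list it explicitly. Without it, the product bound $p_1^{(1)}\rho^{(2)}$ cannot hold in general: for instance, if $l^{(1)}=l^{(2)}$, the sharpest attainable bound from Proposition~\ref{Proposition-and-operation} would be $\min(p_1^{(1)},\rho^{(2)})$, not the product. Treating the independence as implicit (via the surrounding context) removes this obstacle and makes the rest of the argument routine. A minor secondary check is the boundary cases $\rho^{(2)}\in\{0,1\}$: if $\rho^{(2)}=0$ then $\sum_{t=0}^{k-1}l^{(2)}(t)$ has summable tail probabilities at any positive threshold and the conclusion is immediate, while $\rho^{(2)}=1$ is handled by applying the lemma on $(p_1^{(1)},1)$ together with the trivial case $\tilde{\rho}=1$.
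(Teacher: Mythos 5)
Your proposal is correct and follows essentially the same route as the paper's proof: the identical reduction to Lemma~\ref{KeyMarkovLemma} with $\xi(t)=\theta^{l^{(1)}}(t)$, $\chi(t)=l^{(2)}(t)$, $\tilde{p}=p_{1}^{(1)}$, $\tilde{w}=\rho^{(2)}$ on the interval $(p_{1}^{(1)}\rho^{(2)},\rho^{(2)})$, followed by extending to $[\rho^{(2)},1]$ (the paper via $\Lambda_{\rho_{1}}\subseteq\Lambda_{\rho_{2}}$, you via the equivalent domination $\tilde{l}\leq l^{(2)}$), and your reading of the independence of $l^{(2)}$ from $\theta^{l^{(1)}}$ as an implicit hypothesis is exactly what the paper's use of the lemma presupposes. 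The only loose end is the degenerate case $p_{1}^{(1)}=0$ with $\rho^{(2)}>0$, where the lemma's requirement $\tilde{p}\in(0,1)$ fails and your domination step covers only $[\rho^{(2)},1]$; the paper dispatches this (and $p_{1}^{(1)}=1$, which your domination step already covers) separately via Proposition~\ref{Proposition-and-operation}, and in your write-up the same one-line fix suffices, e.g., noting that $p_{1}^{(1)}=0$ forces $l^{(1)}(t)=0$ almost surely for $t\geq1$.
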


\begin{proof} For the case where $p_{1}^{(1)}\rho^{(2)}=0$ and the
case where $p_{1}^{(1)}=1$, the result follows from Proposition~\ref{Proposition-and-operation}.
Now consider the case where $p_{1}^{(1)}\in(0,1),\rho^{(2)}\in(0,1]$.
We will first show that $\tilde{l}\in\Lambda_{\tilde{\rho}}$ for
$\tilde{\rho}\in(p_{1}^{(1)}\rho^{(2)},\rho^{(2)})$. The process
$\{\chi(t)\in\{0,1\}\}_{t\in\mathbb{N}_{0}}$ defined by $\chi(t)=l^{(2)}(t),t\in\mathbb{N}_{0}$,
satisfies \eqref{eq:chicond} with $\tilde{w}=\rho^{(2)}$. Furthermore,
$\{\xi(t)\in\{0,1\}\}_{t\in\mathbb{N}_{0}}$ defined by $\xi(t)=\theta^{l^{(1)}}(t),t\in\mathbb{N}_{0}$,
satisfies \eqref{eq:xicond} with $\tilde{p}=p_{1}^{(1)}$, $h=h^{l^{(1)}},$
$\Xi=\Theta^{l^{(1)}}$, and $\Xi_{1}=\Theta_{1}^{l^{(1)}}$. It then
follows from Lemma~\ref{KeyMarkovLemma} that for all $\rho\in(\tilde{p}\tilde{w},\tilde{w})$,
we have 
\begin{align*}
 & \sum_{k\in\mathbb{N}}\mathbb{P}[\sum_{t=0}^{k-1}h(\xi(t))\chi(t)>\rho k]=\sum_{k\in\mathbb{N}}\mathbb{P}[\sum_{t=0}^{k-1}l^{(1)}(t)\wedge l^{(2)}(t)>\rho k]<\infty.
\end{align*}
 Now since $\tilde{p}\tilde{w}=p_{1}^{(1)}\rho^{(2)}$ and $\tilde{w}=\rho^{(2)}$,
it follows that $\tilde{l}\in\Lambda_{\tilde{\rho}}$ holds for $\tilde{\rho}\in(p_{1}^{(1)}\rho^{(2)},\rho^{(2)})$.
This implies $\tilde{l}\in\Lambda_{\tilde{\rho}}$, $\tilde{\rho}\in[\rho^{(2)},1]$,
since $\Lambda_{\rho_{1}}\subseteq\Lambda_{\rho_{2}}$ for any $\rho_{1},\rho_{2}\in[0,1]$
such that $\rho_{1}\leq\rho_{2}$. Consequently, we have $\tilde{l}\in\Lambda_{\tilde{\rho}}$
for $\tilde{\rho}\in(p_{1}^{(1)}\rho^{(2)},1]=(p_{1}^{(1)}\rho^{(2)},\rho^{(2)})\cup[\rho^{(2)},1]$.
\end{proof}

Theorem~\ref{Theorem-Gamma-Lambda} is concerned with $\wedge$ operation
applied to a process $l^{(1)}(\cdot)$ from the hidden Markov model
class $\Gamma_{p_{0}^{(1)},p_{1}^{(1)}}$ and another process $l^{(2)}(\cdot)$
from the class $\Lambda_{\rho^{(2)}}$. It is shown that if $p_{1}^{(1)}\rho^{(2)}<1$,
then this operation results in a process $\tilde{l}$ that satisfies
$\tilde{l}\in\Lambda_{\tilde{\rho}}$ for all $\tilde{\rho}\in(p_{1}^{(1)}\rho^{(2)},1]$.
We note that the application of Proposition~\ref{Proposition-and-operation}
to this situation would allow us to show $\tilde{l}\in\Lambda_{\tilde{\rho}}$
for all $\tilde{\rho}\in(\min\{p_{1}^{(1)},\rho^{(2)}\},1]$. Notice
that Proposition~\ref{Proposition-and-operation} in this case is
conservative since $\min\{p_{1}^{(1)},\rho^{(2)}\}>p_{1}^{(1)}\rho^{(2)}$.
On the other hand, we note that Proposition~\ref{Proposition-and-operation}
allows us to deal with processes that are not mutually independent. 

\begin{remark} Theorem~\ref{Theorem-HiddenMarkov} explains the
joint effects of random transmission failures happening on two different
paths. On the other hand, Theorem~\ref{Theorem-Gamma-Lambda} is
concerned with the case where one path is associated with random packet
losses and the other path may be subject to more general types of
failures (random, malicious, or a combination of both). In this sense,
Theorem~\ref{Theorem-Gamma-Lambda} considers a more general situation
than that considered in Theorem~\ref{Theorem-HiddenMarkov}. In fact
Theorem~\ref{Theorem-Gamma-Lambda} can also be utilized for two
processes that are both associated with random failures. The difference
between Theorem~\ref{Theorem-HiddenMarkov} and Theorem~\ref{Theorem-Gamma-Lambda}
is that they describe the joint effects of the processes through different
classes ($\Gamma$ and $\Lambda$, respectively). 

We also note that when two processes associated with random failures
are involved, Theorem~\ref{Theorem-HiddenMarkov} is more advantageous
than Theorem~\ref{Theorem-Gamma-Lambda}, since Theorem~\ref{Theorem-Gamma-Lambda}
may result in conservatism in certain situations. Consider for example
$l^{(1)}\in\Gamma_{p_{0}^{(1)},p_{1}^{(1)}}$ and $l^{(2)}\in\Gamma_{p_{0}^{(2)},p_{1}^{(2)}}$
associated both with random failures and $l^{(3)}\in\Lambda_{\rho^{(3)}}$
associated with malicious packet losses. Since $l^{(2)}\in\Lambda_{\rho^{(2)}}$
with $\rho^{(2)}\in(p_{1}^{(2)},1]$, we can apply Theorem~\ref{Theorem-Gamma-Lambda}
to obtain $l^{(1)}\wedge l^{(2)}\in\Lambda_{\tilde{\rho}}$ for all
$\tilde{\rho}\in(p_{1}^{(1)}\rho^{(2)},1]$. Here, by Theorem~\ref{Theorem-Gamma-Lambda},
the process associated with joint failures ($l^{(1)}\wedge l^{(2)}$)
belongs to the class $\Lambda_{\tilde{\rho}}$. Now, in order to explain
the joint effects of $l^{(1)}\wedge l^{(2)}$ together with yet another
process $l^{(3)}\in\Lambda_{\rho^{(3)}}$, we would be required to
use Proposition~\ref{Proposition-and-operation}, which would result
in conservatism as explained after the proof of Theorem~\ref{Theorem-Gamma-Lambda}.
In order not to introduce unnecessary conservatism, we can first apply
Theorem~\ref{Theorem-HiddenMarkov} to $l^{(1)}\in\Gamma_{p_{0}^{(1)},p_{1}^{(1)}}$
and $l^{(2)}\in\Gamma_{p_{0}^{(2)},p_{1}^{(2)}}$ to obtain $l^{(1)}\wedge l^{(2)}\in\Gamma_{\tilde{p}_{0},\tilde{p}_{1}}$
with $\tilde{p}_{0}\triangleq\min\{p_{0}^{(1)}+p_{1}^{(1)}p_{0}^{(2)},p_{0}^{(2)}+p_{1}^{(2)}p_{0}^{(1)},1\}$
and $\tilde{p}_{1}\triangleq p_{1}^{(1)}p_{1}^{(2)}$. Notice now
that the process associated with joint failures ($l^{(1)}\wedge l^{(2)}$)
belongs to the class $\Gamma_{\tilde{p}_{0},\tilde{p}_{1}}$. Then,
we can apply Theorem~\ref{Theorem-Gamma-Lambda} for processes $l^{(1)}\wedge l^{(2)}\in\Gamma_{\tilde{p}_{0},\tilde{p}_{1}}$
and $l^{(3)}\in\Lambda_{\rho^{(3)}}$, obtaining a less conservative
result. \end{remark}

We summarize the results presented in this section in Table~\ref{TableForAnd},
where we indicate the classes obtained through the $\wedge$ operation. 

\begin{table}[t]
\caption{Comparison of the classes of processes obtained by combining processes
$l^{(1)}$ and $l^{(2)}$ of different classes through $\wedge$ operation.
For the case where $l^{(1)}\in\Lambda_{\rho^{(1)}}$ and $l^{(2)}\in\Lambda_{\rho^{(2)}}$,
the processes $l^{(1)}$ and $l^{(2)}$ can be dependent; for other
cases, $l^{(1)}$ and $l^{(2)}$ are assumed to be mutually independent. }
\label{TableForAnd} 

\renewcommand{\arraystretch}{1.2} 

{\centering \fontsize{7}{11.52}\selectfont 

\begin{tabular}{|c|c||c|}
\hline 
$l^{(1)}$ & $l^{(2)}$ & $l^{(1)}\wedge l^{(2)}$\tabularnewline
\hline 
\hline 
\multirow{3}{*}{$\Gamma_{p_{0}^{(1)},p_{1}^{(1)}}$} & \multirow{2}{*}{$\Gamma_{p_{0}^{(2)},p_{1}^{(2)}}$} & $\Gamma_{\tilde{p}_{0},\tilde{p}_{1}}$ with $\tilde{p}_{0}\triangleq\min\{p_{0}^{(1)}+p_{1}^{(1)}p_{0}^{(2)},p_{0}^{(2)}+p_{1}^{(2)}p_{0}^{(1)},1\}$
and $\tilde{p}_{1}\triangleq p_{1}^{(1)}p_{1}^{(2)}$\tabularnewline
 &  & (Theorem~\ref{Theorem-HiddenMarkov})\tabularnewline
\cline{2-3} 
 & $\Lambda_{\rho^{(2)}}$ & $\Lambda_{\tilde{\rho}}$ for $\tilde{\rho}\in(p_{1}^{(1)}\rho^{(2)},1]$
(Theorem~\ref{Theorem-Gamma-Lambda}) \tabularnewline
\hline 
\multirow{2}{*}{$\Lambda_{\rho^{(1)}}$} & $\Gamma_{p_{0}^{(2)},p_{1}^{(2)}}$ & $\Lambda_{\tilde{\rho}}$ for $\tilde{\rho}\in(p_{1}^{(2)}\rho^{(1)},1]$
(Theorem~\ref{Theorem-Gamma-Lambda}) \tabularnewline
\cline{2-3} 
 & $\Lambda_{\rho^{(2)}}$ & $\Lambda_{\tilde{\rho}}$ for $\tilde{\rho}\in[\min\{\rho^{(1)}\rho^{(2)}\},1]$
(Proposition~\ref{Proposition-and-operation})\tabularnewline
\hline 
\end{tabular}

} 
\end{table}

\begin{remark}

Although in this paper we assume delay-free packet transmissions,
in practice delays are inevitable. Moreover, different routes represented
by different paths may induce different amounts of delay for the transmission
of the same packet. This is because the numbers of links on different
paths $\mathcal{P}_{i}$ may be different, and furthermore, links
on those paths may have different transmission properties. As a result,
the receiver node (e.g., the controller node $v_{\mathrm{C}}$ in
$G$) may obtain the same packet from different paths at different
times. It is clear that the performance can be improved if the controller
acts immediately upon receiving the first uncorrupted state packet
from the path with the shortest delay. Furthermore, we can introduce
setups that utilize a delay threshold: a path $\mathcal{P}_{i}$ that
faces delay beyond the threshold can be considered to have faced a
failure ($l_{\mathcal{P}_{i}}(t)=0$) in transmission of the state
packet $x(t)$. In such setups, the delay properties of the links
on different paths have to be taken into account to fully model the
processes $\{l_{\mathcal{P}_{i}}(t)\in\{0,1\}\}_{t\in\mathbb{N}_{0}}$.
This requires further analysis that is different from what we provide
in this paper. In particular, the effects of different delay models
for links as well as the combined effects of delays and failures (due
to data corruption and packet dropping) need to be investigated. \end{remark} 

\section{Packet Transmission Failures on Paths of a Network}

\label{Sec:FailuresOnPaths}

So far, in the previous section, we have looked at how packet failures
on the paths of a network affect the overall packet transmission rate.
In this section, our goal is to explore the effect of the failures
at individual nodes and links of a path. To this end, we first consider
the scenario where packet transmission failures occur due to only
data corruption. We then explore the case where data corruption and
packet drops may occur on the same path. 

\subsection{Characterization for Data Corrupting Paths}

Let $\mathcal{P}_{i,j}$ denote the $j$th edge on path $\mathcal{P}_{i}$.
We use the binary-valued process $\{l_{\mathcal{P}_{i}}^{\mathcal{P}_{i,j}}(t)\in\{0,1\}\}_{t\in\mathbb{N}_{0}}$
to denote the data corruption indicator for this link. For example,
in Fig.~\ref{Flo:operation}, consider the second edge $\mathcal{P}_{1,2}=(v_{1},v_{\mathrm{C}})$
of path $\mathcal{P}_{1}$ in \eqref{eq:example-paths}. The state
$l_{\mathcal{P}_{1}}^{(v_{1},v_{\mathrm{C}})}(t)=1$ indicates that
at time $t$, the packet flowing on path $\mathcal{P}_{1}$ faces
data corruption on the link $(v_{1},v_{\mathrm{C}})$. This may be
due to a jamming attack on this link, or due to channel noise, and
moreover, it may also be the case that the node $v_{1}$ maliciously
corrupts the packet. 

The notation for the data corruption indicator allows us to distinguish
data corruption issues when we consider the same communication link
on different paths. For instance, communication link $(v_{4},v_{\mathrm{C}})$
is on both $\mathcal{P}_{2}$ and $\mathcal{P}_{3}$. It may be the
case that node $v_{4}$ corrupts all packets transmitted along $\mathcal{P}_{2}$,
but none of the packets transmitted along $\mathcal{P}_{3}$. This
situation can be described by setting $l_{\mathcal{P}_{2}}^{(v_{4},v_{\mathrm{C}})}(t)=1$,
$t\in\mathbb{N}_{0}$, and $l_{\mathcal{P}_{3}}^{(v_{4},v_{\mathrm{C}})}(t)=0$,
$t\in\mathbb{N}_{0}$.

State packet transmitted through path $\mathcal{P}_{i}$ is subject
to data corruption if there is data corruption on one (or more) of
the edges in this path. Hence, for each $i\in\{1,\ldots,c\}$, 
\begin{align}
l_{\mathcal{P}_{i}}(t) & =l_{\mathcal{P}_{i}}^{\mathcal{P}_{i,1}}(t)\vee l_{\mathcal{P}_{i}}^{\mathcal{P}_{i,2}}(t)\vee\cdots\vee l_{\mathcal{P}_{i}}^{\mathcal{P}_{i,|\mathcal{P}_{i}|}}(t).\label{eq:lp-or}
\end{align}

The next result shows that an upper-bound for the asymptotic transmission
failure rate of a path can be given as the sum of the failure rate
bounds of the links on the path. 

\begin{proposition}\label{Proposition-lp-or} Consider $\{l_{\mathcal{P}_{i}}(t)\in\{0,1\}\}_{t\in\mathbb{N}_{0}}$
given by \eqref{eq:lp-or}. Assume $l_{\mathcal{P}_{i}}^{\mathcal{P}_{i,j}}\in\Lambda_{\rho_{\mathcal{P}_{i}}^{\mathcal{P}_{i,j}}}$,
$j\in\{1,\ldots,|\mathcal{P}_{i}|\}$, where $\rho_{\mathcal{P}_{i}}^{\mathcal{P}_{i,j}}\in[0,1]$,
$j\in\{1,\ldots,|\mathcal{P}_{i}|\}$, satisfy $\sum_{j=1}^{|\mathcal{P}_{i}|}\rho_{\mathcal{P}_{i}}^{\mathcal{P}_{i,j}}\leq1$.
Then $l_{\mathcal{P}_{i}}\in\Lambda_{\rho_{\mathcal{P}_{i}}}$ with
$\rho_{\mathcal{P}_{i}}\triangleq\sum_{j=1}^{|\mathcal{P}_{i}|}\rho_{\mathcal{P}_{i}}^{\mathcal{P}_{i,j}}$. 

\end{proposition}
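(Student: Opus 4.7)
The plan is to argue directly from the definition of the class $\Lambda_{\rho}$, using a union-bound style decomposition of the event $\{\sum_{t=0}^{k-1} l_{\mathcal{P}_i}(t) > \rho_{\mathcal{P}_i} k\}$ into events concerning the individual link indicators. The key observation is that for binary-valued processes the \emph{or}-operation is bounded by the ordinary sum:
\begin{align*}
l_{\mathcal{P}_i}(t) \;=\; \bigvee_{j=1}^{|\mathcal{P}_i|} l_{\mathcal{P}_i}^{\mathcal{P}_{i,j}}(t) \;\leq\; \sum_{j=1}^{|\mathcal{P}_i|} l_{\mathcal{P}_i}^{\mathcal{P}_{i,j}}(t), \qquad t \in \mathbb{N}_0.
\end{align*}
Summing over $t \in \{0,\ldots,k-1\}$ and recalling $\rho_{\mathcal{P}_i} = \sum_{j=1}^{|\mathcal{P}_i|} \rho_{\mathcal{P}_i}^{\mathcal{P}_{i,j}}$, the event $\{\sum_{t=0}^{k-1} l_{\mathcal{P}_i}(t) > \rho_{\mathcal{P}_i} k\}$ forces at least one of the link sums $\sum_{t=0}^{k-1} l_{\mathcal{P}_i}^{\mathcal{P}_{i,j}}(t)$ to strictly exceed its own threshold $\rho_{\mathcal{P}_i}^{\mathcal{P}_{i,j}} k$. (If every link sum were at most $\rho_{\mathcal{P}_i}^{\mathcal{P}_{i,j}} k$, then their sum, which dominates $\sum_{t=0}^{k-1} l_{\mathcal{P}_i}(t)$, would be at most $\rho_{\mathcal{P}_i} k$, contradicting the assumption.)

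Once this set-theoretic inclusion is in place, I would apply the union bound to obtain
\begin{align*}
\mathbb{P}\Bigl[\sum_{t=0}^{k-1} l_{\mathcal{P}_i}(t) > \rho_{\mathcal{P}_i} k\Bigr] \;\leq\; \sum_{j=1}^{|\mathcal{P}_i|} \mathbb{P}\Bigl[\sum_{t=0}^{k-1} l_{\mathcal{P}_i}^{\mathcal{P}_{i,j}}(t) > \rho_{\mathcal{P}_i}^{\mathcal{P}_{i,j}} k\Bigr], \qquad k \in \mathbb{N},
\end{align*}
and then sum over $k$ and swap the finite inner sum with the infinite $k$-sum. The hypothesis $l_{\mathcal{P}_i}^{\mathcal{P}_{i,j}} \in \Lambda_{\rho_{\mathcal{P}_i}^{\mathcal{P}_{i,j}}}$ guarantees that each of the $|\mathcal{P}_i|$ resulting series is finite, and finitely many finite sums add to something finite, giving $l_{\mathcal{P}_i} \in \Lambda_{\rho_{\mathcal{P}_i}}$.

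There is essentially no technical obstacle here; the argument does not need the auxiliary hidden-Markov machinery of Lemma~\ref{KeyMarkovLemma}, since no independence or Markov structure is being invoked. The only point worth stating carefully is the deterministic pointwise bound $\bigvee_j a_j \le \sum_j a_j$ for $a_j \in \{0,1\}$ and the inclusion of events that it implies; the hypothesis $\sum_{j=1}^{|\mathcal{P}_i|} \rho_{\mathcal{P}_i}^{\mathcal{P}_{i,j}} \leq 1$ is what makes $\rho_{\mathcal{P}_i}$ a legitimate parameter of the class $\Lambda_{\rho_{\mathcal{P}_i}}$ (so that $\rho_{\mathcal{P}_i} \in [0,1]$), but it plays no role in the convergence step itself.
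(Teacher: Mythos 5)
Your proposal is correct and follows essentially the same route as the paper's proof: the pointwise bound $\bigvee_j l_{\mathcal{P}_i}^{\mathcal{P}_{i,j}}(t)\leq\sum_j l_{\mathcal{P}_i}^{\mathcal{P}_{i,j}}(t)$, the observation that exceeding the aggregate threshold forces some link sum to exceed its own threshold (the paper phrases this via $\mathbb{P}[\sum_j\gamma_j>0]\leq\sum_j\mathbb{P}[\gamma_j>0]$ with $\gamma_j=\sum_{t=0}^{k-1}l_{\mathcal{P}_i}^{\mathcal{P}_{i,j}}(t)-\rho_{\mathcal{P}_i}^{\mathcal{P}_{i,j}}k$), the union bound, and summability over $k$ from the hypothesis on each link. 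Your remark that the condition $\sum_j\rho_{\mathcal{P}_i}^{\mathcal{P}_{i,j}}\leq 1$ only ensures $\rho_{\mathcal{P}_i}$ is an admissible class parameter is also accurate.
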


\begin{proof}By \eqref{eq:lp-or}, $\sum_{t=0}^{k-1}l_{\mathcal{P}_{i}}(t)\leq\sum_{j=1}^{|\mathcal{P}_{i}|}\sum_{t=0}^{k-1}l_{\mathcal{P}_{i}}^{\mathcal{P}_{i,j}}(t)$.
Hence, 
\begin{align}
 & \mathbb{P}[\sum_{t=0}^{k-1}l_{\mathcal{P}_{i}}(t)>\rho_{\mathcal{P}_{i}}k]\leq\mathbb{P}[\sum_{j=1}^{|\mathcal{P}_{i}|}\sum_{t=0}^{k-1}l_{\mathcal{P}_{i}}^{\mathcal{P}_{i,j}}(t)>\rho_{\mathcal{P}_{i}}k]\leq\sum_{j=1}^{|\mathcal{P}_{i}|}\mathbb{P}[\sum_{t=0}^{k-1}l_{\mathcal{P}_{i}}^{\mathcal{P}_{i,j}}(t)>\rho_{\mathcal{P}_{i}}^{\mathcal{P}_{i,j}}k],\label{eq:separate-sum}
\end{align}
for $k\in\mathbb{N}$, where to obtain the last inequality, we used
\begin{align*}
\mathbb{P}[\sum_{j=1}^{|\mathcal{P}_{i}|}\gamma_{j} & >0]\leq\mathbb{P}[\cup_{j=1}^{|\mathcal{P}_{i}|}\{\gamma_{j}>0\}]\leq\sum_{j=1}^{|\mathcal{P}_{i}|}\mathbb{P}[\gamma_{j}>0]
\end{align*}
 with $\gamma_{j}\triangleq\sum_{t=0}^{k-1}l_{\mathcal{P}_{i}}^{\mathcal{P}_{i,j}}(t)-\rho_{\mathcal{P}_{i}}^{\mathcal{P}_{i,j}}k$.
The result then follows from \eqref{eq:separate-sum} since $l_{\mathcal{P}_{i}}^{\mathcal{P}_{i,j}}\in\Lambda_{\rho_{\mathcal{P}_{i}}^{\mathcal{P}_{i,j}}}$,
$j\in\{1,\ldots,|\mathcal{P}_{i}|\}$. \end{proof} 

Proposition~\ref{Proposition-lp-or} can be used to characterize
overall failures on a path. Note that in Proposition~\ref{Proposition-lp-or},
the indicators for the links are not necessarily mutually-independent
processes. This allows us to model failures on different links that
depend on each other. In particular, we can explore the effect of
interference between links as well as coordinated jamming attacks
targeting several links at the same time. 

Note that in certain cases the result provided in Proposition~\ref{Proposition-lp-or}
can be improved in terms of conservativeness. In particular, if one
or more links in a path are known to be associated with random failures
and the corresponding indicator processes are mutually independent,
we can obtain less conservative results in comparison to Proposition~\ref{Proposition-lp-or}.
The following result is the counterpart of Theorem~\ref{Theorem-HiddenMarkov}
for $\vee$ operation and it is concerned with output processes of
two mutually-independent hidden Markov models. 

\begin{theorem} \label{Theorem-HiddenMarkov-1} Consider the binary-valued
output processes $\{l^{(1)}\in\{0,1\}\}_{t\in\mathbb{N}_{0}}$ and
$\{l^{(2)}\in\{0,1\}\}_{t\in\mathbb{N}_{0}}$ of hidden Markov models
such that $l^{(1)}\in\Gamma_{p_{0}^{(1)},p_{1}^{(1)}}$, $l^{(2)}\in\Gamma_{p_{0}^{(2)},p_{1}^{(2)}}$
with $p_{0}^{(1)},p_{1}^{(1)},p_{0}^{(2)},p_{1}^{(2)}\in[0,1]$. Suppose
that the Markov chains $\{\theta^{l^{(1)}}(t)\in\Theta^{l^{(1)}}\}_{t\in\mathbb{N}_{0}}$
and $\{\theta^{l^{(2)}}(t)\in\Theta^{l^{(2)}}\}_{t\in\mathbb{N}_{0}}$
associated with the processes $l^{(1)}$ and $l^{(2)}$ are mutually
independent. Let $\tilde{p}_{0}\triangleq p_{0}^{(1)}p_{0}^{(2)}$
and $\tilde{p}_{1}\triangleq\min\{p_{1}^{(1)}+p_{0}^{(1)}p_{1}^{(2)},p_{1}^{(2)}+p_{0}^{(2)}p_{1}^{(1)},1\}$.
Then the process $\{\tilde{l}(t)\in\{0,1\}\}_{t\in\mathbb{N}_{0}}$
defined by 
\begin{align}
\tilde{l}(t) & =l^{(1)}(t)\vee l^{(2)}(t),\quad t\in\mathbb{N}_{0},\label{eq:ltildedef-2}
\end{align}
 is the output process of a time-inhomogeneous hidden Markov model,
and moreover, $\tilde{l}\in\Gamma_{\tilde{p}_{0},\tilde{p}_{1}}$. 

\end{theorem}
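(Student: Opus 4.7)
The plan is to mirror the proof of Theorem~\ref{Theorem-HiddenMarkov} almost verbatim, exploiting the duality between $\wedge$ and $\vee$: the operation $\vee$ swaps the roles of $\Theta_0^{\tilde{l}}$ and $\Theta_1^{\tilde{l}}$ compared to the $\wedge$ case, which interchanges the bookkeeping for $\tilde{p}_0$ and $\tilde{p}_1$.

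First, I would construct the joint Markov chain on the product state space $\Theta^{\tilde{l}}\triangleq\Theta^{l^{(1)}}\times\Theta^{l^{(2)}}$ by setting $\theta^{\tilde{l}}(t)=(\theta^{l^{(1)}}(t),\theta^{l^{(2)}}(t))$. Mutual independence of the two chains gives that $\{\theta^{\tilde{l}}(t)\}_{t\in\mathbb{N}_0}$ is itself a time-inhomogeneous Markov chain with initial distribution $\vartheta^{\tilde{l}}_{(q^{(1)},q^{(2)})}=\vartheta^{l^{(1)}}_{q^{(1)}}\vartheta^{l^{(2)}}_{q^{(2)}}$ and transition probabilities $p^{\tilde{l}}_{(q^{(1)},q^{(2)}),(r^{(1)},r^{(2)})}(t)=p^{l^{(1)}}_{q^{(1)},r^{(1)}}(t)\,p^{l^{(2)}}_{q^{(2)},r^{(2)}}(t)$. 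Defining $h^{\tilde{l}}((q,r))\triangleq h^{l^{(1)}}(q)\vee h^{l^{(2)}}(r)$, I would verify that \eqref{eq:hiddenmarkov} holds for $\tilde{l}$, so $\tilde{l}$ is the output process of a time-inhomogeneous hidden Markov model.

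Next, I would verify the two bounds in Definition~\ref{Definition-Hidden-Markov}. For $\tilde{p}_0$ the argument is the easy one: $\Theta_0^{\tilde{l}}=\{(r^{(1)},r^{(2)}):h^{l^{(1)}}(r^{(1)})=0,\,h^{l^{(2)}}(r^{(2)})=0\}$, so the sum factors cleanly as $\bigl(\sum_{r^{(1)}\in\Theta_0^{l^{(1)}}}p^{l^{(1)}}_{q^{(1)},r^{(1)}}(t)\bigr)\bigl(\sum_{r^{(2)}\in\Theta_0^{l^{(2)}}}p^{l^{(2)}}_{q^{(2)},r^{(2)}}(t)\bigr)\leq p_0^{(1)}p_0^{(2)}=\tilde{p}_0$.

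The main obstacle (as in Theorem~\ref{Theorem-HiddenMarkov}) is the bookkeeping for $\tilde{p}_1$. I would decompose, for either choice of $j_1\neq j_2$ in $\{1,2\}$,
\begin{align*}
\Theta_1^{\tilde{l}} &= \{(r^{(1)},r^{(2)}):h^{l^{(j_1)}}(r^{(j_1)})=1\}\\
&\quad\cup\{(r^{(1)},r^{(2)}):h^{l^{(j_1)}}(r^{(j_1)})=0,\,h^{l^{(j_2)}}(r^{(j_2)})=1\},
\end{align*}
into disjoint pieces, apply the product structure of the transition probabilities, and use $\sum_{r^{(j_2)}\in\Theta^{l^{(j_2)}}}p^{l^{(j_2)}}_{q^{(j_2)},r^{(j_2)}}(t)=1$ together with the hypothesis bounds to obtain $\sum_{\Theta_1^{\tilde{l}}}p^{\tilde{l}}_{(q^{(1)},q^{(2)}),(\cdot,\cdot)}(t)\leq p_1^{(j_1)}+p_0^{(j_1)}p_1^{(j_2)}$. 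Taking the minimum over the two symmetric choices $(j_1,j_2)=(1,2)$ and $(j_1,j_2)=(2,1)$, and capping at $1$ via $\Theta_1^{\tilde{l}}\subset\Theta^{\tilde{l}}$, delivers exactly $\tilde{p}_1=\min\{p_1^{(1)}+p_0^{(1)}p_1^{(2)},\,p_1^{(2)}+p_0^{(2)}p_1^{(1)},\,1\}$, which completes the proof.
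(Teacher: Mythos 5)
Your proposal is correct, and the first half (product chain $\theta^{\tilde{l}}(t)=(\theta^{l^{(1)}}(t),\theta^{l^{(2)}}(t))$ with product transition probabilities and output map $h^{\tilde{l}}((q,r))=h^{l^{(1)}}(q)\vee h^{l^{(2)}}(r)$, which the paper writes equivalently as $h^{l^{(1)}}(q)+(1-h^{l^{(1)}}(q))h^{l^{(2)}}(r)$) coincides with the paper's. Where you diverge is the verification of $\tilde{l}\in\Gamma_{\tilde{p}_{0},\tilde{p}_{1}}$: you mirror the proof of Theorem~\ref{Theorem-HiddenMarkov} directly, decomposing $\Theta_{1}^{\tilde{l}}$ into the disjoint pieces $\{h^{l^{(j_{1})}}(r^{(j_{1})})=1\}$ and $\{h^{l^{(j_{1})}}(r^{(j_{1})})=0,\,h^{l^{(j_{2})}}(r^{(j_{2})})=1\}$, factorizing the product transition probabilities, and using $\sum_{r^{(j_{2})}\in\Theta^{l^{(j_{2})}}}p^{l^{(j_{2})}}_{q^{(j_{2})},r^{(j_{2})}}(t)=1$, then minimizing over $(j_{1},j_{2})$ and capping at $1$; the factorized bound $p_{0}^{(1)}p_{0}^{(2)}$ for $\Theta_{0}^{\tilde{l}}$ is immediate. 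The paper instead uses a duality argument: it observes that $l\in\Gamma_{p_{0},p_{1}}$ if and only if the complement $l_{\mathrm{c}}(t)=1-l(t)$ satisfies $l_{\mathrm{c}}\in\Gamma_{p_{1},p_{0}}$, notes $1-\tilde{l}(t)=l_{\mathrm{c}}^{(1)}(t)\wedge l_{\mathrm{c}}^{(2)}(t)$, and then invokes Theorem~\ref{Theorem-HiddenMarkov} as a black box to get $\tilde{l}_{\mathrm{c}}\in\Gamma_{\tilde{p}_{1},\tilde{p}_{0}}$, hence $\tilde{l}\in\Gamma_{\tilde{p}_{0},\tilde{p}_{1}}$. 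Your route is self-contained and makes explicit exactly where each hypothesis bound from Definition~\ref{Definition-Hidden-Markov} enters; the paper's route is shorter, avoids repeating the counting argument, and exposes the $\wedge$/$\vee$ duality that also explains why $\tilde{p}_{0}$ and $\tilde{p}_{1}$ swap form between the two theorems. Both are valid proofs of the stated result.
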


\begin{proof}Let $\Theta^{\tilde{l}}\triangleq\{(q^{(1)},q^{(2)})\colon q^{(1)}\in\Theta^{l^{(1)}},q^{(2)}\in\Theta^{l^{(2)}}\}$
and define the bivariate process $\{\theta^{\tilde{l}}(t)\in\Theta^{\tilde{l}}\}_{t\in\mathbb{N}_{0}}$
by 
\begin{align*}
\theta^{\tilde{l}}(t) & =(\theta^{l^{(1)}}(t),\theta^{l^{(2)}}(t)),\quad t\in\mathbb{N}_{0}.
\end{align*}
 It follows that $\{\theta^{\tilde{l}}(t)\in\Theta^{\tilde{l}}\}_{t\in\mathbb{N}_{0}}$
is a time-inhomogeneous Markov chain with initial distribution $\vartheta_{(q^{(1)},q^{(2)})}^{\tilde{l}}=\vartheta_{q^{(1)}}^{l^{(1)}}\vartheta_{q^{(2)}}^{l^{(2)}}$,
$q^{(1)}\in\Theta^{l^{(1)}},q^{(2)}\in\Theta^{l^{(2)}}$, and time-varying
transition probabilities $p_{(q^{(1)},q^{(2)}),(r^{(1)},r^{(2)})}^{\tilde{l}}(t)=p_{q^{(1)},r^{(1)}}^{l^{(1)}}(t)p_{q^{(2)},r^{(2)}}^{l^{(2)}}(t)$,
$t\in\mathbb{N}_{0}$. Here, for $j\in\{1,2\}$, $\vartheta^{l^{(j)}}$
and $p^{l^{(j)}}(\cdot)$ respectively denote the initial distribution
and the transition probability function for the Markov chain $\{\theta^{l^{(j)}}(t)\in\Theta^{l^{(j)}}\}_{t\in\mathbb{N}_{0}}$
associated with the output process $\{l^{(j)}\in\{0,1\}\}_{t\in\mathbb{N}_{0}}$.
Furthermore, it follows from \eqref{eq:ltildedef} that 
\begin{align*}
\tilde{l}(t) & =l^{(1)}(t)\vee l^{(2)}(t)=l^{(1)}(t)+(1-l^{(1)}(t))l^{(2)}(t)\\
 & =h^{l^{(1)}}(\theta^{l^{(1)}}(t))+(1-h^{l^{(1)}}(\theta^{l^{(1)}}(t)))h^{l^{(2)}}(\theta^{l^{(2)}}(t)),
\end{align*}
 for $t\in\mathbb{N}_{0}$. Now let $h^{\tilde{l}}\colon\Theta^{\tilde{l}}\to\{0,1\}$
be given by 
\begin{align*}
h^{\tilde{l}}((q,r)) & =h^{l^{(1)}}(q)+(1-h^{l^{(1)}}(q))h^{l^{(2)}}(r),\quad(q,r)\in\Theta^{\tilde{l}}.
\end{align*}
 It follows that \eqref{eq:hiddenmarkov} holds with $l$ replaced
with $\tilde{l}$. Thus, $\{\tilde{l}(t)\in\{0,1\}\}_{t\in\mathbb{N}_{0}}$
is the output process of a time-inhomogeneous hidden Markov model. 

To show $\tilde{l}\in\Gamma_{\tilde{p}_{0},\tilde{p}_{1}}$, we can
apply Theorem~\ref{Theorem-HiddenMarkov}. To this end, we first
note that for an output process $l\in\Gamma_{p_{0},p_{1}}$, the complementing
process $\{l_{\mathrm{c}}(t)\in\{0,1\}\}_{t\in\mathbb{N}_{0}}$ given
by $l_{\mathrm{c}}(t)=1-l(t)$, $t\in\mathbb{N}_{0}$, is the output
process of a time-inhomogeneous hidden Markov model, for which $\Theta^{l_{\mathrm{c}}}=\Theta^{l}$,
$\theta^{l_{\mathrm{c}}}(t)=\theta^{l}(t)$, $t\in\mathbb{N}_{0}$.
Furthermore, we have $\Theta_{0}^{l_{\mathrm{c}}}=\Theta_{1}^{l}$
and $\Theta_{1}^{l_{\mathrm{c}}}=\Theta_{0}^{l}$, since $h^{l_{\mathrm{c}}}((q,r))=1-h^{l}((q,r))$,
$(q,r)\in\Theta^{l_{\mathrm{c}}}=\Theta^{l}.$ As a consequence, we
have $l\in\Gamma_{p_{0},p_{1}}$ if and only if $l_{\mathrm{c}}\in\Gamma_{p_{1},p_{0}}$.
In the following, we show $\tilde{l}\in\Gamma_{\tilde{p}_{0},\tilde{p}_{1}}$
by proving that $\{\tilde{l}_{\mathrm{c}}(t)\in\{0,1\}\}_{t\in\mathbb{N}_{0}}$
given by $\tilde{l}_{\mathrm{c}}(t)=1-\tilde{l}(t)$, $t\in\mathbb{N}_{0}$,
satisfies $\tilde{l}_{\mathrm{c}}\in\Gamma_{\tilde{p}_{1},\tilde{p}_{0}}$. 

First, observe that $\tilde{l}_{\mathrm{c}}(t)=1-\tilde{l}(t)=1-l^{(1)}(t)\vee l^{(2)}(t)=(1-l^{(1)}(t))\wedge(1-l^{(2)}(t))=l_{\mathrm{c}}^{(1)}(t)\wedge l_{\mathrm{c}}^{(2)}(t)$,
where $l_{\mathrm{c}}^{(i)}(t)=1-l^{(i)}(t),$ $i\in\{1,2\}$. Since
$l^{(1)}\in\Gamma_{p_{0}^{(1)},p_{1}^{(1)}},l^{(2)}\in\Gamma_{p_{0}^{(2)},p_{1}^{(2)}}$,
we have $l_{\mathrm{c}}^{(1)}\in\Gamma_{p_{1}^{(1)},p_{0}^{(1)}}$,
$l_{\mathrm{c}}^{(2)}\in\Gamma_{p_{1}^{(2)},p_{0}^{(2)}}$. Finally,
noting that $\tilde{l}_{\mathrm{c}}(\cdot)$ is obtained by using
$\wedge$ operation on processes $l_{\mathrm{c}}^{(1)}(\cdot)$, $l_{\mathrm{c}}^{(2)}(\cdot)$,
we can use Theorem~\ref{Theorem-HiddenMarkov}, to obtain $\tilde{l}_{\mathrm{c}}\in\Gamma_{\tilde{p}_{1},\tilde{p}_{0}}$,
which implies $\tilde{l}\in\Gamma_{\tilde{p}_{0},\tilde{p}_{1}}$.
\end{proof}

Theorem~\ref{Theorem-HiddenMarkov-1} shows that when two hidden
Markov output processes $l^{(1)}$ and $l^{(2)}$ are combined with
$\vee$ operation, the resulting process $\tilde{l}$ is also a hidden
Markov output process. Furthermore, it provides the values of $\tilde{p}_{0},\tilde{p}_{1}$
for which $\tilde{l}\in\Gamma_{\tilde{p}_{0},\tilde{p}_{1}}$. 

Theorem~\ref{Theorem-HiddenMarkov-1} can be applied in obtaining
$\rho_{\mathcal{P}_{i}}$ for a given path $\mathcal{P}_{i}$. Consider
for example a path $\mathcal{P}_{i}$ with $|\mathcal{P}_{i}|=2$
links. Assume that the failure indicator processes $l_{\mathcal{P}_{i}}^{\mathcal{P}_{i,1}}(\cdot)$,
$l_{\mathcal{P}_{i}}^{\mathcal{P}_{i,2}}(\cdot)$ associated with
the links are mutually independent and $l_{\mathcal{P}_{i}}^{\mathcal{P}_{i,1}}\in\Gamma_{p_{0}^{(1)},p_{1}^{(1)}}$,
$l_{\mathcal{P}_{i}}^{\mathcal{P}_{i,2}}\in\Gamma_{p_{0}^{(2)},p_{1}^{(2)}}$.
It follows from Theorem~\ref{Theorem-HiddenMarkov-1} with $l^{(1)}(t)=l_{\mathcal{P}_{i}}^{\mathcal{P}_{i,1}}(t)$
and $l^{(2)}(t)=l_{\mathcal{P}_{i}}^{\mathcal{P}_{i,2}}(t)$ that
$l_{\mathcal{P}_{i}}\in\Gamma_{\tilde{p}_{0},\tilde{p}_{1}}$ with
$\tilde{p}_{0}=p_{0}^{(1)}p_{0}^{(2)}$ and $\tilde{p}_{1}=\min\{p_{1}^{(1)}+p_{0}^{(1)}p_{1}^{(2)},p_{1}^{(2)}+p_{0}^{(2)}p_{1}^{(1)},1\}$.
Furthermore, if $\min\{p_{1}^{(1)}+p_{0}^{(1)}p_{1}^{(2)},p_{1}^{(2)}+p_{0}^{(2)}p_{1}^{(1)}\}<1$,
then by Proposition~\ref{PropositionGammaLambdaRelation}, we have
$l_{\mathcal{P}_{i}}\in\Lambda_{\rho_{\mathcal{P}_{i}}}$ with $\rho_{\mathcal{P}_{i}}\in(\min\{p_{1}^{(1)}+p_{0}^{(1)}p_{1}^{(2)},p_{1}^{(2)}+p_{0}^{(2)}p_{1}^{(1)}\},1]$. 

On the other hand, the direct application of Proposition~\ref{Proposition-lp-or}
provides a conservative result. To apply Proposition~\ref{Proposition-lp-or},
notice first that $l_{\mathcal{P}_{i}}^{\mathcal{P}_{i,1}}\in\Lambda_{\rho_{\mathcal{P}_{i,1}}}$,
$l_{\mathcal{P}_{i}}^{\mathcal{P}_{i,2}}\in\Lambda{}_{\rho_{\mathcal{P}_{i,2}}}$
with $\rho_{\mathcal{P}_{i,1}}\in(p_{1}^{(1)},1]$ and $\rho_{\mathcal{P}_{i,2}}\in(p_{1}^{(2)},1]$.
Hence, by Proposition~\ref{Proposition-lp-or}, we obtain the value
$\rho_{\mathcal{P}_{i}}=\rho_{\mathcal{P}_{i,1}}+\rho_{\mathcal{P}_{i,2}}$,
which implies that $l_{\mathcal{P}_{i}}\in\Lambda_{\rho_{\mathcal{P}_{i}}}$
with $\rho_{\mathcal{P}_{i}}\in(p_{1}^{(1)}+p_{1}^{(2)},1]$. The
inequality $\min\{p_{1}^{(1)}+p_{0}^{(1)}p_{1}^{(2)},p_{1}^{(2)}+p_{0}^{(2)}p_{1}^{(1)}\}\leq p_{1}^{(1)}+p_{1}^{(2)}$
allows us to conclude that Theorem~\ref{Theorem-HiddenMarkov-1}
provides a less conservative range for $\rho_{\mathcal{P}_{i}}$ compared
to Proposition~\ref{Proposition-lp-or}. 

In certain scenarios a path $\mathcal{P}_{i}$ may be composed of
communication links with mutually-independent indicator processes
some of which are not associated with random failures. In such scenarios,
it is again possible to obtain results that are less conservative
than those in Proposition~\ref{Proposition-lp-or}. Specifically,
in the following result we derive properties of a process that is
obtained by using $\vee$ operation on a hidden Markov output process
$l^{(1)}\in\Gamma_{p_{0}^{(1)}p_{1}^{(1)}}$ and a binary-valued process
$l^{(2)}\in\Lambda_{\rho^{(2)}}$.

\begin{theorem} \label{Theorem-Gamma-Lambda-1} ~~ Consider the
binary-valued processes $\{l^{(1)}\in\{0,1\}\}_{t\in\mathbb{N}_{0}}$
and $\{l^{(2)}\in\{0,1\}\}_{t\in\mathbb{N}_{0}}$ that satisfy $l^{(1)}\in\Gamma_{p_{0}^{(1)},p_{1}^{(1)}}$
and $l^{(2)}\in\Lambda_{\rho^{(2)}}$ with $p_{1}^{(1)}+p_{0}^{(1)}\rho^{(2)}<1$.
Then the process $\{\tilde{l}(t)\in\{0,1\}\}_{t\in\mathbb{N}_{0}}$
defined by 
\begin{align}
\tilde{l}(t) & =l^{(1)}(t)\vee l^{(2)}(t),\quad t\in\mathbb{N}_{0},\label{eq:ltildedef-1-1}
\end{align}
satisfies $\tilde{l}\in\Lambda_{\tilde{\rho}}$ for all $\tilde{\rho}\in(p_{1}^{(1)}+p_{0}^{(1)}\rho^{(2)},1]$.
\end{theorem}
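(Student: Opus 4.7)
The plan is to reduce the $\vee$ operation to a sum via the pointwise identity
\[
\tilde{l}(t) = l^{(1)}(t) \vee l^{(2)}(t) = l^{(1)}(t) + \bigl(1 - l^{(1)}(t)\bigr) l^{(2)}(t),
\]
then split the tail event $\{\sum_{t=0}^{k-1}\tilde{l}(t) > \tilde{\rho}k\}$ into two pieces whose probabilities are individually summable. Specifically, for any $\tilde{\rho}\in(p_{1}^{(1)}+p_{0}^{(1)}\rho^{(2)},1)$, I would pick a small $\epsilon>0$ and scalars $\rho_1,\rho_2$ with $\rho_1 > p_{1}^{(1)}$, $\rho_2 > p_{0}^{(1)}\rho^{(2)}$, and $\rho_1+\rho_2 = \tilde{\rho}$. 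The union bound then yields
\[
\mathbb{P}\Bigl[\sum_{t=0}^{k-1}\tilde{l}(t) > \tilde{\rho}k\Bigr] \leq \mathbb{P}\Bigl[\sum_{t=0}^{k-1}l^{(1)}(t) > \rho_1 k\Bigr] + \mathbb{P}\Bigl[\sum_{t=0}^{k-1}\bigl(1-l^{(1)}(t)\bigr)l^{(2)}(t) > \rho_2 k\Bigr],
\]
and it suffices to show both summands are summable in $k$.

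The first summand is controlled by Proposition~\ref{PropositionGammaLambdaRelation}: since $l^{(1)}\in\Gamma_{p_{0}^{(1)},p_{1}^{(1)}}$ and $\rho_1\in(p_{1}^{(1)},1]$, we obtain $l^{(1)}\in\Lambda_{\rho_1}$ directly. For the second summand I would invoke Lemma~\ref{KeyMarkovLemma} with $\xi(t)=\theta^{l^{(1)}}(t)$, $\chi(t)=l^{(2)}(t)$, and the binary function $h(q)=1-h^{l^{(1)}}(q)$, so that $h(\xi(t))\chi(t) = (1-l^{(1)}(t))l^{(2)}(t)$ and the associated set $\Xi_1$ is precisely $\Theta_{0}^{l^{(1)}}$. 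The hypothesis $l^{(1)}\in\Gamma_{p_{0}^{(1)},p_{1}^{(1)}}$ guarantees condition~\eqref{eq:xicond} with $\tilde{p}=p_{0}^{(1)}$, and $l^{(2)}\in\Lambda_{\rho^{(2)}}$ directly gives condition~\eqref{eq:chicond} with $\tilde{w}=\rho^{(2)}$. Because $\rho_2$ was chosen so that $p_{0}^{(1)}\rho^{(2)} = \tilde{p}\tilde{w} < \rho_2$, Lemma~\ref{KeyMarkovLemma} yields a summable tail bound (after, if necessary, shrinking $\rho_2$ slightly to stay below $\tilde{w}$ and absorbing the excess into $\rho_1$, whose admissible range extends up to $1$).

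The main obstacle is the degenerate parameter ranges where Lemma~\ref{KeyMarkovLemma} does not strictly apply, namely $\rho^{(2)}\in\{0,1\}$, $p_{0}^{(1)}\in\{0,1\}$, or $\tilde{\rho}=1$. I would handle these as separate cases: if $\tilde{\rho}=1$, the conclusion is immediate from binary-valuedness; if $p_{0}^{(1)}=0$, the second summand vanishes almost surely because $\theta^{l^{(1)}}(t)$ never enters $\Theta_{0}^{l^{(1)}}$, so only the first summand matters and $\tilde{\rho}>p_{1}^{(1)}$ suffices; if $\rho^{(2)}=0$, then $l^{(2)}\in\Lambda_0$ so that $\sum l^{(2)}>0$ has summable tail and the second term is trivial; and the range $\tilde{w}=\rho^{(2)}\in(0,1)$ required by the lemma is reached by possibly passing to a slightly larger $\tilde{\rho}'\in(p_{1}^{(1)}+p_{0}^{(1)}\rho^{(2)},\tilde{\rho}]$ and invoking the monotonicity $\Lambda_{\rho_1'}\subseteq\Lambda_{\rho_2'}$ for $\rho_1'\leq\rho_2'$. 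A side subtlety, tacitly inherited from Theorem~\ref{Theorem-Gamma-Lambda} and Table~\ref{TableForAnd}, is that Lemma~\ref{KeyMarkovLemma} requires $\chi$ and $\xi$ to be independent; consequently the proof implicitly uses the mutual independence of $l^{(2)}$ and the underlying Markov chain $\theta^{l^{(1)}}$, which the theorem's setting takes for granted.
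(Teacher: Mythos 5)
Your proposal is correct and follows essentially the same route as the paper's proof: the identity $\tilde{l}(t)=l^{(1)}(t)+(1-l^{(1)}(t))l^{(2)}(t)$, a split $\tilde{\rho}=\tilde{\rho}_1+\tilde{\rho}_2$ with a union bound, Proposition~\ref{PropositionGammaLambdaRelation} for the first term, and Lemma~\ref{KeyMarkovLemma} applied to the complemented output (equivalently, the chain $\theta^{l^{(1)}}$ with $h=1-h^{l^{(1)}}$, $\tilde{p}=p_{0}^{(1)}$, $\tilde{w}=\rho^{(2)}$) for the second, with the same care to keep $\tilde{\rho}_2<\rho^{(2)}$ and the same treatment of the degenerate cases and the tacit independence assumption. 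No substantive differences to report.
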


\begin{proof} For the case where $p_{0}^{(1)}=1$ and the case where
$\rho^{(2)}=0$, the result follows from  Proposition~\ref{Proposition-lp-or}.
Here, we consider the case where $p_{0}^{(1)}\in(0,1),\rho^{(2)}\in(0,1]$.
Notice that since $\tilde{l}(\cdot)$ is a binary-valued process,
we have $\Lambda_{\tilde{\rho}}$ for $\tilde{\rho}=1$. Next, we
show $\tilde{l}\in\Lambda_{\tilde{\rho}}$ for $\tilde{\rho}\in(p_{1}^{(1)}+p_{0}^{(1)}\rho^{(2)},1)$.
First, \eqref{eq:ltildedef-1-1} implies 
\begin{align}
\tilde{l}(t) & =l^{(1)}(t)+(1-l^{(1)}(t))l^{(2)}(t),\quad t\in\mathbb{N}_{0}.\label{eq:or-alternative-representation}
\end{align}
Let $\tilde{L}(k)\triangleq\sum_{t=0}^{k-1}\tilde{l}(t)$. It follows
from \eqref{eq:or-alternative-representation} that 
\begin{align}
\tilde{L}(k) & =\sum_{t=0}^{k-1}l^{(1)}(t)+\sum_{t=0}^{k-1}(1-l^{(1)}(t))l^{(2)}(t),\quad k\in\mathbb{N}.\label{eq:Linusefulform}
\end{align}
Now, let $\epsilon\triangleq\tilde{\rho}-p_{1}^{(1)}-p_{0}^{(1)}\rho^{(2)}$,
$\epsilon_{2}\triangleq\min\{\frac{\epsilon}{2},\frac{\rho^{(2)}-p_{0}^{(1)}\rho^{(2)}}{2}\}$,
$\epsilon_{1}\triangleq\epsilon-\epsilon_{2}$, and define $\tilde{\rho}_{1}\triangleq p_{1}^{(1)}+\epsilon_{1}$,
$\tilde{\rho}_{2}\triangleq p_{0}^{(1)}\rho^{(2)}+\epsilon_{2}$.
Furthermore, let $\tilde{L}_{1}(k)\triangleq\sum_{t=0}^{k-1}l^{(1)}(t)$
and $\tilde{L}_{2}(k)\triangleq\sum_{t=0}^{k-1}(1-l^{(1)}(t))l^{(2)}(t)$.
We then have 
\begin{align}
\mathbb{P}[\tilde{L}(k)>\tilde{\rho}k] & =\mathbb{P}[\tilde{L}_{1}(k)+\tilde{L}_{2}(k)>\tilde{\rho}_{1}k+\tilde{\rho}_{2}k]\nonumber \\
 & \leq\mathbb{P}\left[\left\{ \tilde{L}_{1}(k)>\tilde{\rho}_{1}k\right\} \cup\left\{ \tilde{L}_{2}(k)>\tilde{\rho}_{2}k\right\} \right]\nonumber \\
 & \leq\mathbb{P}[\tilde{L}_{1}(k)>\tilde{\rho}_{1}k]+\mathbb{P}[\tilde{L}_{2}(k)>\tilde{\rho}_{2}k].\label{eq:keyprobabilityinequality}
\end{align}
In the following we show $\sum_{k=1}^{\infty}\mathbb{P}[\tilde{L}_{1}(k)>\tilde{\rho}_{1}k]<\infty$
and $\sum_{k=1}^{\infty}\mathbb{P}[\tilde{L}_{2}(k)>\tilde{\rho}_{2}k]<\infty$. 

First, note that 
\begin{align}
\tilde{\rho}_{1} & =p_{1}^{(1)}+\epsilon-\epsilon_{2}=\max\{p_{1}^{(1)}+\frac{\epsilon}{2},p_{1}^{(1)}+\epsilon-\frac{\rho^{(2)}-p_{0}^{(1)}\rho^{(2)}}{2}\}\nonumber \\
 & =\max\{\frac{p_{1}^{(1)}+\tilde{\rho}-p_{0}^{(1)}\rho^{(2)}}{2},\tilde{\rho}-p_{0}^{(1)}\rho^{(2)}-\frac{\rho^{(2)}-p_{0}^{(1)}\rho^{(2)}}{2}\}\nonumber \\
 & =\max\{\frac{p_{1}^{(1)}+\tilde{\rho}-p_{0}^{(1)}\rho^{(2)}}{2},\frac{2\tilde{\rho}-\rho^{(2)}(1+p_{0}^{(1)})}{2}\}.\label{eq:rho1ineq}
\end{align}
As $\frac{p_{1}^{(1)}+\tilde{\rho}-p_{0}^{(1)}\rho^{(2)}}{2}<1$ and
$\frac{2\tilde{\rho}-\rho^{(2)}(1+p_{0}^{(1)})}{2}<1$, it holds from
\eqref{eq:rho1ineq} that $\tilde{\rho}_{1}\in(p_{1}^{(1)},1)$. Since
$l^{(1)}\in\Gamma_{p_{0}^{(1)},p_{1}^{(1)}}$, we can use Proposition~\ref{PropositionGammaLambdaRelation}
with $\rho$ replaced with $\tilde{\rho}$ and $l$ replaced with
$l^{(1)}$ to obtain $\sum_{k=1}^{\infty}\mathbb{P}[\tilde{L}_{1}(k)>\tilde{\rho}_{1}k]<\infty$. 

Next, we use Lemma~\ref{KeyMarkovLemma} to show that $\sum_{k=1}^{\infty}\mathbb{P}[\tilde{L}_{2}(k)>\tilde{\rho}_{2}k]<\infty$.
To obtain this result, we first observe that $\tilde{\rho}_{2}>p_{0}^{(1)}\rho^{(2)}$,
since $\epsilon_{2}>0$. Moreover, 
\begin{align*}
\tilde{\rho}_{2} & =p_{0}^{(1)}\rho^{(2)}+\min\{\frac{\epsilon}{2},\frac{\rho^{(2)}-p_{0}^{(1)}\rho^{(2)}}{2}\}\leq p_{0}^{(1)}\rho^{(2)}+\frac{\rho^{(2)}-p_{0}^{(1)}\rho^{(2)}}{2}\\
 & <p_{0}^{(1)}\rho^{(2)}+\rho^{(2)}-p_{0}^{(1)}\rho^{(2)}=\rho^{(2)},
\end{align*}
and hence, we have $\tilde{\rho}_{2}\in(p_{0}^{(1)}\rho^{(2)},\rho^{(2)})$.
Let $\{l_{\mathrm{c}}^{(1)}(t)\in\{0,1\}\}_{t\in\mathbb{N}_{0}}$
be defined by $l_{\mathrm{c}}^{(1)}(t)=1-l^{(1)}(t),$ $t\in\mathbb{N}_{0}$.
Since $l^{(1)}\in\Gamma_{p_{0}^{(1)},p_{1}^{(1)}}$, we have $l_{\mathrm{c}}^{(1)}\in\Gamma_{p_{1}^{(1)},p_{0}^{(1)}}$.
Furthermore, since $l^{(2)}\in\Lambda_{\rho^{(2)}}$, conditions \eqref{eq:xicond},
\eqref{eq:chicond} in the Lemma~\ref{KeyMarkovLemma} hold with
$\tilde{p}=p_{0}^{(1)}$ and $\tilde{w}=\rho^{(2)}$, together with
processes $\{\xi(t)\in\{0,1\}\}_{t\in\mathbb{N}_{0}}$ and $\{\chi(t)\in\{0,1\}\}_{t\in\mathbb{N}_{0}}$
defined by setting $\xi(t)=\theta^{l_{\mathrm{c}}}(t)$, $\chi(t)=l_{\mathrm{M}}(t)$,
$t\in\mathbb{N}_{0}$. Now, we have $\tilde{L}_{2}(k)=\sum_{t=0}^{k-1}\xi(t)\chi(t)$
and hence, Lemma~\ref{KeyMarkovLemma} implies $\sum_{k=1}^{\infty}\mathbb{P}[\tilde{L}_{2}(k)>\tilde{\rho}_{2}k]<\infty$. 

Finally, by \eqref{eq:keyprobabilityinequality}, we arrive at 
\begin{align*}
\sum_{k=1}^{\infty}\mathbb{P}[\tilde{L}(k)>\tilde{\rho}k] & \leq\sum_{k=1}^{\infty}\mathbb{P}[\tilde{L}_{1}(k)>\tilde{\rho}_{1}k]+\sum_{k=1}^{\infty}\mathbb{P}[\tilde{L}_{2}(k)>\tilde{\rho}_{2}k]<\infty,
\end{align*}
which shows that $\tilde{l}\in\Lambda_{\tilde{\rho}}$ for all $\tilde{\rho}\in(p_{1}^{(1)}+p_{0}^{(1)}\rho^{(2)},1]$.
\end{proof}

Theorem~\ref{Theorem-Gamma-Lambda-1} is concerned with $\vee$ operation
applied to a process $l^{(1)}(\cdot)$ from the hidden Markov model
class $\Gamma_{p_{0}^{(1)},p_{1}^{(1)}}$ and another process $l^{(2)}(\cdot)$
from the class $\Lambda_{\rho^{(2)}}$. It is shown that the $\vee$
operation results in a process $\tilde{l}$ that satisfies $\tilde{l}\in\Lambda_{\tilde{\rho}}$
for all $\tilde{\rho}\in(p_{1}^{(1)}+p_{0}^{(1)}\rho^{(2)},1]$. Notice
that the application of Proposition~\ref{Proposition-and-operation}
to this situation would allow us to show $\tilde{l}\in\Lambda_{\tilde{\rho}}$
for all $\tilde{\rho}\in(p_{1}^{(1)}+\rho^{(2)},1]$. Proposition~\ref{Proposition-lp-or}
in this case is conservative since $p_{1}^{(1)}+\rho^{(2)}\geq p_{1}^{(1)}+p_{0}^{(1)}\rho^{(2)}$
(and $p_{1}^{(1)}+\rho^{(2)}>p_{1}^{(1)}+p_{0}^{(1)}\rho^{(2)}$ if
$p_{0}^{(1)}<1$). We remark that the advantage of Proposition~\ref{Proposition-lp-or}
may be that it allows us to deal with processes that are not mutually
independent. 

\begin{table}[t]
\caption{Comparison of the classes of processes obtained by combining processes
$l^{(1)}$ and $l^{(2)}$ of different classes through $\vee$ operation.
For the case where $l^{(1)}\in\Lambda_{\rho^{(1)}}$ and $l^{(2)}\in\Lambda_{\rho^{(2)}}$,
the processes $l^{(1)}$ and $l^{(2)}$ can be dependent; for other
cases, $l^{(1)}$ and $l^{(2)}$ are assumed to be mutually independent. }
\label{TableForOr} 

\renewcommand{\arraystretch}{1.2} 

{\centering \fontsize{7}{11.52}\selectfont 

\begin{tabular}{|c|c||c|}
\hline 
$l^{(1)}$ & $l^{(2)}$ & $l^{(1)}\vee l^{(2)}$\tabularnewline
\hline 
\hline 
\multirow{3}{*}{$\Gamma_{p_{0}^{(1)},p_{1}^{(1)}}$} & \multirow{2}{*}{$\Gamma_{p_{0}^{(2)},p_{1}^{(2)}}$} & $\Gamma_{\tilde{p}_{0},\tilde{p}_{1}}$ with $\tilde{p}_{0}\triangleq p_{0}^{(1)}p_{0}^{(2)}$
and $\tilde{p}_{1}\triangleq\min\{p_{1}^{(1)}+p_{0}^{(1)}p_{1}^{(2)},p_{1}^{(2)}+p_{0}^{(2)}p_{1}^{(1)},1\}$\tabularnewline
 &  & (Theorem~\ref{Theorem-HiddenMarkov-1})\tabularnewline
\cline{2-3} 
 & $\Lambda_{\rho^{(2)}}$ & $\Lambda_{\tilde{\rho}}$ for $\tilde{\rho}\in(p_{1}^{(1)}+p_{0}^{(1)}\rho^{(2)},1]$
(Theorem~\ref{Theorem-Gamma-Lambda-1}) \tabularnewline
\hline 
\multirow{2}{*}{$\Lambda_{\rho^{(1)}}$} & $\Gamma_{p_{0}^{(2)},p_{1}^{(2)}}$ & $\Lambda_{\tilde{\rho}}$ for $\tilde{\rho}\in(p_{1}^{(2)}+p_{0}^{(2)}\rho^{(1)},1]$
(Theorem~\ref{Theorem-Gamma-Lambda-1}) \tabularnewline
\cline{2-3} 
 & $\Lambda_{\rho^{(2)}}$ & $\Lambda_{\tilde{\rho}}$ for $\tilde{\rho}\in[\rho^{(1)}+\rho^{(2)},1]$
(Proposition~\ref{Proposition-lp-or})\tabularnewline
\hline 
\end{tabular}

} 
\end{table}

The results presented so far in this section are summarized in Table~\ref{TableForOr}.
There we show the classes of processes obtained by using the $\vee$
operation. 
\begin{table}[t]
\caption{Comparison of the conservativeness of our results with respect to
the range of $\tilde{\rho}$ of the process $\tilde{l}\in\Lambda_{\tilde{\rho}}$
obtained by combining mutually independent processes $l^{(1)}\in\Gamma_{p_{0}^{(1)},p_{1}^{(1)}}$
and $l^{(2)}\in\Gamma_{p_{0}^{(2)},p_{1}^{(2)}}$ through $\wedge$
or $\vee$ operations (i.e., $\tilde{l}(t)=l^{(1)}(t)\wedge l^{(2)}(t)$
or $\tilde{l}(t)=l^{(1)}(t)\vee l^{(2)}(t)$). For each operation,
less conservative results are marked with $\star$ and they provide
larger ranges of $\tilde{\rho}$ (indicating that $\tilde{l}\in\Lambda_{\tilde{\rho}}$
holds with smaller $\tilde{\rho}$ values close to the lower bound
of the range). However, these less conservative results are applicable
only when $l^{(1)}$ and $l^{(2)}$ are mutually independent. The
sequences of results that are not marked with $\star$ are more conservative,
but they are also applicable to scenarios where $l^{(1)}$ and $l^{(2)}$
are not necessarily mutually independent. }
\label{TableConservatism}

\renewcommand{\arraystretch}{1.2} \vskip 5pt

{\centering \fontsize{7}{11.52}\selectfont  

\begin{tabular}{|c|c||c|}
\hline 
Operation & Applied Results & $\tilde{\rho}$ range\tabularnewline
\hline 
\hline 
\multirow{2}{*}{$\wedge$} & $\star$ First Theorem~\ref{Theorem-HiddenMarkov}, then Proposition~\ref{PropositionGammaLambdaRelation}  & $(p_{1}^{(1)}p_{1}^{(2)},1]$ \tabularnewline
\cline{2-3} 
 & First Proposition~\ref{PropositionGammaLambdaRelation}, then Proposition~\ref{Proposition-and-operation} & $(\min\{p_{1}^{(1)},p_{1}^{(2)}\},1]$\tabularnewline
\hline 
\multirow{2}{*}{$\vee$} & $\star$ First Theorem~\ref{Theorem-HiddenMarkov-1}, then Proposition~\ref{PropositionGammaLambdaRelation}  & $(\min\{p_{1}^{(1)}+p_{0}^{(1)}p_{1}^{(2)},p_{1}^{(2)}+p_{0}^{(2)}p_{1}^{(1)}\},1]$ \tabularnewline
\cline{2-3} 
 & First Proposition~\ref{PropositionGammaLambdaRelation}, then Proposition~\ref{Proposition-lp-or} & $(p_{1}^{(1)}+p_{1}^{(2)},1]$\tabularnewline
\hline 
\end{tabular}

} 
\end{table}

\begin{remark} By utilizing \eqref{eq:lg-multiple-paths} and \eqref{eq:lp-or}
together with the results so far presented, we can obtain $\rho$
values for which the overall packet exchange failure indicator $l$
satisfies $l\in\Lambda_{\rho}$. As discussed in Section~\ref{sec:Control-over-Multi-hop},
$l\in\Lambda_{\rho}$ implies that the average number of packet exchange
failures is upper-bounded by $\rho$ (i.e., $\limsup_{k\to\infty}\frac{1}{k}\sum_{t=0}^{k-1}l(t)\leq\rho,$
almost surely).\emph{ }We would like $\rho$ to be a tight upper bound
so that we can avoid additional conservatism in checking the closed-loop
stability with Theorem~\ref{Stability-Theorem}. As discussed in
Section~4 of \cite{cetinkaya-tac}, conditions \eqref{eq:betacond}--\eqref{eq:betaandvarphicond}
of Theorem~\ref{Stability-Theorem} are tight for scalar systems
when $\lim_{k\to\infty}\frac{1}{k}\sum_{t=0}^{k-1}l(t)=\rho$. However,
notice that in certain cases, $l$ may be a nonergodic process for
which $\lim_{k\to\infty}\frac{1}{k}\sum_{t=0}^{k-1}l(t)$ takes different
values for different outcomes $\omega\in\Omega$ (see Remark 3.4 in
\cite{cetinkaya-tac}); moreover, in certain cases $\lim_{k\to\infty}\frac{1}{k}\sum_{t=0}^{k-1}l(t)$
may not exist. In such cases $\rho$ (as an upper bound of $\limsup_{k\to\infty}\frac{1}{k}\sum_{t=0}^{k-1}l(t)$)
is still useful in stability analysis through Theorem~\ref{Stability-Theorem}. 

In some cases, $\limsup_{k\to\infty}\frac{1}{k}\sum_{t=0}^{k-1}l(t)$
may be strictly smaller than $\rho$ that we obtain by using our results
in this paper. The gap between $\limsup_{k\to\infty}\frac{1}{k}\sum_{t=0}^{k-1}l(t)$
and $\rho$ can be identified if all link failure model parameters
and independence/dependence relations between communication links
are known. Note that it may be difficult to obtain an analytical expression
for $\limsup_{k\to\infty}\frac{1}{k}\sum_{t=0}^{k-1}l(t)$ even if
the properties of the failure processes are known exactly. This is
because such properties may be time-dependent and may have complicated
inter-dependence relations. We compare $\limsup_{k\to\infty}\frac{1}{k}\sum_{t=0}^{k-1}l(t)$
and $\rho$ numerically through repeated simulations for an example
case in Section~\ref{subsec:State-dependent-Attacks-by}. To guarantee
a small gap between $\limsup_{k\to\infty}\frac{1}{k}\sum_{t=0}^{k-1}l(t)$
and $\rho$, failure processes need to be placed in adequate classes
$\Gamma_{p_{0},p_{1}}$ and $\Pi_{\kappa,w}$ so that the bounds utilized
in Definitions~\ref{Definition-Hidden-Markov} and \ref{DefinitionPi}
are sufficiently tight. Furthermore, to keep the gap small, it is
also essential to utilize Theorems~\ref{Theorem-HiddenMarkov}, \ref{Theorem-Gamma-Lambda},
\ref{Theorem-HiddenMarkov-1}, and \ref{Theorem-Gamma-Lambda-1} when
the processes involved in $\wedge$ and $\vee$ operations are known
to be mutually independent and at least one of them is from the class
$\Gamma$. If, instead, Propositions~\ref{Proposition-and-operation}
and \ref{Proposition-lp-or} are used, this may introduce additional
gap between $\limsup_{k\to\infty}\frac{1}{k}\sum_{t=0}^{k-1}l(t)$
and $\rho$. This is because, in scenarios where random failures on
links/paths occur independently, Propositions~\ref{Proposition-and-operation}
and \ref{Proposition-lp-or} are more conservative. In Table~\ref{TableConservatism},
we compare these results with Theorems~\ref{Theorem-HiddenMarkov}
and \ref{Theorem-HiddenMarkov-1} in terms of the conservatism that
they may introduce. \end{remark} 

\subsection{Characterization of Paths with Packet Dropping Links}

In the previous section, we provided a characterization for paths
with data-corrupting links. In this section we look at packet drops. 

A packet dropout on a communication link $(v,w)$ may occur if $v$
is a malicious router that intentionally skips forwarding packets
(see blackhole and grayhole attacks in \cite{jhaveri2012attacks}).
A nonmalicious router may also drop packets to avoid congestion. In
addition to these two issues, a packet may also be dropped if the
header part of the packet, which includes information on the destination
of the packet, is corrupted. Furthermore, in scenarios where error-detection
is implemented at intermediate nodes, corruption on the data part
of a packet can be detected. As a result, a corrupted packet needs
not be further transmitted. Such a scenario can also be studied within
the packet drop framework. 

Consider a link $\mathcal{P}_{i,j}=(v,w)$ on path $\mathcal{P}_{i}$.
Let $t=0,1,\ldots$, denote the indices of packets that node $v$
possesses (or receives from previous nodes on path $\mathcal{P}_{i}$).
Observe that if there are links before $\mathcal{P}_{i,j}$ that are
packet-dropping, then the first packet (with index $t=0$) that $v$
receives may be different from $x(0)$, and it may be the state at
a later time. This is because $x(0)$ may have been dropped before
reaching node $v$. We use $l_{\mathcal{P}_{i,j}}(t)\in\{0,1\}$ to
indicate the status of transmission of the $(t+1)$th packet (with
index $t$) that node $v$ possesses to node $w$. 

The following result is concerned with the characterization of the
failures on a path with only packet dropping links. 

\begin{proposition}\label{Theorem-lp-all-packet-dropping} Let $\{l_{\mathcal{P}_{i}}(t)\in\{0,1\}\}_{t\in\mathbb{N}_{0}}$
denote the failure indicator of a path $\mathcal{P}_{i}$ composed
only of packet dropping links with failure indicators denoted by $\{l_{\mathcal{P}_{i}}^{\mathcal{P}_{i,j}}(t)\in\{0,1\}\}_{t\in\mathbb{N}_{0}}$.
Assume $l_{\mathcal{P}_{i}}^{\mathcal{P}_{i,j}}\in\Lambda_{\rho_{\mathcal{P}_{i}}^{\mathcal{P}_{i,j}}}$
with $\rho_{\mathcal{P}_{i}}^{\mathcal{P}_{i,j}}\in[0,1]$, $j\in\{1,\ldots,|\mathcal{P}_{i}|\}$,
that satisfy $\sum_{j=1}^{|\mathcal{P}_{i}|}\rho_{\mathcal{P}_{i}}^{\mathcal{P}_{i,j}}\leq1$.
Then $l_{\mathcal{P}_{i}}\in\Lambda_{\rho_{\mathcal{P}_{i}}}$ with
$\rho_{\mathcal{P}_{i}}\triangleq\sum_{j=1}^{|\mathcal{P}_{i}|}\rho_{\mathcal{P}_{i}}^{\mathcal{P}_{i,j}}$. 

\end{proposition}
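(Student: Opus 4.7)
The plan is to follow the same overall strategy as the proof of Proposition~\ref{Proposition-lp-or}, but with an additional preliminary step to handle the index shifting that arises because of packet drops. In the data-corruption case of Proposition~\ref{Proposition-lp-or}, every link sees the same packet at every time $t$, so the $\vee$-combination \eqref{eq:lp-or} gives the elementary bound $\sum_{t=0}^{k-1} l_{\mathcal{P}_{i}}(t)\leq\sum_{j=1}^{|\mathcal{P}_{i}|}\sum_{t=0}^{k-1} l_{\mathcal{P}_{i}}^{\mathcal{P}_{i,j}}(t)$ directly. For packet-dropping links, by contrast, the $(t{+}1)$-th packet that link $j$ processes need not correspond to the $(t{+}1)$-th packet entering the path, since upstream drops thin out the stream. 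Hence I first need to re-establish the same inequality under this shifted-index setup.

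First I would let $L_{j}(n)\triangleq\sum_{t=0}^{n-1} l_{\mathcal{P}_{i}}^{\mathcal{P}_{i,j}}(t)$ denote the cumulative number of drops among the first $n$ packets that node $v$ (the tail of link $\mathcal{P}_{i,j}$) has processed, and let $N_{j}(k)$ denote the number of packets that actually reach link $j$ among the first $k$ packets entering the path, with the recursion
\begin{align*}
N_{1}(k)=k,\qquad N_{j+1}(k)=N_{j}(k)-L_{j}(N_{j}(k)),\quad j\in\{1,\ldots,|\mathcal{P}_{i}|\}.
\end{align*}
Then the number of packets failing to traverse the entire path in the first $k$ attempts telescopes into $\sum_{t=0}^{k-1}l_{\mathcal{P}_{i}}(t)=k-N_{|\mathcal{P}_{i}|+1}(k)=\sum_{j=1}^{|\mathcal{P}_{i}|}L_{j}(N_{j}(k))$. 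Since $N_{j}(k)\leq k$ and each $L_{j}$ is nondecreasing, $L_{j}(N_{j}(k))\leq L_{j}(k)$, which yields the key inequality
\begin{align*}
\sum_{t=0}^{k-1}l_{\mathcal{P}_{i}}(t) & \leq\sum_{j=1}^{|\mathcal{P}_{i}|}\sum_{t=0}^{k-1}l_{\mathcal{P}_{i}}^{\mathcal{P}_{i,j}}(t).
\end{align*}

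Once this inequality is in hand, the rest of the argument is identical to the proof of Proposition~\ref{Proposition-lp-or}: setting $\gamma_{j}\triangleq\sum_{t=0}^{k-1}l_{\mathcal{P}_{i}}^{\mathcal{P}_{i,j}}(t)-\rho_{\mathcal{P}_{i}}^{\mathcal{P}_{i,j}}k$ and recalling $\rho_{\mathcal{P}_{i}}=\sum_{j}\rho_{\mathcal{P}_{i}}^{\mathcal{P}_{i,j}}$, a union bound gives $\mathbb{P}[\sum_{t=0}^{k-1}l_{\mathcal{P}_{i}}(t)>\rho_{\mathcal{P}_{i}}k]\leq\sum_{j=1}^{|\mathcal{P}_{i}|}\mathbb{P}[\sum_{t=0}^{k-1}l_{\mathcal{P}_{i}}^{\mathcal{P}_{i,j}}(t)>\rho_{\mathcal{P}_{i}}^{\mathcal{P}_{i,j}}k]$, and summing over $k\in\mathbb{N}$ yields a finite value because each $l_{\mathcal{P}_{i}}^{\mathcal{P}_{i,j}}\in\Lambda_{\rho_{\mathcal{P}_{i}}^{\mathcal{P}_{i,j}}}$.

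The main obstacle is the bookkeeping in the first step: one has to be careful that $l_{\mathcal{P}_{i}}^{\mathcal{P}_{i,j}}(t)$ is defined for every $t\in\mathbb{N}_{0}$ (as the potential drop indicator for whichever packet is the $(t{+}1)$-th to arrive at link $j$), so that the monotonicity bound $L_{j}(N_{j}(k))\leq L_{j}(k)$ is meaningful and so that the hypothesis $l_{\mathcal{P}_{i}}^{\mathcal{P}_{i,j}}\in\Lambda_{\rho_{\mathcal{P}_{i}}^{\mathcal{P}_{i,j}}}$ applies at the index $k$ rather than at some smaller random index. Once this modeling convention is made explicit, the rest of the proof reduces cleanly to the argument already used for data corruption.
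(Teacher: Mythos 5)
Your proof is correct, and it is worth noting that the paper itself skips the proof of this proposition, referring instead to the recursive machinery it builds for the mixed case (Lemma~\ref{Lemma-recursive} and Theorem~\ref{Theorem-lp-general}). Your route differs in organization rather than in substance: the paper peels off the first link, introduces the index-shifted process $\widehat{l}_{\mathcal{P}}^{\mathcal{R}(\mathcal{P})}$ via the counter $\hat{k}(t)=\sum_{i=0}^{t-1}(1-l_{\mathcal{P}}^{f(\mathcal{P})}(i))$, shows $\sum_{t=0}^{k-1}\widehat{l}_{\mathcal{P}}^{\mathcal{R}(\mathcal{P})}(t)\leq\sum_{t=0}^{k-1}l_{\mathcal{P}}^{\mathcal{R}(\mathcal{P})}(t)$ from $\hat{k}(k)\leq k$, applies a two-term union bound, and then iterates this argument down the path; you instead handle all links at once with the cumulative counters $N_{j}(k)$, the telescoping identity $\sum_{t=0}^{k-1}l_{\mathcal{P}_{i}}(t)=\sum_{j}L_{j}(N_{j}(k))$, and the monotonicity bound $L_{j}(N_{j}(k))\leq L_{j}(k)$, before invoking the multi-term union bound exactly as in Proposition~\ref{Proposition-lp-or}. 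The key mechanism is identical in both ($N_{j}(k)\leq k$ plays the role of $\hat{k}(k)\leq k$), but your global telescoping avoids the recursion at the cost of the bookkeeping you flag: it relies on the delay-free, order-preserving forwarding model so that the survivors of the first $k$ entering packets are precisely the first $N_{j}(k)$ packets processed at link $j$, and on the convention (consistent with the paper's) that $l_{\mathcal{P}_{i}}^{\mathcal{P}_{i,j}}(t)$ is defined for all $t\in\mathbb{N}_{0}$ so the hypothesis $l_{\mathcal{P}_{i}}^{\mathcal{P}_{i,j}}\in\Lambda_{\rho_{\mathcal{P}_{i}}^{\mathcal{P}_{i,j}}}$ can be applied at the deterministic index $k$. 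The paper's recursive formulation has the advantage of extending verbatim to paths mixing data-corrupting and packet-dropping links (Theorem~\ref{Theorem-lp-general}), whereas your direct argument, as written, is tailored to the all-dropping case; either way, for the stated proposition your argument is complete.
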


The proof of this result is skipped since it is similar to that of
Theorem~\ref{Theorem-lp-general} provided in the next section, where
we consider paths that include data corrupting and packet dropping
communication links. 

\subsection{A General Characterization of Paths with Data Corruption and Packet
Drops}

In this section we investigate the effects of both data corruption
and packet dropouts. Without loss of generality, we assume that links
on a path are either \emph{data-corrupting} or \emph{packet-dropping},
but not both. Note that if in the original network, a link $(v,w)$
is subject to both of the issues, we can artificially add a node $v'$
and edges $(v,v')$, $(v',w)$ to the graph, and consider $(v,v')$
as a packet-dropping link and $(v',w)$ as a data-corrupting link. 

For a data-corrupting link $(v,w)$, packets available at $v$ are
always transmitted to $w$, but their content may be externally manipulated
or damaged during the transmission over this link. On the other hand,
if $(v,w)$ is a packet-dropping link, packets available at $v$ may
or may not be received by $w$, but never get corrupted on the communication
link $(v,w)$. 

Our goal here is to obtain a relation between the asymptotic packet
failure ratio of path $\mathcal{P}_{i}$ and the failure ratios of
the links on that path. To this end, we will use a recursive characterization
for describing packet failures on paths. Specifically, consider a
path 
\begin{align}
\mathcal{P} & \triangleq\big((v_{1},v_{2}),(v_{2},v_{3}),\ldots,(v_{h},v_{h+1})\big)\label{eq:path-p-def}
\end{align}
 of $h\geq1$ links, and consider the associated process $\{l_{\mathcal{P}}(t)\in\{0,1\}\}_{t\in\mathbb{N}_{0}}$.
The state $l_{\mathcal{P}}(t)=0$ indicates that the $(t+1)$th packet
that the first node $v_{1}$ possesses can be successfully transmitted
to the last node $v_{h+1}$, whereas $l_{\mathcal{P}}(t)=1$ indicates
a failure. 

\vskip 2pt

If $h=1$ in \eqref{eq:path-p-def}, then we have $l_{\mathcal{P}}(t)=l_{\mathcal{P}}^{(v_{1},v_{2})}(t)$,
$t\in\mathbb{N}_{0}$. Now consider the case $h\geq2$, and let $f(\mathcal{P})$
and $\mathcal{R}(\mathcal{P})$ respectively denote the first link
on $\mathcal{P}$, and the \emph{subpath} composed of the rest of
the links, that is, 
\begin{align}
f(\mathcal{P}) & =(v_{1},v_{2}),\label{eq:path-r-def}\\
\mathcal{R}(\mathcal{P}) & =\big((v_{2},v_{3}),\ldots,(v_{h},v_{h+1})\big).
\end{align}
We illustrate $f(\mathcal{P})$ and $\mathcal{R}(\mathcal{P})$ on
the left side of Fig.~\ref{Flo:lr-lp-relation}.

\vskip 2pt

Next, we show that transmission failures on a path $\mathcal{P}$
can be characterized through transmission failures on the link $f(\mathcal{P})$
and the subpath $\mathcal{R}(\mathcal{P})$. Let $\{l_{\mathcal{P}}^{f(\mathcal{P})}(t)\in\{0,1\}\}_{t\in\mathbb{N}_{0}}$
and $\{l_{\mathcal{P}}^{\mathcal{R}(\mathcal{P})}(t)\in\{0,1\}\}_{t\in\mathbb{N}_{0}}$
denote indicators of transmission failures on the link $f(\mathcal{P})$
and the subpath $\mathcal{R}(\mathcal{P})$. If the link $f(\mathcal{P})=(v_{1},v_{2})$
is a data-corrupting link, then we would have $l_{\mathcal{P}}(t)=l_{\mathcal{P}}^{f(\mathcal{P})}(t)\vee l_{\mathcal{P}}^{\mathcal{R}(\mathcal{P})}(t)$.
But this relation does not hold if $f(\mathcal{P})$ is a packet-dropping
link, because the index $t$ for $l_{\mathcal{P}}^{f(\mathcal{P})}(t)$
and $l_{\mathcal{P}}^{\mathcal{R}(\mathcal{P})}(t)$ represent different
packets. 

Now, we will introduce a new process $\{\widehat{l}_{\mathcal{P}}^{\mathcal{R}(\mathcal{P})}(t)\in\{0,1\}\}_{t\in\mathbb{N}_{0}}$
for which 
\begin{align}
l_{\mathcal{P}}(t) & =l_{\mathcal{P}}^{f(\mathcal{P})}(t)\vee\widehat{l}_{\mathcal{P}}^{\mathcal{R}(\mathcal{P})}(t),\quad t\in\mathbb{N}_{0}.\label{eq:p-fp-rp}
\end{align}
 If $f(\mathcal{P})$ is a data-corrupting link, then we define $\widehat{l}_{\mathcal{P}}^{\mathcal{R}(\mathcal{P})}(\cdot)$
by setting $\widehat{l}_{\mathcal{P}}^{\mathcal{R}(\mathcal{P})}(t)=l_{\mathcal{P}}^{\mathcal{R}(\mathcal{P})}(t)$,
$t\in\mathbb{N}_{0}$. On the other hand, if $f(\mathcal{P})$ is
a packet-dropping link, then for $t\in\mathbb{N}_{0}$, 
\begin{align}
\widehat{l}_{\mathcal{P}}^{\mathcal{R}(\mathcal{P})}(t) & \triangleq\begin{cases}
0,\,\, & \mathrm{if}\,\,l_{\mathcal{P}}^{f(\mathcal{P})}(t)=1,\\
l_{\mathcal{P}}^{\mathcal{R}(\mathcal{P})}\big(\hat{k}(t+1)-1\big),\,\, & \mathrm{if}\,\,l_{\mathcal{P}}^{f(\mathcal{P})}(t)=0,
\end{cases}\label{eq:overline-lr-def}
\end{align}
where $\hat{k}(t)\triangleq\sum_{i=0}^{t-1}(1-l_{\mathcal{P}}^{f(\mathcal{P})}(i))$,
$t\in\mathbb{N}$. In this definition, $\hat{k}(t+1)$ denotes the
number of packets that are successfully transmitted from node $v_{1}$
to node $v_{2}$ among the first $t+1$ packets that node $v_{1}$
possesses. Hence, the scalar $\hat{k}(t+1)-1$ represents the index
of the $\hat{k}(t+1)$th packet received by $v_{2}$. Moreover, $l_{\mathcal{P}}^{\mathcal{R}(\mathcal{P})}\big(\hat{k}(t+1)-1\big)$
indicates whether this packet is successfully transmitted from $v_{2}$
over $\mathcal{R}(\mathcal{P})$ to $v_{h+1}$. 

\vskip 2pt

Observe that by \eqref{eq:overline-lr-def}, $\widehat{l}_{\mathcal{P}}^{\mathcal{R}(\mathcal{P})}(t)$
is set to $0$, if $v_{1}$ drops the $(t+1)$th packet that it possesses.
On the other hand, if $v_{1}$ transmits this packet to $v_{2}$,
the state $\widehat{l}_{\mathcal{P}}^{\mathcal{R}(\mathcal{P})}(t)=1$
indicates that further transmission of this packet on subpath $\mathcal{R}(\mathcal{P})$
has failed, whereas $\widehat{l}_{\mathcal{P}}^{\mathcal{R}(\mathcal{P})}(t)=0$
indicates success. As a result, we have $l_{\mathcal{P}}(t)=l_{\mathcal{P}}^{f(\mathcal{P})}(t)+(1-l_{\mathcal{P}}^{f(\mathcal{P})}(t))\widehat{l}_{\mathcal{P}}^{\mathcal{R}(\mathcal{P})}(t)$,
and hence, \eqref{eq:p-fp-rp} holds. 

\vskip 1pt

The characterization above is recursive in the sense that it is recursively
applied to describe failures on $\mathcal{R}(\mathcal{P})$ by means
of failures on the first link $f(\mathcal{R}(\mathcal{P}))$ and the
subpath $\mathcal{R}(\mathcal{R}(\mathcal{P}))$. 

\vskip 3pt

\begin{figure}
\centering  \includegraphics[width=1\columnwidth]{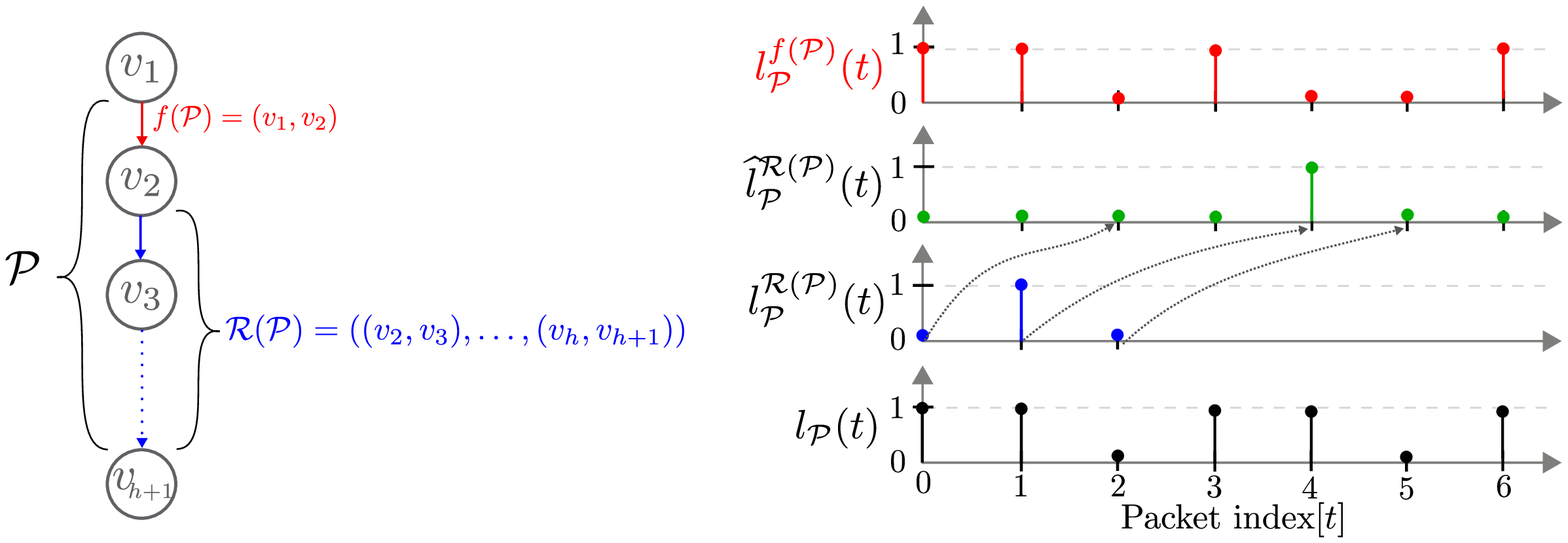}
\vskip -3pt

\caption{\fontsize{9}{9}\selectfont{{[}Left{]} Paths $\mathcal{P}$, $\mathcal{R}(\mathcal{P})$,
and packet-dropping link $f(\mathcal{P})$. {[}Right{]} Trajectories
of $l_{\mathcal{P}}^{f(\mathcal{P})}(\cdot)$, $\widehat{l}_{\mathcal{P}}^{\mathcal{R}(\mathcal{P})}(\cdot)$,
$l_{\mathcal{P}}^{\mathcal{R}(\mathcal{P})}(\cdot)$ and $l_{\mathcal{P}}(\cdot)$.} }
 \label{Flo:lr-lp-relation}
\end{figure}

\begin{example} In Fig.~\ref{Flo:lr-lp-relation}, we see a path
$\mathcal{P}$, its packet-dropping first link $f(\mathcal{P})=(v_{1},v_{2})$,
and its subpath $\mathcal{R}(\mathcal{P})$, together with sample
trajectories of $l_{\mathcal{P}}^{f(\mathcal{P})}(\cdot)$, $\widehat{l}_{\mathcal{P}}^{\mathcal{R}(\mathcal{P})}(\cdot)$,
$l_{\mathcal{P}}^{\mathcal{R}(\mathcal{P})}(\cdot)$ and $l_{\mathcal{P}}(\cdot)$.
Note that the link $f(\mathcal{P})=(v_{1},v_{2})$ drops the first
two packets, that is, $l_{\mathcal{P}}^{f(\mathcal{P})}(t)=1$ for
$t\in\{0,1\}$. By \eqref{eq:overline-lr-def}, we have $\widehat{l}_{\mathcal{P}}^{\mathcal{R}(\mathcal{P})}(0)=\widehat{l}_{\mathcal{P}}^{\mathcal{R}(\mathcal{P})}(1)=0$,
and as a consequence of \eqref{eq:p-fp-rp}, $l_{\mathcal{P}}(0)=l_{\mathcal{P}}(1)=1$.
Furthermore, the link $f(\mathcal{P})$ successfully transmits the
$3$rd packet, i.e, $l_{\mathcal{P}}^{f(\mathcal{P})}(2)=0$. The
value of $\widehat{l}_{\mathcal{P}}^{\mathcal{R}(\mathcal{P})}(2)$
represents whether this packet is successfully transmitted over the
subpath $\mathcal{R}(\mathcal{P})$ or not. Since this packet is the
first packet that $v_{2}$ receives, its transmission state is represented
by $l_{\mathcal{R}(\mathcal{P})}(0)$, and as a result, $\widehat{l}_{\mathcal{P}}^{\mathcal{R}(\mathcal{P})}(2)=l_{\mathcal{P}}^{\mathcal{R}(\mathcal{P})}(0)$.
We have $\widehat{l}_{\mathcal{P}}^{\mathcal{R}(\mathcal{P})}(2)=l_{\mathcal{P}}^{\mathcal{R}(\mathcal{P})}(0)=0$,
indicating successful transmission over $\mathcal{R}(\mathcal{P})$.
Now, by \eqref{eq:p-fp-rp}, $l_{\mathcal{P}}(2)=0$, which indicates
that the $3$rd packet possessed by the first node $v_{1}$ is successfully
transmitted to the last node $v_{h+1}$. \end{example}

The following result is essential for characterizing the failures
on path $\mathcal{P}$ through those on the first link $f(\mathcal{P})$
and the subpath $\mathcal{R}(\mathcal{P})$. We obtain the result
by first establishing the key inequality $\sum_{t=0}^{k-1}\widehat{l}_{\mathcal{P}}^{\mathcal{R}(\mathcal{P})}(t)\leq\sum_{t=0}^{k-1}l_{\mathcal{P}}^{\mathcal{R}(\mathcal{P})}(t)$,
and then applying some of the arguments that we employed for proving
Proposition~\ref{Proposition-lp-or}. 

\begin{lemma} \label{Lemma-recursive} Consider path $\mathcal{P}$
given in \eqref{eq:path-p-def} for $h\geq1$. Assume $l_{\mathcal{P}}^{f(\mathcal{P})}\in\Lambda_{\rho_{\mathcal{P}}^{f(\mathcal{P})}}$
and $l_{\mathcal{P}}^{\mathcal{R}(\mathcal{P})}\in\Lambda_{\rho_{\mathcal{P}}^{\mathcal{R}(\mathcal{P})}}$
with scalars $\rho_{\mathcal{P}}^{f(\mathcal{P})},\rho_{\mathcal{P}}^{\mathcal{R}(\mathcal{P})}\in[0,1]$,
such that $\rho_{\mathcal{P}}^{f(\mathcal{P})}+\rho_{\mathcal{P}}^{\mathcal{R}(\mathcal{P})}\leq1$.
It follows that $l_{\mathcal{P}}\in\Lambda_{\rho_{\mathcal{P}}}$
with $\rho_{\mathcal{P}}=\rho_{\mathcal{P}}^{f(\mathcal{P})}+\rho_{\mathcal{P}}^{\mathcal{R}(\mathcal{P})}$. 

\end{lemma}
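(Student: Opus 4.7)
The plan is to reduce the lemma to (essentially) the same union-bound argument used in the proof of Proposition~\ref{Proposition-lp-or}, once I establish the pathwise inequality $\sum_{t=0}^{k-1}\widehat{l}_{\mathcal{P}}^{\mathcal{R}(\mathcal{P})}(t)\leq\sum_{t=0}^{k-1}l_{\mathcal{P}}^{\mathcal{R}(\mathcal{P})}(t)$ hinted at by the authors. There are two cases dictated by the nature of $f(\mathcal{P})$: when it is data-corrupting, by definition $\widehat{l}_{\mathcal{P}}^{\mathcal{R}(\mathcal{P})}(t)=l_{\mathcal{P}}^{\mathcal{R}(\mathcal{P})}(t)$ for all $t$, so the inequality holds trivially and $l_{\mathcal{P}}(t)=l_{\mathcal{P}}^{f(\mathcal{P})}(t)\vee l_{\mathcal{P}}^{\mathcal{R}(\mathcal{P})}(t)$, which is exactly the setting of Proposition~\ref{Proposition-lp-or} applied to the two constituent processes. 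The substantive work is therefore the packet-dropping case.

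For that case, I would first observe from \eqref{eq:overline-lr-def} that every $t$ with $\widehat{l}_{\mathcal{P}}^{\mathcal{R}(\mathcal{P})}(t)\neq 0$ must satisfy $l_{\mathcal{P}}^{f(\mathcal{P})}(t)=0$, and on the set of such $t$'s in $\{0,\ldots,k-1\}$ the index map $t\mapsto \hat{k}(t+1)-1$ is strictly increasing and takes values in $\{0,1,\ldots,\hat{k}(k)-1\}$. Consequently,
\begin{align*}
\sum_{t=0}^{k-1}\widehat{l}_{\mathcal{P}}^{\mathcal{R}(\mathcal{P})}(t) \;=\; \sum_{\substack{0\le t<k\\ l_{\mathcal{P}}^{f(\mathcal{P})}(t)=0}} l_{\mathcal{P}}^{\mathcal{R}(\mathcal{P})}\!\bigl(\hat{k}(t+1)-1\bigr) \;=\; \sum_{j=0}^{\hat{k}(k)-1} l_{\mathcal{P}}^{\mathcal{R}(\mathcal{P})}(j) \;\le\; \sum_{j=0}^{k-1} l_{\mathcal{P}}^{\mathcal{R}(\mathcal{P})}(j),
\end{align*}
where the last inequality uses $\hat{k}(k)\le k$ and nonnegativity of $l_{\mathcal{P}}^{\mathcal{R}(\mathcal{P})}$. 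This is the key inequality.

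Having this, I would combine it with \eqref{eq:p-fp-rp} via $a\vee b\le a+b$ to get the deterministic bound
\begin{align*}
\sum_{t=0}^{k-1} l_{\mathcal{P}}(t) \;\le\; \sum_{t=0}^{k-1} l_{\mathcal{P}}^{f(\mathcal{P})}(t) + \sum_{t=0}^{k-1} \widehat{l}_{\mathcal{P}}^{\mathcal{R}(\mathcal{P})}(t) \;\le\; \sum_{t=0}^{k-1} l_{\mathcal{P}}^{f(\mathcal{P})}(t) + \sum_{t=0}^{k-1} l_{\mathcal{P}}^{\mathcal{R}(\mathcal{P})}(t).
\end{align*}
Then I would mimic the union-bound step from the proof of Proposition~\ref{Proposition-lp-or}: if $\sum_{t=0}^{k-1} l_{\mathcal{P}}(t) > \rho_{\mathcal{P}}k = \rho_{\mathcal{P}}^{f(\mathcal{P})}k + \rho_{\mathcal{P}}^{\mathcal{R}(\mathcal{P})}k$, then at least one of the two summands on the right exceeds the corresponding threshold, so
\begin{align*}
\mathbb{P}\bigl[\textstyle\sum_{t=0}^{k-1} l_{\mathcal{P}}(t)>\rho_{\mathcal{P}}k\bigr] \le \mathbb{P}\bigl[\textstyle\sum_{t=0}^{k-1} l_{\mathcal{P}}^{f(\mathcal{P})}(t)>\rho_{\mathcal{P}}^{f(\mathcal{P})}k\bigr]+\mathbb{P}\bigl[\textstyle\sum_{t=0}^{k-1} l_{\mathcal{P}}^{\mathcal{R}(\mathcal{P})}(t)>\rho_{\mathcal{P}}^{\mathcal{R}(\mathcal{P})}k\bigr].
\end{align*}
Summing over $k\in\mathbb{N}$ and using $l_{\mathcal{P}}^{f(\mathcal{P})}\in\Lambda_{\rho_{\mathcal{P}}^{f(\mathcal{P})}}$ and $l_{\mathcal{P}}^{\mathcal{R}(\mathcal{P})}\in\Lambda_{\rho_{\mathcal{P}}^{\mathcal{R}(\mathcal{P})}}$ gives finiteness, proving $l_{\mathcal{P}}\in\Lambda_{\rho_{\mathcal{P}}}$.

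The main obstacle I expect is the indexing argument behind the key inequality: one must keep straight that $\widehat{l}_{\mathcal{P}}^{\mathcal{R}(\mathcal{P})}$ is indexed by what the \emph{first} node of $\mathcal{P}$ possesses, whereas $l_{\mathcal{P}}^{\mathcal{R}(\mathcal{P})}$ is indexed by what the \emph{second} node of $\mathcal{P}$ receives, and that the reindexing $t\mapsto \hat{k}(t+1)-1$ induced by drops on $f(\mathcal{P})$ is strictly monotone into $\{0,\ldots,\hat{k}(k)-1\}\subseteq\{0,\ldots,k-1\}$. Once that is clearly laid out, the probabilistic portion is essentially a re-run of the union bound used in Proposition~\ref{Proposition-lp-or}.
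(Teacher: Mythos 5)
Your proposal is correct and follows essentially the same route as the paper: the key pathwise inequality $\sum_{t=0}^{k-1}\widehat{l}_{\mathcal{P}}^{\mathcal{R}(\mathcal{P})}(t)\leq\sum_{t=0}^{k-1}l_{\mathcal{P}}^{\mathcal{R}(\mathcal{P})}(t)$ (which the paper obtains by writing the left side as $\sum_{t=0}^{\hat{k}(k)-1}l_{\mathcal{P}}^{\mathcal{R}(\mathcal{P})}(t)$ and using $\hat{k}(k)\leq k$, exactly your reindexing argument made explicit), combined with \eqref{eq:p-fp-rp} and the same union bound as in Proposition~\ref{Proposition-lp-or}. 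The only cosmetic difference is that the paper first concludes $\widehat{l}_{\mathcal{P}}^{\mathcal{R}(\mathcal{P})}\in\Lambda_{\rho_{\mathcal{P}}^{\mathcal{R}(\mathcal{P})}}$ and applies the union bound to $l_{\mathcal{P}}^{f(\mathcal{P})}$ and $\widehat{l}_{\mathcal{P}}^{\mathcal{R}(\mathcal{P})}$, whereas you substitute the inequality directly, which is equivalent.
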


\begin{proof} We will first show that $\widehat{l}_{\mathcal{P}}^{\mathcal{R}(\mathcal{P})}\in\Lambda_{\rho_{\mathcal{P}}^{\mathcal{R}(\mathcal{P})}}$.
If $f(\mathcal{P})$ is a data-corrupting link, we have $\widehat{l}_{\mathcal{P}}^{\mathcal{R}(\mathcal{P})}(t)=l_{\mathcal{P}}^{\mathcal{R}(\mathcal{P})}(t)$,
$t\in\mathbb{N}_{0}$, and therefore $\widehat{l}_{\mathcal{P}}^{\mathcal{R}(\mathcal{P})}\in\Lambda_{\rho_{\mathcal{P}}^{\mathcal{R}(\mathcal{P})}}$.
Now consider the situation where $f(\mathcal{P})$ is a packet-dropping
link. For this case, 
\begin{align*}
\sum_{t=0}^{k-1}\widehat{l}_{\mathcal{P}}^{\mathcal{R}(\mathcal{P})}(t) & =\begin{cases}
0,\quad & \hat{k}(k)=0,\\
\sum_{t=0}^{\hat{k}(k)-1}l_{\mathcal{P}}^{\mathcal{R}(\mathcal{P})}(t),\quad & \hat{k}(k)\neq0,
\end{cases}\quad k\in\mathbb{N}.
\end{align*}
Since $\hat{k}(k)\leq k$ and $l_{\mathcal{R}(\mathcal{P})}(t)\geq0$,
we have $\sum_{t=0}^{\hat{k}(k)-1}l_{\mathcal{P}}^{\mathcal{R}(\mathcal{P})}(t)\leq\sum_{t=0}^{k-1}l_{\mathcal{P}}^{\mathcal{R}(\mathcal{P})}(t)$,
and hence, $\sum_{t=0}^{k-1}\widehat{l}_{\mathcal{P}}^{\mathcal{R}(\mathcal{P})}(t)\leq\sum_{t=0}^{k-1}l_{\mathcal{P}}^{\mathcal{R}(\mathcal{P})}(t)$.
It follows that 
\begin{align*}
\mathbb{P}[\sum_{t=0}^{k-1}\widehat{l}_{\mathcal{P}}^{\mathcal{R}(\mathcal{P})}(t)>\rho_{\mathcal{P}}^{\mathcal{R}(\mathcal{P})}k] & \leq\mathbb{P}[\sum_{t=0}^{k-1}l_{\mathcal{P}}^{\mathcal{R}(\mathcal{P})}(t)>\rho_{\mathcal{P}}^{\mathcal{R}(\mathcal{P})}k].
\end{align*}
Now, as $l_{\mathcal{P}}^{\mathcal{R}(\mathcal{P})}\in\Lambda_{\rho_{\mathcal{P}}^{\mathcal{R}(\mathcal{P})}}$,
we also have $\widehat{l}_{\mathcal{P}}^{\mathcal{R}(\mathcal{P})}\in\Lambda_{\rho_{\mathcal{P}}^{\mathcal{R}(\mathcal{P})}}$. 

Finally, since \eqref{eq:p-fp-rp} holds, we have $l_{\mathcal{P}}(t)\leq l_{\mathcal{P}}^{f(\mathcal{P})}(t)+\widehat{l}_{\mathcal{P}}^{\mathcal{R}(\mathcal{P})}(t),$
$t\in\mathbb{N}_{0}$. Hence, 
\begin{align}
 & \mathbb{P}[\sum_{t=0}^{k-1}l_{\mathcal{P}}(t)>\rho_{\mathcal{P}}k]\leq\mathbb{P}[\sum_{t=0}^{k-1}l_{\mathcal{P}}^{f(\mathcal{P})}(t)+\sum_{t=0}^{k-1}\widehat{l}_{\mathcal{P}}^{\mathcal{R}(\mathcal{P})}(t)>\rho_{\mathcal{P}}k]\nonumber \\
 & \,=\mathbb{P}[\sum_{t=0}^{k-1}l_{\mathcal{P}}^{f(\mathcal{P})}(t)+\sum_{t=0}^{k-1}\widehat{l}_{\mathcal{P}}^{\mathcal{R}(\mathcal{P})}(t)>(\rho_{\mathcal{P}}^{f(\mathcal{P})}+\rho_{\mathcal{P}}^{\mathcal{R}(\mathcal{P})})k]\nonumber \\
 & \,\leq\mathbb{P}[\sum_{t=0}^{k-1}l_{\mathcal{P}}^{f(\mathcal{P})}(t)>\rho_{\mathcal{P}}^{f(\mathcal{P})}k]+\mathbb{P}[\sum_{t=0}^{k-1}\widehat{l}_{\mathcal{P}}^{\mathcal{R}(\mathcal{P})}(t)>\rho_{\mathcal{P}}^{\mathcal{R}(\mathcal{P})}k].\label{eq:probbound}
\end{align}
It then follows from \eqref{eq:probbound} together with $\sum_{k=1}^{\infty}\mathbb{P}[\sum_{t=0}^{k-1}l_{\mathcal{P}}^{f(\mathcal{P})}(t)>\rho_{\mathcal{P}}^{f(\mathcal{P})}k]<\infty$
and $\sum_{k=1}^{\infty}\mathbb{P}[\sum_{t=0}^{k-1}\widehat{l}_{\mathcal{P}}^{\mathcal{R}(\mathcal{P})}(t)>\rho_{\mathcal{P}}^{\mathcal{R}(\mathcal{P})}k]<\infty$
that $\sum_{k=1}^{\infty}\mathbb{P}[\sum_{t=0}^{k-1}l_{\mathcal{P}}(t)>\rho_{\mathcal{P}}k]<\infty,$
which completes the proof. \end{proof}

Now, for a given path $\mathcal{P}_{i}$, repeated application of
Lemma~\ref{Lemma-recursive} to $\mathcal{P}_{i}$, $\mathcal{R}(\mathcal{P}_{i})$,
$\mathcal{R}(\mathcal{R}(\mathcal{P}_{i}))$, $\ldots$, leads us
to the following result. \vskip 5pt

\begin{theorem}\label{Theorem-lp-general} Assume $l_{\mathcal{P}_{i}}^{\mathcal{P}_{i,j}}\in\Lambda_{\rho_{\mathcal{P}_{i}}^{\mathcal{P}_{i,j}}}$
with $\rho_{\mathcal{P}_{i}}^{\mathcal{P}_{i,j}}\in[0,1]$, $j\in\{1,\ldots,|\mathcal{P}_{i}|\}$,
that satisfy $\sum_{j=1}^{|\mathcal{P}_{i}|}\rho_{\mathcal{P}_{i}}^{\mathcal{P}_{i,j}}\leq1$.
Then $l_{\mathcal{P}_{i}}\in\Lambda_{\rho_{\mathcal{P}_{i}}}$ with
$\rho_{\mathcal{P}_{i}}\triangleq\sum_{j=1}^{|\mathcal{P}_{i}|}\rho_{\mathcal{P}_{i}}^{\mathcal{P}_{i,j}}$. 

\end{theorem}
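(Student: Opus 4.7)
The plan is to prove this by induction on the number of links $|\mathcal{P}_i|$, repeatedly invoking Lemma~\ref{Lemma-recursive} as the excerpt suggests. The decomposition of a path into its first link $f(\mathcal{P})$ and the remaining subpath $\mathcal{R}(\mathcal{P})$ is already set up in Section~\ref{Sec:FailuresOnPaths}, and Lemma~\ref{Lemma-recursive} provides exactly the ``one-step peel'' that the induction needs.

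For the base case, when $|\mathcal{P}_i|=1$ we simply have $l_{\mathcal{P}_i}(t)=l_{\mathcal{P}_i}^{\mathcal{P}_{i,1}}(t)$, so the conclusion $l_{\mathcal{P}_i}\in\Lambda_{\rho_{\mathcal{P}_i}^{\mathcal{P}_{i,1}}}$ holds by assumption. For the inductive step, I would assume the theorem holds for every path of length at most $n-1$ and consider a path $\mathcal{P}_i$ of length $n$. Then I set $f(\mathcal{P}_i)=\mathcal{P}_{i,1}$ and note that the subpath $\mathcal{R}(\mathcal{P}_i)$ has length $n-1$ with link failure indicators corresponding to $\mathcal{P}_{i,2},\ldots,\mathcal{P}_{i,n}$, each of which satisfies the hypothesis $l_{\mathcal{P}_i}^{\mathcal{P}_{i,j}}\in\Lambda_{\rho_{\mathcal{P}_i}^{\mathcal{P}_{i,j}}}$. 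Applying the inductive hypothesis to $\mathcal{R}(\mathcal{P}_i)$ yields $l_{\mathcal{P}_i}^{\mathcal{R}(\mathcal{P}_i)}\in\Lambda_{\rho_{\mathcal{R}}}$ where $\rho_{\mathcal{R}}\triangleq\sum_{j=2}^{n}\rho_{\mathcal{P}_i}^{\mathcal{P}_{i,j}}$. Because $\rho_{\mathcal{P}_i}^{\mathcal{P}_{i,1}}+\rho_{\mathcal{R}}=\sum_{j=1}^{n}\rho_{\mathcal{P}_i}^{\mathcal{P}_{i,j}}\le 1$ by the standing hypothesis, Lemma~\ref{Lemma-recursive} applies to $l_{\mathcal{P}_i}^{f(\mathcal{P}_i)}$ and $l_{\mathcal{P}_i}^{\mathcal{R}(\mathcal{P}_i)}$ and yields $l_{\mathcal{P}_i}\in\Lambda_{\rho_{\mathcal{P}_i}}$ with $\rho_{\mathcal{P}_i}=\rho_{\mathcal{P}_i}^{\mathcal{P}_{i,1}}+\rho_{\mathcal{R}}=\sum_{j=1}^{n}\rho_{\mathcal{P}_i}^{\mathcal{P}_{i,j}}$, closing the induction.

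The only point that requires a little care, and which I would spell out explicitly, is the identification of the link failure indicators when we pass from $\mathcal{P}_i$ to $\mathcal{R}(\mathcal{P}_i)$: the indicator $l_{\mathcal{P}_i}^{\mathcal{P}_{i,j}}$ for $j\ge 2$ serves as the $(j-1)$th link indicator for the subpath $\mathcal{R}(\mathcal{P}_i)$, even though packet-dropping on $\mathcal{P}_{i,1}$ may re-index which packets reach the second node. Since the class membership $l_{\mathcal{P}_i}^{\mathcal{P}_{i,j}}\in\Lambda_{\rho_{\mathcal{P}_i}^{\mathcal{P}_{i,j}}}$ is defined directly with respect to the packet sequence that arrives at the node where that link originates (as in the discussion around \eqref{eq:overline-lr-def}), this re-indexing has already been absorbed into the hypothesis, so the inductive hypothesis applies without modification. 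This re-indexing subtlety is the main thing one must be careful about; once it is handled, the proof reduces to a routine induction combining the base case and Lemma~\ref{Lemma-recursive}.
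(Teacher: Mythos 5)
Your proof is correct and matches the paper's intended argument: the paper itself obtains Theorem~\ref{Theorem-lp-general} by ``repeated application of Lemma~\ref{Lemma-recursive}'' to $\mathcal{P}_{i}$, $\mathcal{R}(\mathcal{P}_{i})$, $\mathcal{R}(\mathcal{R}(\mathcal{P}_{i}))$, $\ldots$, which is exactly your induction on path length with the trivial one-link base case. Your explicit remark on the packet re-indexing being absorbed into the per-link hypotheses (via the local packet-index convention around \eqref{eq:overline-lr-def}) is consistent with the paper's setup and only makes the sketched argument more precise.
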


Note that Theorem~\ref{Theorem-lp-general} allows us to consider
both data-corruption and packet-dropouts on links, and hence it generalizes
Proposition~\ref{Proposition-lp-or}. 

\begin{remark} \label{Remark-General-Rho} By utilizing Theorem~\ref{Theorem-lp-general}
together with Proposition~\ref{Proposition-and-operation}, we can
obtain $\rho_{G},\rho_{\tilde{G}}\in[0,1]$ as upper-bounds for the
average number of packet exchange failures on networks $G$ and $\tilde{G}$
such that $l_{G}\in\Lambda_{\rho_{G}}$ and $l_{\tilde{G}}\in\Lambda_{\rho_{\tilde{G}}}$.
Note that when $l_{G}(t)=1$, then the controller either does not
receive the state packet or receives corrupted versions, which are
discarded. Hence, when $l_{G}(t)=1$, no control input packet is attempted
to be transmitted on $\tilde{G}$. This setting is similar to the
situation that we discussed above for packet dropping links. Here
we can consider the whole networked system as a path $\mathcal{P}$
from node $v_{\mathrm{P}}$ to node $\tilde{v}_{\mathrm{P}}$, where
$l_{G}(\cdot)$ corresponds to the indicator for the first packet
dropping link ($f(\mathcal{P})$) and $l_{\tilde{G}}(\cdot)$ corresponds
to packet transmission failure indicator for the rest of the path
($\mathcal{R}(\mathcal{P})$). Hence, by arguments similar to the
ones used above, we can show that if $\rho_{G}+\rho_{\tilde{G}}\leq1$,
then $l\in\Lambda_{\rho}$ with $\rho=\rho_{G}+\rho_{\tilde{G}}$.
This $\rho$ value is utilized for stability analysis with Theorem~\ref{Stability-Theorem}.
\end{remark}

\section{Illustrative Numerical Examples}

\label{sec:Illustrative}

In this section, we present illustrative examples to demonstrate the
utility of our results in the characterization of communication failures
on multi-hop networks. We also investigate the effects of those failures
on the stability of a multi-hop networked control system. 

Consider the networked control system \eqref{eq:system}, \eqref{eq:control-input}
with 
\begin{align}
A=\left[\begin{array}{cc}
1 & 0.1\\
-0.5 & 1.1
\end{array}\right],\quad B=\left[\begin{array}{c}
0.1\\
1.2
\end{array}\right],\quad K=\left[\begin{array}{cc}
-2.9012 & -0.9411\end{array}\right]\label{eq:example-abk-values}
\end{align}
together with the networks $G$ and $\tilde{G}$ in Fig.~\ref{Flo:operation}.
This system was explored previously in \cite{cetinkaya-tac} with
a single channel network model. Here, differently from \cite{cetinkaya-tac},
we consider networks $G$ and $\tilde{G}$ that incorporate multiple
paths and multiple links for packet transmissions. 

In what follows, we investigate various scenarios where we demonstrate
the utility of our results in Sections~\ref{sec:Random-and-Malicious}
and \ref{Sec:FailuresOnPaths} for characterizing overall network
failures. For each different scenario, our goal is to find out the
level of transmission failures that can be tolerated on the communication
links so that the stability of the system \eqref{eq:system}, \eqref{eq:control-input}
is guaranteed. 

\subsection{Data corruption/packet dropout issues on both networks $G$ and $\tilde{G}$\label{subsec:Data-corruption}}

Consider the scenario where all links on both networks $G$ and $\tilde{G}$
are subject to malicious/nonmalicious data corruption or packet dropout
issues. We explore the general situation where the failures may depend
on each other. For this general setup, we can use Proposition~\ref{Proposition-and-operation}
and Theorem~\ref{Theorem-lp-general} for the characterization of
the overall transmission failures between the plant and the controller.
As explained in Remark~\ref{Remark-General-Rho}, the overall packet
exchange failure process $l(\cdot)$ satisfies Assumption~\ref{MainAssumption}
with $\rho=\rho_{G}+\rho_{\tilde{G}}$. Here, $\rho_{G}$ and $\rho_{\tilde{G}}$
are asymptotic failure ratios for the networks $G$ and $\tilde{G}$,
that is, $l_{G}\in\Lambda_{\rho_{G}}$, $l_{\tilde{G}}\in\Lambda_{\rho_{\tilde{G}}}$. 

To find the values of $\rho_{G}$ and $\rho_{\tilde{G}}$, we can
use Proposition~\ref{Proposition-and-operation} and Theorem~\ref{Theorem-lp-general}.
In particular, by Proposition~\ref{Proposition-and-operation} and
Theorem~\ref{Theorem-lp-general}, we obtain $\rho_{G}=\min_{i\in\{1,\ldots,c=3\}}\rho_{\mathcal{P}_{i}}$
where $\rho_{\mathcal{P}_{i}}=\sum_{j=1}^{|\mathcal{P}_{i}|}\rho_{\mathcal{P}_{i}}^{\mathcal{P}_{i,j}}$.
Similarly, we have $\rho_{\tilde{G}}=\min_{i\in\{1,\ldots,\tilde{c}\}}\rho_{\tilde{\mathcal{P}}_{i}}$
and $\rho_{\tilde{\mathcal{P}}_{i}}=\sum_{j=1}^{|\tilde{\mathcal{P}}_{i}|}\rho_{\tilde{\mathcal{P}}_{i}}^{\tilde{\mathcal{P}}_{i,j}}$,
where $\tilde{c}=4$ is the number of paths (denoted by $\tilde{\mathcal{P}}_{i}$)
from the controller node $\tilde{v}_{\mathrm{C}}$ to the plant node
$\tilde{v}_{\mathrm{P}}$ on the network $\tilde{G}$. Consequently,
Assumption~\ref{MainAssumption} holds with 
\begin{align*}
\rho & =\min_{i\in\{1,\ldots,c\}}\sum_{j=1}^{|\mathcal{P}_{i}|}\rho_{\mathcal{P}_{i}}^{\mathcal{P}_{i,j}}+\min_{i\in\{1,\ldots,\tilde{c}\}}\sum_{j=1}^{|\tilde{\mathcal{P}}_{i}|}\rho_{\tilde{\mathcal{P}}_{i}}^{\tilde{\mathcal{P}}_{i,j}}.
\end{align*}
The effect of the asymptotic packet failure ratios $\rho_{\mathcal{P}_{i}}^{\mathcal{P}_{i,j}}$,
$\rho_{\tilde{\mathcal{P}}_{i}}^{\tilde{\mathcal{P}}_{i,j}}$ on the
stability of the networked system can be analyzed by using Theorem~\ref{Stability-Theorem}.
First, we identify the values of asymptotic packet exchange failure
ratio $\rho$ in Assumption~\ref{MainAssumption}, for which the
stability conditions \eqref{eq:betacond}--\eqref{eq:betaandvarphicond}
hold. For this numerical example, there exist a positive-definite
matrix $P$ and scalars $\beta\in(0,1),\varphi\in[1,\infty)$ that
satisfy \eqref{eq:betacond}--\eqref{eq:betaandvarphicond}, when
$\rho$ is less than $0.411$. Hence, Theorem~\ref{Stability-Theorem}
guarantees that the zero solution of the closed-loop system \eqref{eq:system},
\eqref{eq:control-input} is asymptotically stable almost surely if
the asymptotic packet transmission failure ratios satisfy 
\begin{align}
\min_{i\in\{1,\ldots,c\}}\sum_{j=1}^{|\mathcal{P}_{i}|}\rho_{\mathcal{P}_{i}}^{\mathcal{P}_{i,j}}+\min_{i\in\{1,\ldots,\tilde{c}\}}\sum_{j=1}^{|\tilde{\mathcal{P}}_{i}|}\rho_{\tilde{\mathcal{P}}_{i}}^{\tilde{\mathcal{P}}_{i,j}} & \leq0.411.\label{eq:stability-condition}
\end{align}
The system operator can guarantee \eqref{eq:stability-condition}
by ensuring that at least one path in each network is sufficiently
secure and reliable. In particular, if there exist a path $\mathcal{P}_{i^{*}}$
in network $G$ and a path $\tilde{\mathcal{P}}_{i^{*}}$ in network
$\tilde{G}$ such that $\sum_{j=1}^{|\mathcal{P}_{i^{*}}|}\rho_{\mathcal{P}_{i^{*}}}^{\mathcal{P}_{i^{*},j}}+\sum_{j=1}^{|\mathcal{\tilde{P}}_{i^{*}}|}\rho_{\mathcal{\tilde{P}}_{i^{*}}}^{\mathcal{\tilde{P}}_{i^{*},j}}\leq0.411$,
then \eqref{eq:stability-condition} holds and the stability is guaranteed
regardless of the security/reliability of all other paths. Another
approach to guarantee \eqref{eq:stability-condition} is to ensure
a certain level of security/reliability for all links. For this example,
if all links are sufficiently secure and reliable so that $\rho_{\mathcal{P}_{i}}^{\mathcal{P}_{i,j}},\rho_{\mathcal{\tilde{P}}_{i}}^{\mathcal{\tilde{P}}_{i,j}}\leq\overline{\rho}\triangleq\frac{0.411}{6}$,
then \eqref{eq:stability-condition} holds and the stability is guaranteed.
To see this, first note that the network $G$ contains $3$ paths,
and the number of links on these paths are given by $2,3,3$. Furthermore,
the network $\tilde{G}$ contains $4$ paths, each of which contains
$4$ links. It follows that 
\begin{align*}
 & \min_{i\in\{1,\ldots,c\}}\sum_{j=1}^{|\mathcal{P}_{i}|}\rho_{\mathcal{P}_{i}}^{\mathcal{P}_{i,j}}+\min_{i\in\{1,\ldots,\tilde{c}\}}\sum_{j=1}^{|\tilde{\mathcal{P}}_{i}|}\rho_{\tilde{\mathcal{P}}_{i}}^{\tilde{\mathcal{P}}_{i,j}}\leq\min_{i\in\{1,\ldots,c\}}\sum_{j=1}^{|\mathcal{P}_{i}|}\overline{\rho}+\min_{i\in\{1,\ldots,\tilde{c}\}}\sum_{j=1}^{|\tilde{\mathcal{P}}_{i}|}\overline{\rho}\\
 & \quad=\min\{2\overline{\rho},3\overline{\rho},3\overline{\rho}\}+\min\{4\overline{\rho},4\overline{\rho},4\overline{\rho},4\overline{\rho}\}=6\overline{\rho}\leq0.411,
\end{align*}
 which implies that \eqref{eq:stability-condition} holds, and hence
stability is guaranteed.

Notice that in this example, we have not made any particular assumption
on the independence or randomness of the failures on the links. In
fact, all links may be subject to failures caused by actions of coordinated
adversaries. In such a case, the occurrence of the failures may be
nonrandom, and moreover, the binary-valued processes that characterize
failures on different links would depend on each other. For example,
in the case of data-corruption attacks, the worst-case scenario would
be that the failures on the paths are synchronized so that packet
transmissions necessarily fail in all parallel paths (such as paths
$\mathcal{P}_{1}$, $\mathcal{P}_{2}$, and $\mathcal{P}_{3}$ of
the graph $G$) at the same time. Notice that the condition $\rho_{\mathcal{P}_{i}}^{\mathcal{P}_{i,j}},\rho_{\mathcal{\tilde{P}}_{i}}^{\mathcal{\tilde{P}}_{i,j}}\leq\overline{\rho}$
guarantees that such failures happen sufficiently rarely in average
in the long run. Thus, networked stabilization can be achieved through
the successful exchanges of measurement and control data. 

In the following examples, we will illustrate how our results in Sections~\ref{sec:Random-and-Malicious}
and \ref{Sec:FailuresOnPaths} can be used for scenarios where some
information on the properties of the communication links are available. 

\begin{figure}
\centering  \includegraphics[width=0.9\columnwidth]{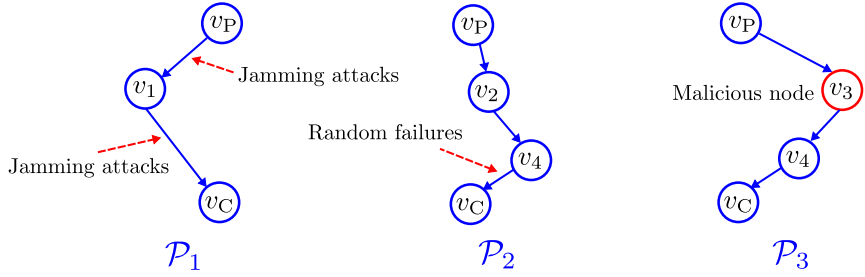} \vskip -3pt

\caption{\fontsize{9}{9}\selectfont{ Paths from the plant node $v_{\mathrm{P}}$
to the controller node $v_{\mathrm{C}}$ in network $G$} }
 \label{Flo:pathsofg}
\end{figure}

\begin{figure}
\centering  \includegraphics[width=0.9\columnwidth]{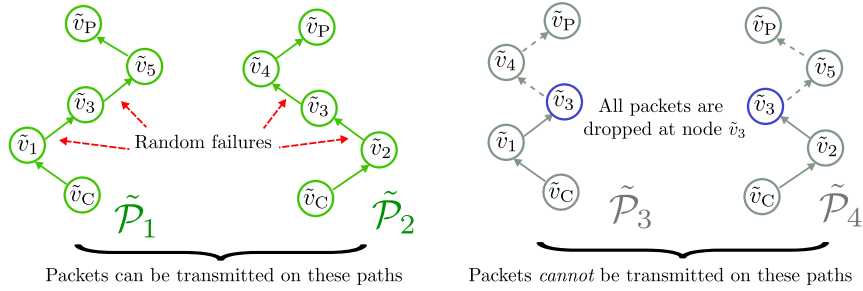}
\vskip -3pt

\caption{\fontsize{9}{9}\selectfont{ Paths from the controller node $\tilde{v}_{\mathrm{C}}$
to the plant node $\tilde{v}_{\mathrm{P}}$ in network $\tilde{G}$
} }
 \label{Flo:pathsofgtilde}
\end{figure}

\subsection{Jamming Attacks and Random Transmission Failures on Multiple Links\label{subsec:Jamming-Attacks-and}}

Consider the network $G$ and its paths shown in Fig.~\ref{Flo:pathsofg}.
We assume that the paths $\mathcal{P}_{1}$ and $\mathcal{P}_{3}$
are subject to malicious attacks. In particular, the node $v_{3}$
on $\mathcal{P}_{3}$ is assumed to be controlled by a malicious agent
and all packets arriving at node $v_{3}$ are dropped. Hence, we have
$l_{\mathcal{P}_{3}}^{\mathcal{P}_{3,2}}(t)=1$, $t\in\mathbb{N}_{0}$,
which implies that $l_{\mathcal{P}_{3}}(t)=1$, $t\in\mathbb{N}_{0}$.
Moreover, the first and the second links on path $\mathcal{P}_{1}$
are assumed to be subject to jamming attacks that cause data corruption.
Here the worst-case scenario happens when the attackers coordinate
and jam only one of the links at once. This would maximize the number
of packet losses within the total energy constraint of the attackers.
Our results take this worst-case into account. 

In particular, suppose that the attacked links $\mathcal{P}_{1,1}$
and $\mathcal{P}_{1,2}$ satisfy $l_{\mathcal{P}_{1}}^{\mathcal{P}_{1,1}}\in\Pi_{\kappa,w}$,
$l_{\mathcal{P}_{1}}^{\mathcal{P}_{1,2}}\in\Pi_{\kappa,w}$ with $\kappa\geq0$
and $w\in(0,1)$. Here, $w$ provides a bound on average failures
on each link and it is related to the energy available to each attacker.
The fact that $l_{\mathcal{P}_{1}}^{\mathcal{P}_{1,1}}\in\Pi_{\kappa,w}$,
$l_{\mathcal{P}_{1}}^{\mathcal{P}_{1,2}}\in\Pi_{\kappa,w}$ implies
$l_{\mathcal{P}_{1}}^{\mathcal{P}_{1,1}}\in\Lambda_{\rho_{\mathcal{P}_{1}}^{\mathcal{P}_{1,1}}}$,
$l_{\mathcal{P}_{1}}^{\mathcal{P}_{1,2}}\in\Lambda_{\rho_{\mathcal{P}_{1}}^{\mathcal{P}_{1,2}}}$
for all $\rho_{\mathcal{P}_{1}}^{\mathcal{P}_{1,1}},\rho_{\mathcal{P}_{1}}^{\mathcal{P}_{1,1}}\in(w,1]$.
It then follows from Proposition~\ref{Proposition-lp-or} that $\rho_{\mathcal{P}_{1}}=\rho_{\mathcal{P}_{1}}^{\mathcal{P}_{1,1}}+\rho_{\mathcal{P}_{1}}^{\mathcal{P}_{1,1}}$,
and hence $l_{\mathcal{P}_{1}}\in\Lambda_{\rho_{\mathcal{P}_{1}}}$
for $\rho_{\mathcal{P}_{1}}\in(2w,1]$. Notice that $l_{\mathcal{P}_{1}}\in\Lambda_{\rho_{\mathcal{P}_{1}}}$
holds even in the worst-case scenario mentioned above. Here $\rho_{\mathcal{P}_{1}}\in(2w,1]$
provides an upper-bound on the average number of overall failures
on path $\mathcal{P}_{1}$ in the worst-case scenario. 

We further assume that the link $(v_{4},v_{\mathrm{C}})$ on path
$\mathcal{P}_{2}$ is subject to random data corruption and the associated
failure indicator process $\{l_{\mathcal{P}_{2}}^{\mathcal{P}_{2,3}}(t)\in\{0,1\}\}_{t\in\mathbb{N}_{0}}$
satisfies $l_{\mathcal{P}_{2}}^{\mathcal{P}_{2,3}}\in\Gamma_{p_{0},p_{1}}$
with $p_{0},p_{1}\in(0,1)$. We consider ideal communication on the
other links on path $\mathcal{P}_{2}$. Thus, by using \eqref{eq:lp-or},
we obtain 
\begin{align*}
 & l_{\mathcal{P}_{2}}(t)=l_{\mathcal{P}_{2}}^{\mathcal{P}_{2,1}}(t)\vee l_{\mathcal{P}_{2}}^{\mathcal{P}_{2,2}}(t)\vee l_{\mathcal{P}_{2}}^{\mathcal{P}_{2,3}}(t)=0\vee0\vee l_{\mathcal{P}_{2}}^{\mathcal{P}_{2,3}}(t)=l_{\mathcal{P}_{2}}^{\mathcal{P}_{2,3}}(t),\quad t\in\mathbb{N}_{0},
\end{align*}
 and hence we also have $l_{\mathcal{P}_{2}}\in\Gamma_{p_{0},p_{1}}$. 

To characterize overall failures on network $G$, we use \eqref{eq:lg-multiple-paths}
and obtain 
\begin{align*}
 & l_{G}(t)=l_{\mathcal{P}_{1}}(t)\wedge l_{\mathcal{P}_{2}}(t)\wedge l_{\mathcal{P}_{3}}(t)=l_{\mathcal{P}_{1}}(t)\wedge l_{\mathcal{P}_{2}}(t)\wedge1=l_{\mathcal{P}_{1}}(t)\wedge l_{\mathcal{P}_{2}}(t),\quad t\in\mathbb{N}_{0}.
\end{align*}
 Now, assuming that the failures on paths $\mathcal{P}_{1}$ and $\mathcal{P}_{2}$
are mutually-independent, it follows from Theorem~\ref{Theorem-Gamma-Lambda}
with $l^{(1)}(t)=l_{\mathcal{P}_{2}}(t)$, $l^{(2)}(t)=l_{\mathcal{P}_{1}}(t)$,
$p_{0}^{(1)}=p_{0}$, $p_{1}^{(1)}=p_{1}$, and $\rho^{(2)}=\rho_{\mathcal{P}_{1}}$
that if $p_{1}\rho_{\mathcal{P}_{1}}<1$, then $l_{G}\in\Lambda_{\rho_{G}}$
with $\rho_{G}\in(p_{1}\rho_{\mathcal{P}_{1}},1]$. Noting that $\rho_{\mathcal{P}_{1}}\in(2w,1]$,
we see that if $2p_{1}w<1$, then $l_{G}\in\Lambda_{\rho_{G}}$ for
all $\rho_{G}\in(2p_{1}w,1]$. 

Next, consider the network $\tilde{G}$. Suppose that $\tilde{G}$
is secure against attacks, but it is unreliable and subject to random
transmission failures. To characterize the overall transmission failures
on the network $\tilde{G}$, we utilize Theorems~\ref{Theorem-HiddenMarkov}
and \ref{Theorem-HiddenMarkov-1}. Before we apply these results,
we first describe the routing scheme on this network. Specifically,
in this network $\tilde{v}_{3}$ is assumed to be a router that forwards
all incoming packets from node $\tilde{v}_{1}$ only to node $\tilde{v}_{5}$,
and all incoming packets from node $\tilde{v}_{2}$ only to node $\tilde{v}_{4}$.
Among the paths shown in Fig.~\ref{Flo:pathsofgtilde}, packets may
be transmitted on $\tilde{\mathcal{P}}_{1}$ and $\tilde{\mathcal{P}}_{2}$,
but are never transmitted on $\tilde{\mathcal{P}}_{3}$ and $\tilde{\mathcal{P}}_{4}$
due to the routing scheme. Hence $l_{\tilde{P}_{3}}(t)=1$, $l_{\tilde{P}_{4}}(t)=1$,
$t\in\mathbb{N}_{0}$. 

We assume that the links $(\tilde{v}_{1},\tilde{v}_{3})$ and $(\tilde{v}_{3},\tilde{v}_{5})$
on path $\tilde{\mathcal{P}}_{1}$ and the links $(\tilde{v}_{2},\tilde{v}_{3})$
and $(\tilde{v}_{3},\tilde{v}_{4})$ on $\tilde{\mathcal{P}}_{2}$
face random data corruption issues. Furthermore, the failure indicator
processes $\{l_{\tilde{\mathcal{P}}_{1}}^{(\tilde{v}_{1},\tilde{v}_{3})}(t)\in\{0,1\}\}_{t\in\mathbb{N}_{0}}$,
$\{l_{\tilde{\mathcal{P}}_{1}}^{(\tilde{v}_{3},\tilde{v}_{5})}(t)\in\{0,1\}\}_{t\in\mathbb{N}_{0}}$,
$\{l_{\tilde{\mathcal{P}}_{2}}^{(\tilde{v}_{2},\tilde{v}_{3})}(t)\in\{0,1\}\}_{t\in\mathbb{N}_{0}}$,
$\{l_{\tilde{\mathcal{P}}_{2}}^{(\tilde{v}_{3},\tilde{v}_{4})}(t)\in\{0,1\}\}_{t\in\mathbb{N}_{0}}$
are assumed to be mutually independent processes that belong to the
hidden Markov model class $\Gamma_{p_{0},p_{1}}$ with $p_{0},p_{1}\in(0,1)$
(see Definition~\ref{Definition-Hidden-Markov}). The links that
are connected directly to the plant and the controller nodes ($\tilde{v}_{\mathrm{P}}$
and $\tilde{v}_{\mathrm{C}}$) are considered to be ideal communication
links. In other words, $l_{\tilde{\mathcal{P}}_{i}}^{\tilde{\mathcal{P}}_{i,1}}(t)=0$,
$l_{\tilde{\mathcal{P}}_{i}}^{\tilde{\mathcal{P}}_{i,4}}(t)=0$, $t\in\mathbb{N}_{0}$,
$i\in\{1,2\}$. 

Now, observe that the failure indicators for paths $\tilde{\mathcal{P}}_{1}$
and $\tilde{\mathcal{P}}_{2}$ satisfy 
\begin{align*}
l_{\tilde{\mathcal{P}}_{i}}(t) & =l_{\tilde{\mathcal{P}}_{i}}^{\tilde{\mathcal{P}}_{i,1}}(t)\vee l_{\tilde{\mathcal{P}}_{i}}^{\tilde{\mathcal{P}}_{i,2}}(t)\vee l_{\tilde{\mathcal{P}}_{i}}^{\tilde{\mathcal{P}}_{i,3}}(t)\vee l_{\tilde{\mathcal{P}}_{i}}^{\tilde{\mathcal{P}}_{i,4}}(t)=0\vee l_{\tilde{\mathcal{P}}_{i}}^{\tilde{\mathcal{P}}_{i,2}}(t)\vee l_{\tilde{\mathcal{P}}_{i}}^{\tilde{\mathcal{P}}_{i,3}}(t)\vee0\\
 & =l_{\tilde{\mathcal{P}}_{i}}^{\tilde{\mathcal{P}}_{i,2}}(t)\vee l_{\tilde{\mathcal{P}}_{i}}^{\tilde{\mathcal{P}}_{i,3}}(t),\quad t\in\mathbb{N}_{0},\,\,i\in\{1,2\}.
\end{align*}
 By applying Theorem~\ref{Theorem-HiddenMarkov-1} with $l^{(1)}(t)=l_{\tilde{\mathcal{P}}_{i}}^{\tilde{\mathcal{P}}_{i,2}}(t)$,
$l^{(2)}(t)=l_{\tilde{\mathcal{P}}_{i}}^{\tilde{\mathcal{P}}_{i,3}}(t)$,
$p_{0}^{(1)}=p_{0}^{(2)}=p_{0}$, and $p_{1}^{(1)}=p_{1}^{(2)}=p_{1}$,
we obtain $l_{\tilde{\mathcal{P}}_{i}}\in\Gamma_{\tilde{p}_{0},\tilde{p}_{1}}$,
$i\in\{1,2\}$,  where $\tilde{p}_{0}=p_{0}^{2}$ and $\tilde{p}_{1}=\min\{p_{1}+p_{0}p_{1},p_{1}+p_{0}p_{1},1\}=\min\{p_{1}+p_{0}p_{1},1\}$. 

Next, since $l_{\tilde{P}_{3}}(t)=1$, $l_{\tilde{P}_{4}}(t)=1$,
$t\in\mathbb{N}_{0}$, we have 
\begin{align*}
l_{\tilde{G}}(t) & =l_{\tilde{\mathcal{P}}_{1}}(t)\wedge l_{\tilde{\mathcal{P}}_{2}}(t)\wedge l_{\tilde{\mathcal{P}}_{3}}(t)\wedge l_{\tilde{\mathcal{P}}_{4}}(t)=l_{\tilde{\mathcal{P}}_{1}}(t)\wedge l_{\tilde{\mathcal{P}}_{2}}(t)\wedge1\wedge1=l_{\tilde{\mathcal{P}}_{1}}(t)\wedge l_{\tilde{\mathcal{P}}_{2}}(t),
\end{align*}
 for $t\in\mathbb{N}_{0}$. It then follows from Theorem~\ref{Theorem-HiddenMarkov}
with $l^{(1)}(t)=l_{\tilde{\mathcal{P}}_{1}}(t)$, $l^{(2)}(t)=l_{\tilde{\mathcal{P}}_{2}}(t)$,
$p_{0}^{(1)}=p_{0}^{(2)}=p_{0}^{2}$, and $p_{1}^{(1)}=p_{1}^{(2)}=\min\{p_{1}+p_{0}p_{1},1\}$,
that we have $l_{\tilde{G}}\in\Gamma_{p_{0}^{\tilde{G}},p_{1}^{\tilde{G}}}$
with 
\begin{align*}
p_{0}^{\tilde{G}} & =\min\{p_{0}^{(1)}+p_{1}^{(1)}p_{0}^{(2)},p_{0}^{(2)}+p_{1}^{(2)}p_{0}^{(1)},1\}=\min\{p_{0}^{(1)}+p_{1}^{(1)}p_{0}^{(2)},1\}\\
 & =\min\{p_{0}^{2}+\min\{p_{1}+p_{0}p_{1},1\}p_{0}^{2},1\}=\min\{p_{0}^{2}+\min\{p_{0}^{2}p_{1}+p_{0}^{3}p_{1},p_{0}^{2}\},1\}\\
 & =\min\{\min\{p_{0}^{2}+p_{0}^{2}p_{1}+p_{0}^{3}p_{1},2p_{0}^{2}\},1\}=\min\{p_{0}^{2}+p_{0}^{2}p_{1}+p_{0}^{3}p_{1},2p_{0}^{2},1\}
\end{align*}
and $p_{1}^{\tilde{G}}=p_{1}^{(1)}p_{1}^{(2)}=(\min\{p_{1}+p_{0}p_{1},1\})^{2}=\min\{(p_{1}+p_{0}p_{1})^{2},1\}$.
Finally, in the case where $(p_{1}+p_{0}p_{1})^{2}<1$, as a consequence
of Proposition~\ref{PropositionGammaLambdaRelation}, we obtain $l_{\tilde{G}}\in\Lambda_{\tilde{\rho}_{G}}$
for $\rho_{\tilde{G}}\in(p_{1}^{\tilde{G}},1]=((p_{1}+p_{0}p_{1})^{2},1]$. 

Now, we note once again that the overall packet exchange failure indicator
$\{l(t)\in\{0,1\}\}_{t\in\mathbb{N}_{0}}$ satisfies $l\in\Lambda_{\rho}$
with $\rho=\rho_{G}+\rho_{\tilde{G}}$. Since $\rho_{G}\in(2p_{1}w,1]$
and $\rho_{\tilde{G}}\in((p_{1}+p_{0}p_{1})^{2},1]$, it follows that
if $2p_{1}w+(p_{1}+p_{0}p_{1})^{2}<1$, then $l\in\Lambda_{\rho}$
with $\rho\in(2p_{1}w+(p_{1}+p_{0}p_{1})^{2},1]$. As we discussed
in Section~\ref{subsec:Data-corruption}, stability of the networked
control system can be ensured when $\rho\leq0.411$. It follows that
if 
\begin{align}
2p_{1}w+(p_{1}+p_{0}p_{1})^{2} & <0.411,\label{eq:secondexamplesufficientcondition}
\end{align}
then $l\in\Lambda_{\rho}$ holds with $\rho=0.411$, implying almost
sure asymptotic stability of the networked control system. Observe
that by utilizing the results in Sections~\ref{sec:Random-and-Malicious}
and \ref{Sec:FailuresOnPaths}, we are able to derive the sufficient
stability condition \eqref{eq:secondexamplesufficientcondition} in
terms of the attack rate $w\in(0,1)$ associated with the links on
path $\mathcal{P}_{1}$ that are attacked by jamming attackers as
well as random failure parameters $p_{0},p_{1}\in(0,1)$. 

\subsection{State-dependent Attacks by a Malicious Node\label{subsec:State-dependent-Attacks-by}}

In the scenarios discussed in Sections~\ref{subsec:Data-corruption}
and \ref{subsec:Jamming-Attacks-and}, the strategies of the attackers
are not specified. However, in certain cases an attacker may have
access to the state or control input information and be able to directly
cause transmission failures between the plant and the controller.
In such scenarios, the goal of the attacker might be to increase state
norm with small amount of attacks. In this section, our goal is to
illustrate such an attack strategy. In particular, we consider the
case where the plant node $v_{\mathrm{P}}$ is compromised by an attacker.
The attacker is assumed to have access to the state information. 

We consider an attack strategy that is based on an optimization problem.
In particular, the attacker at node $v_{\mathrm{P}}$ decides whether
to transmit the state information on links $\mathcal{P}_{1,1}=(v_{\mathrm{P}},v_{1})$,
$\mathcal{P}_{2,1}=(v_{\mathrm{P}},v_{2})$, and $\mathcal{P}_{3,1}=(v_{\mathrm{P}},v_{3})$
after solving an optimization problem for maximizing the norm of the
state at a future time. In particular, we consider the attack strategy
\eqref{eq:attackstr1}, \eqref{eq:attackstr2} discussed in Example~\ref{StateDependentOptimizationStrategy},
where we represent the attackers actions with a binary-valued process
$\{l_{\mathrm{A}}(t)\in\{0,1\}\}_{t\in\mathbb{N}_{0}}$.  In the case
where $l_{\mathrm{A}}(t)=0$, the attacker transmits the state packet
on $\mathcal{P}_{1,1}=(v_{\mathrm{P}},v_{1})$, $\mathcal{P}_{2,1}=(v_{\mathrm{P}},v_{2})$,
and $\mathcal{P}_{3,1}=(v_{\mathrm{P}},v_{3})$; moreover, $l_{\mathrm{A}}(t)=1$
indicates no transmission. 

Observe that in this scenario, we have 
\begin{align*}
l_{\mathcal{P}_{1}}^{\mathcal{P}_{1,1}}(t)= & l_{\mathcal{P}_{2}}^{\mathcal{P}_{2,1}}(t)=l_{\mathcal{P}_{1}}^{\mathcal{P}_{1,1}}(t)=l_{\mathrm{A}}(t),\quad t\in\mathbb{N}_{0}.
\end{align*}
 Since $l_{\mathrm{A}}\in\Pi_{\kappa,w}$, we have $l_{\mathrm{A}}\in\Lambda_{\rho_{\mathrm{A}}}$
with $\rho_{\mathrm{A}}\in(w,1]$. As a consequence, $l_{\mathcal{P}_{i}}^{\mathcal{P}_{i,1}}\in\Lambda_{\rho_{\mathcal{P}_{i}}^{\mathcal{P}_{i,1}}}$
with $\rho_{\mathcal{P}_{i}}^{\mathcal{P}_{i,1}}\in(w_{\mathrm{A}},1]$
for all $i\in\{1,2,3\}$. Notice that $l_{\mathrm{A}}(t)=1$ implies
$l(t)=1$, since the attacker can completely prevent the packet exchange
between the plant and the controller. Observe also that if there are
other sources of transmission failures on the network, then there
may be times when $l(t)=1$ even if $l_{\mathrm{A}}(t)=0$. As a result,
if $l_{\mathrm{A}}(t)=0$ then the attacker may not be able to correctly
predict the state $x(t+1)$ at time $t$, as there may or may not
be a failure in the network that prevents control input to reach the
plant. However, as the optimization problem in \eqref{eq:attackstr2}
is solved at each time step, the updated state information is used
for decision. 

Notice that in the case where all links other than $\mathcal{P}_{1,1}=(v_{\mathrm{P}},v_{1})$,
$\mathcal{P}_{2,1}=(v_{\mathrm{P}},v_{2})$, and $\mathcal{P}_{3,1}=(v_{\mathrm{P}},v_{3})$
are secure and reliable, we have $l_{G}(t)=l_{\mathrm{A}}(t)$, since
\begin{align*}
l_{G}(t) & =l_{\mathcal{P}_{1}}(t)\wedge l_{\mathcal{P}_{2}}(t)\wedge l_{\mathcal{P}_{3}}(t)=l_{\mathcal{P}_{1}}^{\mathcal{P}_{1,1}}(t)\wedge l_{\mathcal{P}_{2}}^{\mathcal{P}_{2,1}}(t)\wedge l_{\mathcal{P}_{3}}^{\mathcal{P}_{3,1}}(t)\\
 & =l_{\mathrm{A}}(t)\wedge l_{\mathrm{A}}(t)\wedge l_{\mathrm{A}}(t)=l_{\mathrm{A}}(t),\quad t\in\mathbb{N}_{0}.
\end{align*}
Hence, $l_{G}\in\Lambda_{\rho_{G}}$ with $\rho_{G}\in(w,1]$. 

Now suppose that the network $\tilde{G}$ also faces failures. In
particular, consider the setup in Section~\ref{subsec:Jamming-Attacks-and},
where $l_{\tilde{G}}\in\Lambda_{\tilde{\rho}_{G}}$ for $\rho_{\tilde{G}}\in(p_{1}^{\tilde{G}},1]=((p_{1}+p_{0}p_{1})^{2},1]$.
Since $l\in\Lambda_{\rho}$ with $\rho=\rho_{G}+\rho_{\tilde{G}}$,
we have $l\in\Lambda_{\rho}$ for all $\rho\in(w+(p_{1}+p_{0}p_{1})^{2},1]$.
Noting that the stability of the networked control system can be ensured
when $\rho\leq0.411$, we can impose a sufficient condition on the
attack rate $w$. Specifically, for the scenario of this section,
if 
\begin{align}
w+(p_{1}+p_{0}p_{1})^{2} & <0.411,\label{eq:thirdexamplesufficientcondition}
\end{align}
then $l\in\Lambda_{\rho}$ holds with $\rho=0.411$, and thus the
networked control system is almost surely asymptotically stable.

To illustrate the effect of different parameters in the attack strategy
\eqref{eq:attackstr2}, we conduct simulations. First, we generate
$50$ sample trajectories of the process $l_{\tilde{G}}$ by setting
the failure processes $l_{\tilde{\mathcal{P}}_{i}}^{\tilde{\mathcal{P}}_{i,2}}$,
$l_{\tilde{\mathcal{P}}_{i}}^{\tilde{\mathcal{P}}_{i,3}}$, $i\in\{1,2\}$,
as outputs of time-inhomogeneous hidden Markov models such that 
\begin{align}
l_{\tilde{\mathcal{P}}_{i}}^{\tilde{\mathcal{P}}_{i,j}}(t) & =\theta_{i,j}(t),\quad t\in\mathbb{N}_{0},\quad i\in\{1,2\},\quad j\in\{2,3\},\label{eq:hiddenmarkov-1}
\end{align}
where $\{\theta_{i,j}(t)\in\Theta^{l}=\{0,1\}\}_{t\in\mathbb{N}_{0}}$
is an $\mathcal{F}_{t}$-adapted finite-state time-inhomogeneous Markov
chain with initial distributions and time-varying transition probability
functions satisfying $\mathbb{P}[\theta_{i,j}(0)=0]=0.6$ and 
\begin{align*}
 & \mathbb{P}[\theta_{i,j}(t+1)=1|\theta_{i,j}(t)=0]=0.2+0.02\cos^{2}(0.1t),\\
 & \mathbb{P}[\theta_{i,j}(t+1)=1|\theta_{i,j}(t)=1]=0.2+0.02\sin^{2}(0.2t),
\end{align*}
 for $t\in\mathbb{N}_{0},i\in\{1,2\},j\in\{2,3\}$. Notice that $l_{\tilde{\mathcal{P}}_{i}}^{\tilde{\mathcal{P}}_{i,2}},l_{\tilde{\mathcal{P}}_{i}}^{\tilde{\mathcal{P}}_{i,3}}\in\Gamma_{p_{0},p_{1}}$,
$i\in\{1,2\}$, with $p_{0}=0.8$, $p_{1}=0.22$. Next, for each sample
trajectory of $l_{\tilde{G}}$, we simulate the networked control
system \eqref{eq:system}, \eqref{eq:control-input} under the attack
strategy \eqref{eq:attackstr2}. In Fig.~\ref{Flo:ComparisonN},
we show state trajectories when attack parameters are selected as
$w=0.25$ and $\kappa=1$. In top part of Fig.~\ref{Flo:ComparisonN},
the horizon parameter $N=2$, and in the bottom part $N=10$. Observe
that with larger horizon variable, the attacker following the strategy
\eqref{eq:attackstr2} can utilize the resources more efficiently
so that in certain sample state trajectories the state norm is set
to larger values for longer durations even though in both cases $N=2$
and $N=10$, the attack belongs to the same class $\Pi_{\kappa,w}$.
Notice that with $p_{0}=0.8$, $p_{1}=0.22$, and $w=0.25$, the inequality
\eqref{eq:thirdexamplesufficientcondition} holds. Therefore, the
zero solution of the closed-loop system is almost surely asymptotically
stable by Theorem~\ref{Stability-Theorem}. 

\begin{figure}
\centering  \includegraphics[width=0.75\columnwidth]{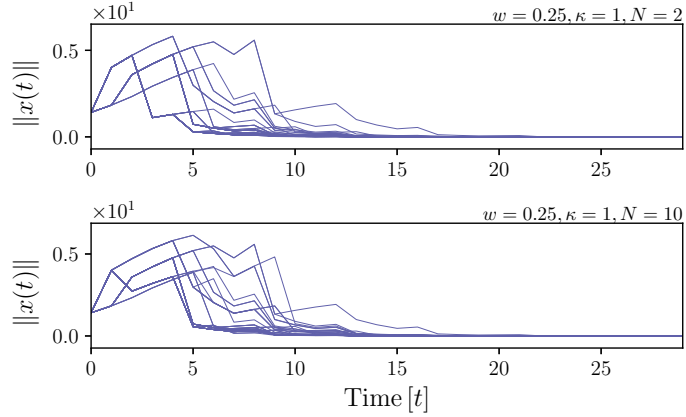}
\vskip -3pt

\caption{\fontsize{9}{9}\selectfont{ Comparison of state trajectories for
different values of $N$ in \eqref{eq:attackstr2} } }
 \label{Flo:ComparisonN}
\end{figure}

\begin{figure}
\centering  \includegraphics[width=0.75\columnwidth]{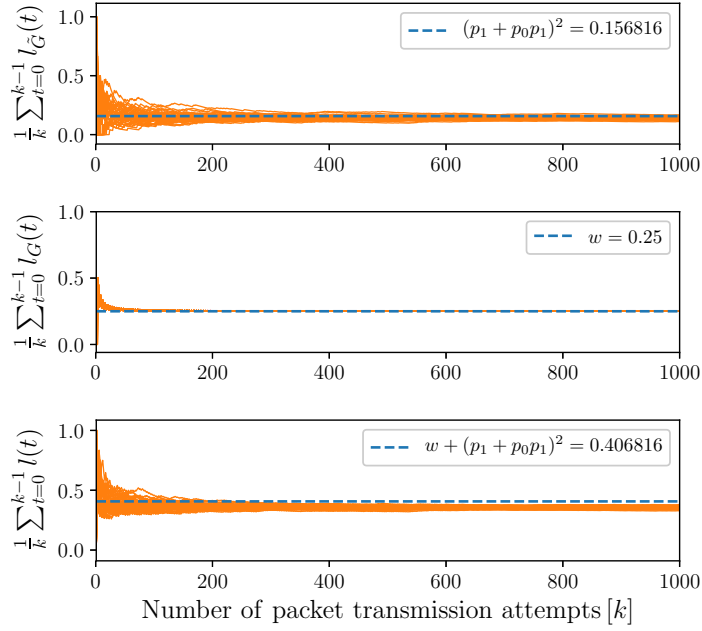}
\vskip -3pt

\caption{\fontsize{9}{9}\selectfont{ Average number of failures on networks
$\tilde{G}$ and $G$ together with average number of overall packet
exchange failures } }
 \label{Flo:AverageFailures}
\end{figure}

For $w=0.25$, $\kappa=1$, and $N=2$, we show in Fig.~\ref{Flo:AverageFailures}
the average number of failures on the networks $\tilde{G}$ and $G$
as well as the average number of total packet exchange failures between
the plant and the controller. These averages are upper bounded in
the long run by certain scalars. In particular, for a process $l$
that satisfies $l\in\Lambda_{\rho}$ with $\rho\in(\underline{\rho},1]$,
we have $\limsup_{k\to\infty}\frac{1}{k}\sum_{t=0}^{k-1}l(t)\leq\underline{\rho}$
(see Lemma~3.3 in \cite{cetinkaya-tac}). As a result, we have $\limsup_{k\to\infty}\frac{1}{k}\sum_{t=0}^{k-1}l_{\tilde{G}}(t)\leq(p_{1}+p_{0}p_{1})^{2}$,
$\limsup_{k\to\infty}\frac{1}{k}\sum_{t=0}^{k-1}l_{G}(t)\leq w$,
and $\limsup_{k\to\infty}\frac{1}{k}\sum_{t=0}^{k-1}l(t)\leq w+(p_{1}+p_{0}p_{1})^{2}$. 

Next, we consider the case where $\kappa=10$, which corresponds to
the case where the attacker has more initial resources. In this case,
we consider two scenarios $w=0.25$ and $w=0.75$, for which we obtain
the sample state trajectories shown in Fig.~\ref{Flo:Comparisonw}.
First, notice that in the case with $\kappa=10,w=0.25$ (top plot
in Fig.~\ref{Flo:Comparisonw}), the state grows to larger values
for longer durations in comparison to the case with $\kappa=1,w=0.25$
(top plot in Fig.~\ref{Flo:ComparisonN}). In both cases, the stability
conditions hold with $w=0.25$, and therefore the state eventually
converges to zero as guaranteed by Theorem~\ref{Stability-Theorem}.
Note that, when $w$ is set to larger values, the attacker can attack
at a higher rate. For this example, we set $w=0.75$ and observe that
for such a high attack rate all sample state trajectories diverge
(see bottom plot in Fig.~\ref{Flo:Comparisonw}). 

\begin{figure}
\centering  \includegraphics[width=0.75\columnwidth]{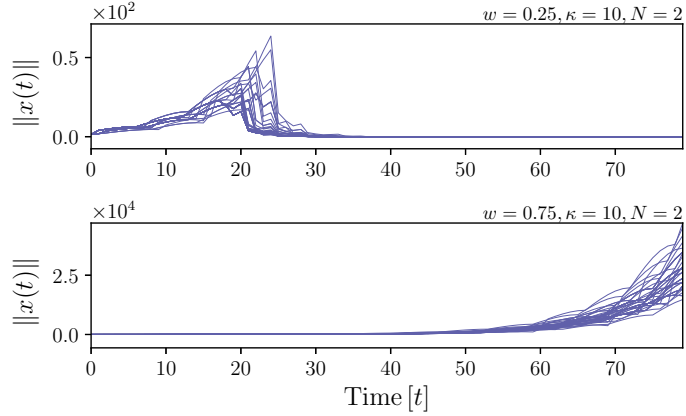}
\vskip -3pt

\caption{\fontsize{9}{9}\selectfont{ Comparison of state trajectories for
different values of $w$ in \eqref{eq:attackstr1}} }
 \label{Flo:Comparisonw}
\end{figure}

\section{Conclusion}

\label{sec:Conclusion}

In this paper, we explored state feedback control of a linear plant
over an unreliable network that may also face malicious attacks. We
developed a probabilistic approach to characterize the failures on
the network in terms of the failures on different paths between the
plant and the controller. We showed that the failures on each path
can be described as a combination of data-corruption and packet-dropout
failures on the communication links of that particular path. We obtained
sufficient conditions for almost sure asymptotic stability of the
overall networked control system, which allow us to check stability
by using a probabilistic upper-bound obtained for the average number
of packet exchange failures between the plant and the controller.

Our framework can take into account mutual independence/dependence
relationships between the failures on different links and paths of
a network, and as a result, it can provide relatively tight upper
bounds for the long run average number of overall transmission failures
on a network. This is achieved by utilizing upper bounds on the tail
probabilities of sums involving a binary-valued process from a general
class together with a binary-valued output process associated with
a time-inhomogeneous hidden Markov model.

\appendix

\section{Proof of Lemma~\ref{KeyMarkovLemma}}

In this section, we prove Lemma~\ref{KeyMarkovLemma}. As in the
proof of Lemma~A.1 of \cite{cetinkaya-tac}, for proving Lemma~\ref{KeyMarkovLemma},
we use Markov's inequality and follow the approaches used for obtaining
Chernoff-type tail distribution inequalities for sums of random variables
(see Appendix B of \cite{madhow2008fundamentals} and Section 1.9
of \cite{billingsley1986}). In the proof Lemma~\ref{KeyMarkovLemma},
the following result also plays a key role. 

\vskip 10pt

\begin{aplemma} \label{ExpectationLemma} Let $\{\xi(t)\in\Xi\}_{t\in\mathbb{N}_{0}}$
be a finite-state time-inhomogeneous Markov chain with transition
probabilities $p_{q,r}\colon\mathbb{N}_{0}\to[0,1]$, $q,r\in\Xi$.
Furthermore, let $\Xi_{1}\subset\Xi$ be given by $\Xi_{1}\triangleq\{r\in\Xi\colon h(r)=1\}$,
where $h\colon\Xi\to\{0,1\}$ is a binary-valued function. Then for
all $\phi>1$, $s\in\mathbb{N}$, and $\tilde{p}\in[0,1]$ such that
\begin{align}
\sum_{r\in\Xi_{1}}p_{q,r}(t)\leq\tilde{p},\quad q\in\Xi,\quad t\in\mathbb{N}_{0},\label{eq:transitioncondition}
\end{align}
 we have 
\begin{align}
\mathbb{E}[\phi^{\sum_{j=1}^{s}h(\xi(t_{j}))}] & \leq\phi\left((\phi-1)\tilde{p}+1\right)^{s-1},\label{eq:lemmaresultfors}
\end{align}
 where $t_{1},t_{2},\ldots,t_{s}\in\mathbb{N}_{0}$ denote indices
such that $0\leq t_{1}<t_{2}<\ldots<t_{s}$. \end{aplemma}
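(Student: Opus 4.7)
The plan is to prove the bound by induction on $s$, exploiting the Markov property by conditioning on $\xi(t_{s-1})$. The base case $s=1$ is immediate: since $h(\xi(t_1))\in\{0,1\}$ and $\phi>1$, we have $\mathbb{E}[\phi^{h(\xi(t_1))}]\leq\phi=\phi((\phi-1)\tilde{p}+1)^{0}$, which matches \eqref{eq:lemmaresultfors}.

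For the inductive step, I would use the tower property together with the $\sigma(\xi(t_1),\ldots,\xi(t_{s-1}))$-measurability of $\prod_{j=1}^{s-1}\phi^{h(\xi(t_j))}$ to obtain
\begin{align*}
\mathbb{E}\bigl[\phi^{\sum_{j=1}^{s}h(\xi(t_j))}\bigr]=\mathbb{E}\bigl[\phi^{\sum_{j=1}^{s-1}h(\xi(t_j))}\cdot\mathbb{E}[\phi^{h(\xi(t_s))}\mid\xi(t_{s-1})]\bigr].
\end{align*}
The crux is then the pointwise bound
\begin{align*}
\mathbb{E}[\phi^{h(\xi(t_s))}\mid\xi(t_{s-1})=q]\leq1+(\phi-1)\tilde{p},\quad q\in\Xi;
\end{align*}
once this is in hand, combining with the inductive hypothesis yields $\mathbb{E}[\phi^{\sum_{j=1}^{s}h(\xi(t_j))}]\leq(1+(\phi-1)\tilde{p})\cdot\phi((\phi-1)\tilde{p}+1)^{s-2}=\phi((\phi-1)\tilde{p}+1)^{s-1}$, closing the induction.

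The main obstacle is establishing this pointwise bound, since the assumption \eqref{eq:transitioncondition} controls only \emph{single-step} transitions into $\Xi_1$, whereas we need the analogous control over the potentially multi-step transition from time $t_{s-1}$ to time $t_s$. Writing the conditional expectation as $\sum_{r\in\Xi}p^{(t_{s-1},t_s)}_{q,r}\phi^{h(r)}$, where $p^{(t',t)}_{q,r}$ denotes the $(t-t')$-step transition probability, and using that $\phi^{h(r)}\in\{1,\phi\}$, the task reduces to showing $\sum_{r\in\Xi_1}p^{(t_{s-1},t_s)}_{q,r}\leq\tilde{p}$. I would handle this by invoking the Chapman--Kolmogorov identity $p^{(t_{s-1},t_s)}_{q,r}=\sum_{q'\in\Xi}p^{(t_{s-1},t_s-1)}_{q,q'}p_{q',r}(t_s-1)$, which after summing over $r\in\Xi_1$ becomes $\sum_{q'}p^{(t_{s-1},t_s-1)}_{q,q'}\sum_{r\in\Xi_1}p_{q',r}(t_s-1)$. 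The inner sum is bounded by $\tilde{p}$ uniformly in $q'$ by \eqref{eq:transitioncondition}, and the remaining sum over $q'$ equals $1$, giving the required bound. In the degenerate case $t_s=t_{s-1}+1$ the multi-step probability is simply $p_{q,r}(t_{s-1})$ and \eqref{eq:transitioncondition} applies directly.
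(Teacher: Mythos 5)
Your proof is correct, and it shares the paper's inductive skeleton (base case $s=1$, tower property, then a uniform bound $1+(\phi-1)\tilde{p}$ on the conditional factor), but the inductive step is resolved differently. The paper conditions on $\mathcal{F}_{t_{s}-1}$, i.e.\ the filtration one time step before $t_{s}$: since $t_{s-1}\leq t_{s}-1$, the factor $\phi^{\sum_{j=1}^{s-1}h(\xi(t_{j}))}$ is $\mathcal{F}_{t_{s}-1}$-measurable, and the Markov property reduces $\mathbb{E}[\phi^{h(\xi(t_{s}))}\mid\mathcal{F}_{t_{s}-1}]$ to $\mathbb{E}[\phi^{h(\xi(t_{s}))}\mid\xi(t_{s}-1)]$, so only a \emph{single-step} transition into $\Xi_{1}$ ever appears and \eqref{eq:transitioncondition} applies directly, with no multi-step kernels. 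You instead condition on $\xi(t_{s-1})$, which forces you to control the multi-step transition probability $\sum_{r\in\Xi_{1}}p^{(t_{s-1},t_{s})}_{q,r}$; your Chapman--Kolmogorov argument (peel off the last step, bound the inner sum by $\tilde{p}$ uniformly, and use that the outer sum is $1$) does establish $\sum_{r\in\Xi_{1}}p^{(t_{s-1},t_{s})}_{q,r}\leq\tilde{p}$, so the step goes through. The trade-off: the paper's choice of conditioning avoids multi-step transition kernels altogether and only invokes the Markov property relative to the filtration $\{\mathcal{F}_{t}\}$, while your route needs the extra (elementary) Chapman--Kolmogorov observation plus the sampled-time Markov property $\mathbb{E}[\phi^{h(\xi(t_{s}))}\mid\xi(t_{1}),\ldots,\xi(t_{s-1})]=\mathbb{E}[\phi^{h(\xi(t_{s}))}\mid\xi(t_{s-1})]$, which you use implicitly and which itself is justified by the same multi-step Markov property you invoke; stating that explicitly would make the argument airtight. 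In exchange, your version is self-contained with respect to the natural filtration of $\xi$ and makes transparent the (mildly stronger) fact that the one-step hypothesis \eqref{eq:transitioncondition} propagates to all multi-step kernels.
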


\begin{proof} We use induction to show \eqref{eq:lemmaresultfors}.
First, we consider the case where $s=1$. In this case, we have 
\begin{align}
\mathbb{E}[\phi^{\sum_{j=1}^{s}h(\xi(t_{j}))}] & =\mathbb{E}[\phi^{h(\xi(t_{1}))}]\leq\phi.\label{eq:resultfors1}
\end{align}
Next, consider the case where $s=2$. Observe that $t_{1}\leq t_{2}-1$,
thus the random variable $\xi(t_{1})$ is $\mathcal{F}_{t_{2}-1}$-measurable.
Consequently, 
\begin{align}
\mathbb{E}[\phi^{\sum_{j=1}^{s}h(\xi(t_{j}))}] & =\mathbb{E}[\phi^{h(\xi(t_{1}))}\phi^{h(\xi(t_{2}))}]=\mathbb{E}[\mathbb{E}[\phi^{h(\xi(t_{1}))}\phi^{h(\xi(t_{2}))}\mid\mathcal{F}_{t_{2}-1}]]\nonumber \\
 & =\mathbb{E}[\phi^{h(\xi(t_{1}))}\mathbb{E}[\phi^{h(\xi(t_{2}))}\mid\mathcal{F}_{t_{2}-1}]],\label{eq:cases2part1}
\end{align}
where the last equality follows from the fact that $\phi^{h(\xi(t_{1}))}$
is a measurable function of $\xi(t_{1})$, and hence, it is $\mathcal{F}_{t_{2}-1}$-measurable.
Now, since $\{\xi(t)\in\Xi\}_{t\in\mathbb{N}_{0}}$ is a Markov chain,
we have $\mathbb{E}[\phi^{h(\xi(t_{2}))}\mid\mathcal{F}_{t_{2}-1}]=\mathbb{E}[\phi^{h(\xi(t_{2}))}\mid\xi(t_{2}-1)]$.
Therefore, 
\begin{align}
 & \mathbb{E}[\phi^{\sum_{j=1}^{s}h(\xi(t_{j}))}]=\mathbb{E}[\phi^{h(\xi(t_{1}))}\mathbb{E}[\phi^{h(\xi(t_{2}))}\mid\xi(t_{2}-1)]]\nonumber \\
 & \,\,=\mathbb{E}\Big[\phi^{h(\xi(t_{1}))}\Big(\phi\mathbb{P}[h(\xi(t_{2}))=1\mid\xi(t_{2}-1)]+\mathbb{P}[h(\xi(t_{2}))=0\mid\xi(t_{2}-1)]\Big)\Big]\nonumber \\
 & \,\,=\mathbb{E}\Big[\phi^{h(\xi(t_{1}))}\Big(\phi\mathbb{P}[h(\xi(t_{2}))=1\mid\xi(t_{2}-1)]+1-\mathbb{P}[h(\xi(t_{2}))=1\mid\xi(t_{2}-1)]\Big)\Big]\nonumber \\
 & \,\,=\mathbb{E}\Big[\phi^{h(\xi(t_{1}))}\Big((\phi-1)\mathbb{P}[h(\xi(t_{2}))=1\mid\xi(t_{2}-1)]+1\Big)\Big].\label{eq:cases2part2}
\end{align}
Here we obtain $\mathbb{P}[h(\xi(t_{2}))=1\mid\xi(t_{2}-1)]=\mathbb{P}[\xi(t_{2})\in\Xi_{1}|\xi(t_{2}-1)]=\sum_{r\in\Xi_{1}}\mathbb{P}[\xi(t_{2})=r|\xi(t_{2}-1)]$.
Thus, by using \eqref{eq:transitioncondition} and \eqref{eq:resultfors1},
we arrive at 
\begin{align}
\mathbb{E}[\phi^{\sum_{j=1}^{s}h(\xi(t_{j}))}]\Big)\Big] & \leq\mathbb{E}\Big[\phi^{h(\xi(t_{1}))}\Big((\phi-1)\tilde{p}+1\Big)\Big]=\mathbb{E}[\phi^{h(\xi(t_{1}))}]((\phi-1)\tilde{p}+1)\nonumber \\
 & \leq\phi((\phi-1)\tilde{p}+1).\label{eq:resultfors2}
\end{align}
Hence, we have that \eqref{eq:lemmaresultfors} is satisfied for $s\in\{1,2\}$. 

Now, assume that \eqref{eq:lemmaresultfors} holds for $s=\tilde{s}>2$,
i.e., 
\begin{align}
\mathbb{E}[\phi^{\sum_{j=1}^{\tilde{s}}h(\xi(t_{j}))}] & \leq\phi\left((\phi-1)\tilde{p}+1\right)^{\tilde{s}-1}.\label{eq:resulttildes}
\end{align}
We will to prove that \eqref{eq:lemmaresultfors} holds for $s=\tilde{s}+1$.
By employing arguments similar to those used for obtaining \eqref{eq:cases2part1}--\eqref{eq:resultfors2},
we get 
\begin{align}
 & \mathbb{E}[\phi^{\sum_{j=1}^{\tilde{s}+1}h(\xi(t_{j}))}]=\mathbb{E}[\phi^{\sum_{j=1}^{\tilde{s}}h(\xi(t_{j}))}\phi^{h(\xi(t_{\tilde{s}+1}))}]\nonumber \\
 & \,\,=\mathbb{E}[\mathbb{E}[\phi^{\sum_{j=1}^{\tilde{s}}h(\xi(t_{j}))}\phi^{h(\xi(t_{\tilde{s}+1}))}\mid\mathcal{F}_{t_{\tilde{s}+1}-1}]]\nonumber \\
 & \,\,=\mathbb{E}[\phi^{\sum_{j=1}^{\tilde{s}}h(\xi(t_{j}))}\mathbb{E}[\phi^{h(\xi(t_{\tilde{s}+1}))}\mid\mathcal{F}_{t_{\tilde{s}+1}-1}]]\nonumber \\
 & \,\,=\mathbb{E}[\phi^{\sum_{j=1}^{\tilde{s}}h(\xi(t_{j}))}\mathbb{E}[\phi^{h(\xi(t_{\tilde{s}+1}))}\mid\xi(t_{\tilde{s}+1}-1)]]\leq\mathbb{E}[\phi^{\sum_{j=1}^{\tilde{s}}h(\xi(t_{j}))}]((\phi-1)\tilde{p}+1).\label{eq:casesfortildesplus1part1}
\end{align}
Finally, by \eqref{eq:resulttildes} and \eqref{eq:casesfortildesplus1part1},
we obtain \eqref{eq:lemmaresultfors} with $s=\tilde{s}+1$. \end{proof}

\vskip 10pt 

Next, by utilizing Lemma~\ref{ExpectationLemma}, we prove Lemma~\ref{KeyMarkovLemma}. 

\begin{proof}[Proof of Lemma~\ref{KeyMarkovLemma}] First, we define
\begin{align*}
\overline{h}(k) & \triangleq[h(\xi(0)),h(\xi(1)),\ldots,h(\xi(k-1))]^{\mathrm{T}},\\
\overline{\chi}(k) & \triangleq[\chi(0),\chi(1),\ldots,\chi(k-1)]^{\mathrm{T}},\quad k\in\mathbb{N}.
\end{align*}
Next, let $F_{s,k}\triangleq\{\overline{\chi}\in\{0,1\}^{k}\colon\overline{\chi}^{\mathrm{T}}\overline{\chi}=s\}$,
$s\in\{0,1,\ldots,k\}$, $k\in\mathbb{N}$. Here, we have $F_{s_{1},k}\cap F_{s_{2},k}=\emptyset$,
$s_{1}\neq s_{2}$, and moreover, $\mathbb{P}[\overline{\chi}(k)\in\cup_{s=0}^{k}F_{s,k}]=1$,
$k\in\mathbb{N}$. By utilizing these definitions we obtain for all
$\rho\in(\tilde{p}\tilde{w},1)$ and $k\in\mathbb{N}$, 
\begin{align}
\mathbb{P}[\sum_{t=0}^{k-1}h(\xi(t))\chi(t)>\rho k] & =\mathbb{P}[\overline{h}^{\mathrm{T}}(k)\overline{\chi}(k)>\rho k]\nonumber \\
 & =\sum_{s=0}^{k}\sum_{\overline{\chi}\in F_{s,k}}\mathbb{P}[\overline{h}^{\mathrm{T}}(k)\overline{\chi}(k)>\rho k\mid\overline{\chi}(k)=\overline{\chi}]\mathbb{P}[\overline{\chi}(k)=\overline{\chi}].\label{eq:xibarchibarequation}
\end{align}
 Since $\xi(\cdot)$ and $\chi(\cdot)$ are mutually-independent,
\begin{align}
\mathbb{P}[\overline{h}^{\mathrm{T}}(k)\overline{\chi}(k)>\rho k\mid\overline{\chi}(k)=\overline{\chi}] & =\mathbb{P}[\overline{h}^{\mathrm{T}}(k)\overline{\chi}>\rho k].\label{eq:conditionalprobabilityresolution}
\end{align}
Therefore, it follows from \eqref{eq:xibarchibarequation} and \eqref{eq:conditionalprobabilityresolution}
that for $k\in\mathbb{N}$, 
\begin{align}
\mathbb{P}[\sum_{t=0}^{k-1}h(\xi(t))\chi(t)>\rho k] & =\sum_{s=0}^{k}\sum_{\overline{\chi}\in F_{s,k}}\mathbb{P}[\overline{h}^{\mathrm{T}}(k)\overline{\chi}>\rho k]\mathbb{P}[\overline{\chi}(k)=\overline{\chi}]\nonumber \\
 & =\sum_{s=0}^{\lfloor\tilde{w}k\rfloor}\sum_{\overline{\chi}\in F_{s,k}}\mathbb{P}[\overline{h}^{\mathrm{T}}(k)\overline{\chi}>\rho k]\mathbb{P}[\overline{\chi}(k)=\overline{\chi}]\nonumber \\
 & \quad+\sum_{s=\lfloor\tilde{w}k\rfloor+1}^{k}\sum_{\overline{\chi}\in F_{s,k}}\mathbb{P}[\overline{h}^{\mathrm{T}}(k)\overline{\chi}>\rho k]\mathbb{P}[\overline{\chi}(k)=\overline{\chi}].\label{eq:new-two-terms}
\end{align}

In what follows, our goal is to find upper bounds for the two summation
terms in \eqref{eq:new-two-terms}. We start with the second term.
Since $\mathbb{P}[\overline{h}^{\mathrm{T}}(k)\overline{\chi}>\rho k]\leq1$,
$k\in\mathbb{N}$, we obtain
\begin{align}
 & \sum_{s=\lfloor\tilde{w}k\rfloor+1}^{k}\sum_{\overline{\chi}\in F_{s,k}}\mathbb{P}[\overline{h}^{\mathrm{T}}(k)\overline{\chi}>\rho k]\mathbb{P}[\overline{\chi}(k)=\overline{\chi}]\nonumber \\
 & \quad\leq\sum_{s=\lfloor\tilde{w}k\rfloor+1}^{k}\sum_{\overline{\chi}\in F_{s,k}}\mathbb{P}[\overline{\chi}(k)=\overline{\chi}]=\mathbb{P}[\sum_{t=0}^{k-1}\chi(t)>\tilde{w}k]=\tilde{\sigma}_{k},\label{eq:finalsigmainequality}
\end{align}
for $k\in\mathbb{N}$. Next, we obtain an upper bound for the first
term in \eqref{eq:new-two-terms}. Observe that $\mathbb{P}[\overline{h}^{\mathrm{T}}(k)\overline{\chi}>\rho k]=0$
for $\overline{\chi}\in F_{0,k}$. It then follows that, for all $k\in\mathbb{N}$
such that $\lfloor\tilde{w}k\rfloor=0$, we have 
\begin{align}
\sum_{s=0}^{\lfloor\tilde{w}k\rfloor}\sum_{\overline{\chi}\in F_{s,k}}\mathbb{P}[\overline{h}^{\mathrm{T}}(k)\overline{\chi}>\rho k]\mathbb{P}[\overline{\chi}(k)=\overline{\chi}] & =0.
\end{align}
Moreover, for all $k\in\mathbb{N}$ such that $\lfloor\tilde{w}k\rfloor\geq1$,
we obtain
\begin{align}
 & \sum_{s=0}^{\lfloor\tilde{w}k\rfloor}\sum_{\overline{\chi}\in F_{s,k}}\mathbb{P}[\overline{h}^{\mathrm{T}}(k)\overline{\chi}>\rho k]\mathbb{P}[\overline{\chi}(k)=\overline{\chi}]=\sum_{s=1}^{\lfloor\tilde{w}k\rfloor}\sum_{\overline{\chi}\in F_{s,k}}\mathbb{P}[\overline{h}^{\mathrm{T}}(k)\overline{\chi}>\rho k]\mathbb{P}[\overline{\chi}(k)=\overline{\chi}].\label{eq:sumoverchibar}
\end{align}
To obtain an upper bound for the term $\mathbb{P}[\overline{h}^{\mathrm{T}}(k)\overline{\chi}>\rho k]$,
we will utilize Markov's inequality and Lemma~\ref{ExpectationLemma}.
First, for $s\in\{1,2,\ldots,\lfloor\tilde{w}k\rfloor\}$, let $t_{1}(\overline{\chi}),t_{2}(\overline{\chi}),\ldots,t_{s}(\overline{\chi})$
denote the indices of the nonzero entries of $\overline{\chi}\in F_{s,k}$
such that $t_{1}(\overline{\chi})<t_{2}(\overline{\chi})<\cdots<t_{s}(\overline{\chi})$.
Consequently, 
\begin{align}
 & \mathbb{P}[\overline{h}^{\mathrm{T}}(k)\overline{\chi}>\rho k]=\mathbb{P}[\sum_{j=1}^{s}\overline{h}_{t_{j}(\bar{\chi})}(k)>\rho k]=\mathbb{P}[\sum_{j=1}^{s}h(\xi(t_{j}(\overline{\chi})-1))>\rho k],\label{eq:xibarequation}
\end{align}
for $\overline{\chi}\in F_{s,k}$, $s\in\{1,2,\ldots,\lfloor\tilde{w}k\rfloor\}$,
and $k\in\mathbb{N}$ such that $\lfloor\tilde{w}k\rfloor\geq1$. 

Now observe that $\phi>1$, since $\rho\in(\tilde{p}\tilde{w},\tilde{w})$.
By using Markov's inequality, we obtain 
\begin{align}
\mathbb{P}[\overline{h}^{\mathrm{T}}(k)\overline{\chi}>\rho k] & \leq\mathbb{P}[\sum_{j=1}^{s}h(\xi(t_{j}(\overline{\chi})-1))\geq\rho k]=\mathbb{P}[\phi^{\sum_{j=1}^{s}h(\xi(t_{j}(\overline{\chi})-1))}\geq\phi^{\rho k}]\nonumber \\
 & \leq\phi^{-\rho k}\mathbb{E}[\phi^{\sum_{j=1}^{s}h(\xi(t_{j}(\overline{\chi})-1))}].\label{eq:xiphiineq}
\end{align}
It follows from Lemma~\ref{ExpectationLemma} that $\mathbb{E}[\phi^{\sum_{j=1}^{s}h(\xi(t_{j}(\overline{\chi})-1))}]\leq\phi\left((\phi-1)\tilde{p}+1\right)^{s-1}$.
This inequality together with \eqref{eq:sumoverchibar} and \eqref{eq:xiphiineq}
imply that for all $k\in\mathbb{N}$ such that $\lfloor\tilde{w}k\rfloor\geq1$,
\begin{align}
 & \sum_{s=0}^{\lfloor\tilde{w}k\rfloor}\sum_{\overline{\chi}\in F_{s,k}}\mathbb{P}[\overline{h}^{\mathrm{T}}(k)\overline{\chi}>\rho k]\mathbb{P}[\overline{\chi}(k)=\overline{\chi}]\nonumber \\
 & \quad\leq\sum_{s=1}^{\lfloor\tilde{w}k\rfloor}\sum_{\overline{\chi}\in F_{s,k}}\phi^{-\rho k}\phi\left((\phi-1)\tilde{p}+1\right)^{s-1}\mathbb{P}[\overline{\chi}_{k}=\overline{\chi}]\nonumber \\
 & \quad=\phi^{-\rho k+1}\sum_{s=1}^{\lfloor\tilde{w}k\rfloor}\left((\phi-1)\tilde{p}+1\right)^{s-1}\sum_{\overline{\chi}\in F_{s,k}}\mathbb{P}[\overline{\chi}_{k}=\overline{\chi}]\nonumber \\
 & \quad=\phi^{-\rho k+1}\sum_{s=1}^{\lfloor\tilde{w}k\rfloor}\left((\phi-1)\tilde{p}+1\right)^{s-1}\mathbb{P}[\overline{\chi}_{k}\in F_{s,k}]\nonumber \\
 & \quad\leq\phi^{-\rho k+1}\sum_{s=1}^{\lfloor\tilde{w}k\rfloor}\left((\phi-1)\tilde{p}+1\right)^{s-1}.\label{eq:finaxichiinequality}
\end{align}
Notice that to obtain the last inequality in \eqref{eq:finaxichiinequality},
we employed the fact that $\mathbb{P}[\overline{\chi}_{k}\in F_{s,k}]\leq1$.
The summation term on the far right-hand side of \eqref{eq:finaxichiinequality}
satisfies 
\begin{align}
 & \sum_{s=1}^{\lfloor\tilde{w}k\rfloor}\left((\phi-1)\tilde{p}+1\right)^{s-1}=\frac{\left((\phi-1)\tilde{p}+1\right)^{\lfloor\tilde{w}k\rfloor}-1}{\left((\phi-1)\tilde{p}+1\right)-1}\leq\frac{\left((\phi-1)\tilde{p}+1\right)^{\tilde{w}k}-1}{(\phi-1)\tilde{p}}.\label{eq:geometricseriesfinitesum}
\end{align}
Therefore, from \eqref{eq:finaxichiinequality} and \eqref{eq:geometricseriesfinitesum},
we obtain 
\begin{align}
\sum_{s=0}^{\lfloor\tilde{w}k\rfloor}\sum_{\overline{\chi}\in F_{s,k}}\mathbb{P}[\overline{h}^{\mathrm{T}}(k)\overline{\chi}>\rho k]\mathbb{P}[\overline{\chi}(k)=\overline{\chi}] & \leq\phi^{-\rho k+1}\frac{\left((\phi-1)\tilde{p}+1\right)^{\tilde{w}k}-1}{(\phi-1)\tilde{p}},\label{eq:finalphiwinequality}
\end{align}
for all $k\in\mathbb{N}$ such that $\lfloor\tilde{w}k\rfloor\geq1$.
The right-hand side of this inequality is zero if $\lfloor\tilde{w}k\rfloor=0$.
Therefore, \eqref{eq:finalphiwinequality} holds also for all $k\in\mathbb{N}$.
By using this fact together with \eqref{eq:new-two-terms}, \eqref{eq:finalsigmainequality},
we arrive at \eqref{eq:keylemmaresult1}.

Next, we show $\sum_{k=1}^{\infty}\psi_{k}<\infty$. First of all,
we have 
\begin{align}
 & \sum_{k=1}^{\infty}\phi^{-\rho k+1}\frac{\left((\phi-1)\tilde{p}+1\right)^{\tilde{w}k}-1}{(\phi-1)\tilde{p}}\nonumber \\
 & \quad=\frac{\phi}{(\phi-1)\tilde{p}}\sum_{k=1}^{\infty}\phi^{-\rho k}\left((\phi-1)\tilde{p}+1\right)^{\tilde{w}k}-\frac{\phi}{(\phi-1)\tilde{p}}\sum_{k=1}^{\infty}\phi^{-\rho k}.\label{eq:psisum}
\end{align}
We will prove that the summation terms on the right-hand side of \eqref{eq:psisum}
are both convergent. First, we have $\phi^{-\rho}<1$, because $\phi>1$.
Therefore, the geometric series $\sum_{k=1}^{\infty}\phi^{-\rho k}$
converges, that is, $\sum_{k=1}^{\infty}\phi^{-\rho k}<\infty$. Next,
we show $\phi^{-\rho}\left((\phi-1)\tilde{p}+1\right)^{\tilde{w}}<1$.
Observe that 
\begin{align}
\phi^{-\rho}\left((\phi-1)\tilde{p}+1\right)^{\tilde{w}} & =\left(\phi^{-\frac{\rho}{\tilde{w}}}\left((\phi-1)\tilde{p}+1\right)\right)^{\tilde{w}}.\label{eq:wpower}
\end{align}
Moreover, 
\begin{align*}
\phi^{-\frac{\rho}{\tilde{w}}}\left((\phi-1)\tilde{p}+1\right) & =\left(\frac{\frac{\rho}{\tilde{w}}(1-\tilde{p})}{\tilde{p}(1-\frac{\rho}{\tilde{w}})}\right)^{-\frac{\rho}{\tilde{w}}}\left(\left(\frac{\frac{\rho}{\tilde{w}}(1-\tilde{p})}{\tilde{p}(1-\frac{\rho}{\tilde{w}})}-1\right)\tilde{p}+1\right)\\
 & =\left(\frac{\tilde{p}\tilde{w}}{\rho}\right)^{\frac{\rho}{\tilde{w}}}\left(\frac{1-\tilde{p}}{1-\frac{\rho}{\tilde{w}}}\right)^{1-\frac{\rho}{\tilde{w}}}.
\end{align*}
 Here, we have $\frac{\tilde{p}\tilde{w}}{\rho},\frac{1-\tilde{p}}{1-\frac{\rho}{\tilde{w}}}\in(0,1)\cup(1,\infty)$.
As a result, since $\ln v<v-1$ for any $v\in(0,1)\cup(1,\infty)$,
we obtain
\begin{align*}
 & \ln\left(\phi^{-\frac{\rho}{\tilde{w}}}\left((\phi-1)\tilde{p}+1\right)\right)=\frac{\rho}{\tilde{w}}\ln\left(\frac{\tilde{p}\tilde{w}}{\rho}\right)+(1-\frac{\rho}{\tilde{w}})\ln\left(\frac{1-\tilde{p}}{1-\frac{\rho}{\tilde{w}}}\right)\\
 & \quad<\frac{\rho}{\tilde{w}}\left(\frac{\tilde{p}\tilde{w}}{\rho}-1\right)+(1-\frac{\rho}{\tilde{w}})\left(\frac{1-\tilde{p}}{1-\frac{\rho}{\tilde{w}}}-1\right)=\tilde{p}-\frac{\rho}{\tilde{w}}+\frac{p}{\tilde{w}}-\tilde{p}\,=\,0,
\end{align*}
 which implies $\phi^{-\frac{\rho}{\tilde{w}}}\left((\phi-1)\tilde{p}+1\right)<1$.
Thus, by \eqref{eq:wpower}, $\phi^{-\rho}\left((\phi-1)\tilde{p}+1\right)^{\tilde{w}}<1$.
It then follows that 
\begin{align}
\sum_{k=1}^{\infty}\phi^{-\rho k}\left((\phi-1)\tilde{p}+1\right)^{\tilde{w}k}<\infty.\label{eq:difficultcasesum}
\end{align}
Finally, since $\sum_{k=1}^{\infty}\phi^{-\rho k}<\infty$, we obtain
$\sum_{k=1}^{\infty}\psi_{k}<\infty$ from \eqref{eq:psisum} and
\eqref{eq:difficultcasesum}. \end{proof}

\bibliographystyle{ieeetr}
\bibliography{references}

\begin{thebibliography}{10}

\bibitem{hespanha2007}
J.~P. Hespanha, P.~Naghshtabrizi, and Y.~Xu, ``A survey of recent results in
  networked control systems,'' {\em Proc. IEEE}, vol.~95, no.~1, pp.~138--172,
  2007.

\bibitem{gupta2009data}
V.~Gupta, A.~F. Dana, J.~P. Hespanha, R.~M. Murray, and B.~Hassibi, ``Data
  transmission over networks for estimation and control,'' {\em IEEE Trans.
  Automat. Control}, vol.~54, no.~8, pp.~1807--1819, 2009.

\bibitem{alur2011compositional}
R.~Alur, A.~D'Innocenzo, K.~H. Johansson, G.~J. Pappas, and G.~Weiss,
  ``Compositional modeling and analysis of multi-hop control networks,'' {\em
  IEEE Trans. Automat. Control}, vol.~56, no.~10, pp.~2345--2357, 2011.

\bibitem{smarra2012optimal}
F.~Smarra, A.~D'Innocenzo, and M.~D. Di~Benedetto, ``Optimal co-design of
  control, scheduling and routing in multi-hop control networks,'' in {\em
  Proc. IEEE Conf. Dec. Contr.}, pp.~1960--1965, 2012.

\bibitem{d2013faultTAC}
A.~D'Innocenzo, M.~D. Di~Benedetto, and E.~Serra, ``Fault tolerant control of
  multi-hop control networks,'' {\em IEEE Trans. Automat. Control}, vol.~58,
  no.~6, pp.~1377--1389, 2013.

\bibitem{d2016resilient}
A.~D'Innocenzo, F.~Smarra, and M.~D. Di~Benedetto, ``Resilient stabilization of
  multi-hop control networks subject to malicious attacks,'' {\em Automatica},
  vol.~71, pp.~1--9, 2016.

\bibitem{smarra2017}
F.~Smarra, M.~D.~D. Benedetto, and A.~D'Innocenzo, ``A sub-optimal method for
  routing redundancy design over lossy networks,'' in {\em Proc. 20th IFAC
  World Congress}, pp.~2604--2609, 2017.

\bibitem{khayam2003markov}
S.~A. Khayam and H.~Radha, ``Markov-based modeling of wireless local area
  networks,'' in {\em Proc. ACM MSWiM}, pp.~100--107, 2003.

\bibitem{floyd1993random}
S.~Floyd and V.~Jacobson, ``Random early detection gateways for congestion
  avoidance,'' {\em IEEE/ACM Trans. Networking}, vol.~1, no.~4, pp.~397--413,
  1993.

\bibitem{xu2005feasibility}
W.~Xu, W.~Trappe, Y.~Zhang, and T.~Wood, ``The feasibility of launching and
  detecting jamming attacks in wireless networks,'' in {\em Proc. 6th ACM Int.
  Symp. Mobile Ad Hoc Network. Comput.}, pp.~46--57, 2005.

\bibitem{pelechrinis2011}
K.~Pelechrinis, M.~Iliofotou, and S.~V. Krishnamurty, ``Denial of service
  attacks in wireless networks: The case of jammers,'' {\em IEEE Commun.
  Surveys Tuts.}, vol.~13, no.~2, pp.~245--257, 2011.

\bibitem{amin2009}
S.~Amin, A.~A. C{\'a}rdenas, and S.~S. Sastry, ``Safe and secure networked
  control systems under {D}enial-of-{S}ervice attacks,'' in {\em Proc. 12th
  HSCC}, pp.~31--45, 2009.

\bibitem{de2015inputtran}
C.~De~Persis and P.~Tesi, ``Input-to-state stabilizing control under
  denial-of-service,'' {\em IEEE Trans. Automat. Control}, vol.~60, no.~11,
  pp.~2930--2944, 2015.

\bibitem{liu2014stochastic}
S.~Liu, P.~X. Liu, and A.~El~Saddik, ``A stochastic game approach to the
  security issue of networked control systems under jamming attacks,'' {\em J.
  Franklin Inst.}, vol.~351, no.~9, pp.~4570--4583, 2014.

\bibitem{li2015jamming}
Y.~Li, L.~Shi, P.~Cheng, J.~Chen, and D.~E. Quevedo, ``Jamming attacks on
  remote state estimation in cyber-physical systems: A game-theoretic
  approach,'' {\em IEEE. Trans. Automat. Control}, vol.~60, no.~10,
  pp.~2831--2836, 2015.

\bibitem{IFACde2016networked}
C.~De~Persis and P.~Tesi, ``Networked control of nonlinear systems under
  {D}enial-of-{S}ervice,'' {\em Syst. Control Lett.}, vol.~96, pp.~124--131,
  2016.

\bibitem{awerbuch2008odsbr}
B.~Awerbuch, R.~Curtmola, D.~Holmer, C.~Nita-Rotaru, and H.~Rubens, ``{ODSBR}:
  An on-demand secure {B}yzantine resilient routing protocol for wireless ad
  hoc networks,'' {\em ACM Trans. Inf. and System Security}, vol.~10, no.~4,
  p.~6, 2008.

\bibitem{mahmoud2014security}
M.~M. E.~A. Mahmoud and X.~S. Shen, {\em Security for Multi-hop Wireless
  Networks}.
\newblock Springer, 2014.

\bibitem{jhaveri2012attacks}
R.~H. Jhaveri, S.~J. Patel, and D.~C. Jinwala, ``{DoS} attacks in mobile ad hoc
  networks: A survey,'' in {\em Proc. ACTT}, pp.~535--541, 2012.

\bibitem{mizrak2009detecting}
A.~T. Mizrak, S.~Savage, and K.~Marzullo, ``Detecting malicious packet
  losses,'' {\em IEEE Trans. Parallel Distrib. Syst.}, vol.~20, no.~2,
  pp.~191--206, 2009.

\bibitem{shu2015privacy}
T.~Shu and M.~Krunz, ``Privacy-preserving and truthful detection of packet
  dropping attacks in wireless ad hoc networks,'' {\em IEEE Trans. Mobile
  Computing}, vol.~14, no.~4, pp.~813--828, 2015.

\bibitem{cardenas2008research}
A.~A. C{\'a}rdenas, S.~Amin, and S.~Sastry, ``Research challenges for the
  security of control systems,'' in {\em Proc. Conf. Hot Topics in Security},
  2008.

\bibitem{wholejournal2015}
H.~Sandberg, S.~Amin, and K.~H. Johansson, ``Special issue on cyberphysical
  security in networked control systems,'' {\em IEEE Control Syst. Mag.},
  vol.~35, no.~1, 2015.

\bibitem{dana2006capacity}
A.~F. Dana, R.~Gowaikar, R.~Palanki, B.~Hassibi, and M.~Effros, ``Capacity of
  wireless erasure networks,'' {\em IEEE Trans. Inf. Theory}, vol.~52, no.~3,
  pp.~789--804, 2006.

\bibitem{quevedo2013state}
D.~E. Quevedo, A.~Ahl{\'e}n, and K.~H. Johansson, ``State estimation over
  sensor networks with correlated wireless fading channels,'' {\em IEEE Trans.
  Automat. Control}, vol.~58, no.~3, pp.~581--593, 2013.

\bibitem{sinopoli2004kalman}
B.~Sinopoli, L.~Schenato, M.~Franceschetti, K.~Poolla, M.~I. Jordan, and S.~S.
  Sastry, ``Kalman filtering with intermittent observations,'' {\em IEEE Trans.
  Automat. Control}, vol.~49, no.~9, pp.~1453--1464, 2004.

\bibitem{ishii2009}
H.~Ishii, ``Limitations in remote stabilization over unreliable channels
  without acknowledgements,'' {\em Automatica}, vol.~45, no.~10,
  pp.~2278--2285, 2009.

\bibitem{Lemmon:2011:ASS:1967701.1967744}
M.~Lemmon and X.~S. Hu, ``Almost sure stability of networked control systems
  under exponentially bounded bursts of dropouts,'' in {\em Proc. 14th HSCC},
  pp.~301--310, 2011.

\bibitem{gupta2009}
V.~Gupta, N.~C. Martins, and J.~S. Baras, ``Optimal output feedback control
  using two remote sensors over erasure channels,'' {\em IEEE Trans. Automat.
  Control}, vol.~54, no.~7, pp.~1463--1476, 2009.

\bibitem{okano2014}
K.~Okano and H.~Ishii, ``Stabilization of uncertain systems with finite data
  rates and {M}arkovian packet losses,'' {\em IEEE Trans. Control Netw. Syst.},
  vol.~1, no.~4, pp.~298--307, 2014.

\bibitem{ahmetcdc2015}
A.~Cetinkaya, H.~Ishii, and T.~Hayakawa, ``Event-triggered control over
  unreliable networks subject to jamming attacks,'' in {\em Proc. IEEE Conf.
  Dec. Contr.}, pp.~4818--4823, 2015.

\bibitem{cetinkaya2015output}
A.~Cetinkaya, H.~Ishii, and T.~Hayakawa, ``Event-triggered output feedback
  control resilient against jamming attacks and random packet losses,'' in {\em
  Proc. IFAC NecSys}, pp.~270--275, 2015.

\bibitem{cetinkaya-tac}
A.~Cetinkaya, H.~Ishii, and T.~Hayakawa, ``Networked control under random and
  malicious packet losses,'' {\em IEEE Trans. Automat. Control}, vol.~62,
  no.~5, pp.~2434--2449, 2017.

\bibitem{lun2016cdc}
Y.~Z. Lun, A.~D'Innocenzo, and D.~Di~Beneditto, M., ``On stability of
  time-inhomogeneous markov jump linear systems,'' in {\em Proc. IEEE Conf.
  Dec. Contr.}, pp.~5527--5532, 2016.

\bibitem{lun2017ifac}
Y.~Z. Lun, A.~D'Innocenzo, and D.~Di~Beneditto, M., ``Robust {LQR} for
  time-inhomogeneous {Markov} jump switched linear systems,'' in {\em Proc.
  24th IFAC World Congress}, pp.~2199--2204, 2017.

\bibitem{lun2017cdc}
Y.~Z. Lun, A.~D'Innocenzo, and D.~Di~Beneditto, M., ``Optimal robust control
  and a separation principle for polytopic time-inhomogeneous {Markov} jump
  linear systems,'' in {\em Proc. IEEE Conf. Dec. Contr.}, pp.~6525--6530,
  2017.

\bibitem{cetinkayatactoappear2018}
A.~Cetinkaya, H.~Ishii, and T.~Hayakawa, ``Analysis of stochastic switched
  systems with application to networked control under jamming attacks,'' {\em
  IEEE Trans. Automat. Control}, to appear, 2018.

\bibitem{ahmetnecsys2016}
A.~Cetinkaya, H.~Ishii, and T.~Hayakawa, ``Random and malicious packet
  transmission failures on multi-hop channels in networked control systems,''
  in {\em Proc. IFAC NecSys}, pp.~49--54, 2016.

\bibitem{quevedo2011input}
D.~E. Quevedo and D.~Nesic, ``Input-to-state stability of packetized predictive
  control over unreliable networks affected by packet-dropouts,'' {\em IEEE
  Trans. Automat. Control}, vol.~56, no.~2, pp.~370--375, 2011.

\bibitem{anderson1999wiener}
B.~D.~O. Anderson, ``From {Wiener} to hidden {Markov} models,'' {\em IEEE
  Control Syst. Mag.}, vol.~19, no.~3, pp.~41--51, 1999.

\bibitem{cappe2009inference}
O.~Capp{\'e}, E.~Moulines, and T.~Ryd{\'e}n, {\em Inference in Hidden Markov
  Models}.
\newblock Springer, 2009.

\bibitem{vidyasagar2014hidden}
M.~Vidyasagar, {\em Hidden Markov Processes: Theory and Applications to
  Biology}.
\newblock Princeton University Press, 2014.

\bibitem{billingsley1986}
P.~Billingsley, {\em Probability and Measure}.
\newblock Wiley, 2012.

\bibitem{sadeghi2008}
P.~Sadeghi, R.~A. Kennedy, P.~B. Rapajic, and R.~Shams, ``Finite-state {M}arkov
  modeling of fading channels,'' {\em IEEE Signal Process. Mag.}, vol.~25,
  no.~5, pp.~57--80, 2008.

\bibitem{ellis2014}
M.~Ellis, D.~P. Pezaros, T.~Kypraios, and C.~Perkins, ``A two-level {M}arkov
  model for packet loss in {UDP}/{IP}-based real-time video applications
  targeting residential users,'' {\em Comput. Netw.}, vol.~70, no.~9,
  pp.~384--399, 2014.

\bibitem{ahmetsiam2018}
A.~Cetinkaya, H.~Ishii, and T.~Hayakawa, ``The effect of time-varying jamming
  interference on networked stabilization,'' {\em SIAM J. Control Optim.},
  vol.~56, pp.~2398--2435, 2018.

\bibitem{medhi2010network}
D.~Medhi and K.~Ramasamy, {\em Network Routing: Algorithms, Protocols, and
  Architectures}.
\newblock Morgan Kaufmann, 2007.

\bibitem{madhow2008fundamentals}
U.~Madhow, {\em Fundamentals of Digital Communication}.
\newblock Cambridge University Press, 2008.

\end{thebibliography}

\end{document}